\theoremstyle{plain}
\newtheorem{theorem}{Theorem}
\newaliascnt{proposition}{theorem}
\newaliascnt{lemma}{theorem}
\newaliascnt{corollary}{theorem}
\newtheorem{corollary}[corollary]{Corollary}
\theoremstyle{definition}
\numberwithin{equation}{section}
\definecolor{cyan}{RGB}{0,255,255}
\definecolor{darkgreen}{RGB}{0,100,0}
\definecolor{darkgoldenrod}{RGB}{184,134,11}
\definecolor{deepskyblue}{RGB}{0,191,255}
\newcommand{\E}{\text{E}}
\newcommand{\bI}{\mathbf{I}}
\newcommand{\bM}{\mathbf{M}}
\newcommand{\bX}{\mathbf{X}}
\newcommand{\bY}{\mathbf{Y}}
\newcommand{\bZ}{\mathbf{Z}}
\newcommand{\bA}{\mathbf{A}}
\newcommand{\bB}{\mathbf{B}}
\newcommand{\bD}{\mathbf{D}}
\newcommand{\bP}{\mathbf{P}}
\newcommand{\bS}{\mathbf{S}}
\newcommand{\bU}{\mathbf{U}}
\newcommand{\bW}{\mathbf{W}}
\newcommand{\bT}{\mathbf{T}}
\newcommand{\bx}{\mathbf{x}}
\newcommand{\bs}{\mathbf{s}}
\newcommand{\bu}{\mathbf{u}}
\newcommand{\ba}{\mathbf{a}}
\newcommand{\bb}{\mathbf{b}}
\newcommand{\bv}{\mathbf{v}}
\newcommand{\bt}{\mathbf{t}}
\newcommand{\bmu}{\boldsymbol{\mu}}
\newcommand{\balpha}{\boldsymbol{\alpha}}
\newcommand{\bSigma}{\boldsymbol{\Sigma}}
\newcommand{\bGamma}{\boldsymbol{\Gamma}}
\newcommand{\bOmega}{\boldsymbol{\Omega}}
\newcommand{\blambda}{\boldsymbol{\lambda}}
\newcommand{\bxi}{\boldsymbol{\xi}}
\newcommand{\0}{\mathbf{0}}
\newcommand{\1}{\mathbf{1}}
\begin{document}

\begin{frontmatter}

\title{Sub-dimensional Mardia measures of multivariate skewness and kurtosis}

\author[mymainaddress1]{Joydeep Chowdhury\corref{mycorrespondingauthor}}
\author[mymainaddress2]{Subhajit Dutta}
\author[mymainaddress3]{Reinaldo B. Arellano-Valle}
\author[mymainaddress1]{Marc~G.~Genton}
\address[mymainaddress1]{Statistics Program, King Abdullah University of Science and Technology, Thuwal, Saudi Arabia}
\address[mymainaddress2]{Department of Mathematics and Statistics, Indian Institute of Technology, Kanpur 208016, India}
\address[mymainaddress3]{Department of Statistics, Pontificia Universida Cat\'olica de Chile, Santiago 22, Chile}

\cortext[mycorrespondingauthor]{Corresponding author}

\begin{abstract}
The Mardia measures of multivariate skewness and kurtosis summarize the respective characteristics of a multivariate distribution with two numbers. However, these measures do not reflect the sub-dimensional features of the distribution. Consequently, testing procedures based on these measures may fail to detect skewness or kurtosis present in a sub-dimension of the multivariate distribution. We introduce sub-dimensional Mardia measures of multivariate skewness and kurtosis, and investigate the information they convey about all sub-dimensional distributions of some symmetric and skewed families of multivariate distributions. The maxima of the sub-dimensional Mardia measures of multivariate skewness and kurtosis are considered, as these reflect the maximum skewness and kurtosis present in the distribution, and also allow us to identify the sub-dimension bearing the highest skewness and kurtosis. Asymptotic distributions of the vectors of sub-dimensional Mardia measures of multivariate skewness and kurtosis are derived, based on which testing procedures for the presence of skewness and of deviation from Gaussian kurtosis are developed. The performances of these tests are compared with some existing tests in the literature on simulated and real datasets.
\end{abstract}

\begin{keyword}
asymptotic distribution\sep measures of multivariate skewness and kurtosis\sep multivariate normality test \sep skew-normal distribution \sep skew-$t$ distribution \sep symmetric distribution
\MSC[2020] Primary 62H15 \sep
Secondary 62H12
\end{keyword}

\end{frontmatter}

\section{Introduction}\label{sec:intro}
Consider a $ p $-variate random vector $ \bX = ( X_1, \ldots, X_p )^\top $ from a multivariate distribution with mean vector $ \bmu \in \mathbb{R}^p $ and $ p \times p $ positive definite covariance matrix $ \bSigma $. Mardia~\cite{mardia1970measures} defined the measures of multivariate skewness and kurtosis:
\begin{align}
& \beta_{1,p} = \text{E}\left[ \{ ( \bX - \bmu )^\top \bSigma^{-1} ( \bY - \bmu ) \}^3 \right]\in \Bbb{R}_+, \label{beta1p} \\
& \beta_{2,p} = \text{E}\left[ \{ ( \bX - \bmu )^\top \bSigma^{-1} ( \bX - \bmu ) \}^2 \right]\in \Bbb{R}_+, \label{beta2p}
\end{align}
respectively,
where $ \bX $ and $ \bY $ are independent and identically distributed. These measures are invariant under affine transformations and each provide a single number to summarize the skewness and the kurtosis of that $ p $-dimensional distribution.
Their sample counterparts have known asymptotic distribution and can be used to test for normality \cite{mardia1970measures,mardia1974applications}.
For the multivariate normal distribution, it is well known that $ \beta_{1,p} = 0 $ and $ \beta_{2,p} = p (p + 2) $.

One drawback of the Mardia measures of multivariate skewness and kurtosis is that they summarize the information about skewness and kurtosis too much. Sub-dimensional distributions may exhibit evidence of skewness or kurtosis, which may not be reflected in the overall Mardia measures of multivariate skewness or kurtosis.
For example, consider the celebrated Fisher's iris data for the species `iris setosa' \cite{fisher1936use}.
In \autoref{tab1}, the p-values of the Mardia test of skewness are presented for four sub-dimensions along with the whole dataset, that is, the sub-dimension $(1, 2, 3, 4)$. Here, the variables 1, 2, 3 and 4 correspond to sepal length, sepal width, petal length and petal width, respectively.
\begin{table}[t!]
\centering
\caption{p-values of the Mardia skewness test in some sub-dimensions of Fisher's iris setosa data.}
\label{tab1}
\vspace*{10pt}
\begin{tabular} {l c c c c c}  
\hline 
Sub-dimensions	&(4)	&(1, 4)	&(2, 4)	&(3, 4)	&(1, 2, 3, 4)	\\
p-values		&0.001	&0.012	&0.019	&0.018	&0.236			\\\hline
\end{tabular}
\end{table}
One can see that in the complete dataset, there is no significant evidence of skewness, while evidence of skewness in the reported sub-dimensions of dimension one (consisting of the fourth variable, namely, petal width) and of dimension two is quite strong at the 5\% level as reflected by the corresponding p-values.
This presence of skewness in the distribution of petal width of the species `iris setosa' was also observed in \cite{small1980marginal}.
Therefore, this motivates the investigation of the Mardia measures of multivariate skewness and kurtosis on sub-dimensional marginals.

Let $ \bX_{qi} $ denote a subvector of dimension $ q $ for $ 1 \leq q \leq p $ from the random vector $ \bX $, and let $ \bmu_{qi} $ and $ \bSigma_{qi} $ be the corresponding entries of $ \bmu $ and $ \bSigma $ for $ i \in \{ 1, \ldots, Q_q \} $ with $ Q_q = {p \choose q} $. We define the following sub-dimensional Mardia measures of multivariate skewness and kurtosis:
\begin{align}
& \beta_{1,q,i} = \E\left[ \{ ( \bX_{qi} - \bmu_{qi} )^\top \bSigma_{qi}^{-1} ( \bY_{qi} - \bmu_{qi} ) \}^3 \right]\in \Bbb{R}_+, \label{beta1qi} \\
& \beta_{2,q,i} = \E\left[ \{ ( \bX_{qi}-\bmu_{qi} )^\top \bSigma_{qi}^{-1} ( \bX_{qi} - \bmu_{qi} ) \}^2 \right]\in \Bbb{R}_+, \label{beta2qi}
\end{align}
for $ i \in \{1, \ldots, Q_q\} $, where $ \bX_{qi} $ and $ \bY_{qi} $ are independent and identically distributed. 
When $q=p$,  we get back $\beta_{1,p,1}\equiv \beta_{1,p}$ and $\beta_{2,p,1}\equiv \beta_{2,p}$. For $q \in \{1, \ldots, p\}$, we collect these measures in the following vectors:
\begin{align}
& \bM_{1,q} = \left( \beta_{1,q,1}, \ldots, \beta_{1,q,Q_q} \right)^\top \in \Bbb{R}_+^{Q_q}, \label{b1q} \\
& \bM_{2,q} = \left( \beta_{2,q,1}, \ldots, \beta_{2,q,Q_q} \right)^\top \in \Bbb{R}^{Q_q}_+, \label{b2q}
\end{align}
each of dimension $ Q_q = {p \choose q} $. We call \eqref{b1q} and \eqref{b2q} the $ q $-th vectors of sub-dimensional Mardia measures of multivariate skewness and kurtosis.
For the multivariate normal distribution, ${\cal N}_p(\bmu,\bSigma)$, we have:
\begin{align}
\bM_{1,q}^{\cal N} = \0_{Q_q} \quad \text{and} \quad \bM_{2,q}^{\cal N} = q (q + 2) \1_{Q_q}, \label{MardiaNormal}
\end{align}
for all $ 1 \leq q \leq p $, where $ \0_{Q_q} $ and $ \1_{Q_q} $ are $ Q_q $-dimensional vectors of zeros and of ones, respectively.

We further define:
\begin{align}
& \bM_{1}^* = \left(\bM_{1,1}^\top,\bM_{1,2}^\top, \ldots,\bM_{1,p}^\top \right)^\top=\left(\beta_{1,1,1},\ldots,\beta_{1,p,1} \right)^\top\in \Bbb{R}_+^{2^p-1}, \label{b1p} \\
& \bM_{2}^* = \left( \bM_{2,1}^\top,\bM_{2,2}^\top, \ldots,\bM_{2,p}^\top \right)^\top=\left(\beta_{2,1,1},\ldots,\beta_{2,p,1} \right)^\top\in \Bbb{R}^{2^p-1}_+. \label{b2p}
\end{align}
Here, $\bM_{1}^*$ and $\bM_{2}^*$ collect all the sub-dimensional Mardia measures of multivariate skewness and kurtosis.

\begin{figure}[t!]
\centering
\includegraphics[width=0.6\linewidth,height=7cm]{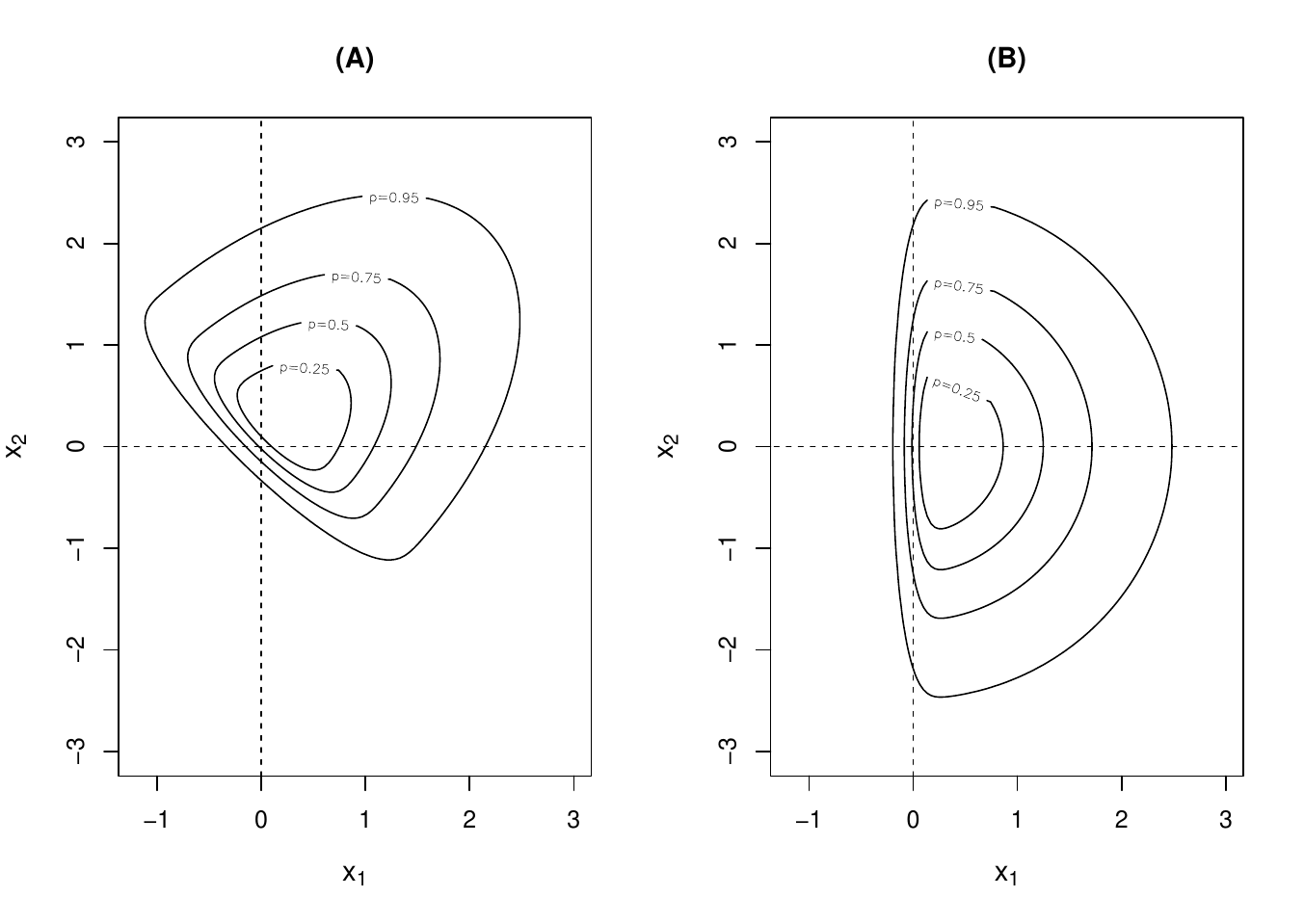}
\caption{(A) Contour plot of $ {\cal SN}_2( \mathbf{0}, \boldsymbol{\Omega}, \boldsymbol{\alpha} ) $, which has marginal skewness $ \beta_{1, 1, 1} = \beta_{1, 1, 2} = 0.130 $. (B) Contour plot of $ {\cal SN}_2( \mathbf{0}, \mathbf{I}_2, \boldsymbol{\alpha}^* ) $, which has marginal skewness $ \beta_{1, 1, 1} = 0.889 $ and $ \beta_{1, 1, 2} = 0 $. Both the skew-normal distributions have the same Mardia measure of multivariate skewness $ \beta_{1, 2} = 0.889 $.}
\label{fig:introplot}
\end{figure}
Due to affine invariance, the Mardia measures of multivariate skewness and kurtosis cannot detect any difference between distributions which are affine transformations of one another, but whose marginal skewness and/or kurtosis values are very different. In \autoref{fig:introplot}, we depict such an example for the skew-normal distribution \citep{AzzDal1996} with $p=2$. Let $ \boldsymbol{\Omega} = ( \omega_{ij} ) $, where $ \omega_{ij} = 0.5+0.5 \mathbb{I}(i = j)$ and $ \boldsymbol{\alpha} = (5, 5)^\top $. Consider the skew-normal distribution $ {\cal SN}_2( \mathbf{0}, \boldsymbol{\Omega}, \boldsymbol{\alpha} ) $ and its `canonical form' $ {\cal SN}_2( \mathbf{0}, \mathbf{I}_2, \boldsymbol{\alpha}^* ) $; see \cite[Proposition 4]{azzalini1999statistical} and \cite[subsection 5.1.8]{azzalini2013skew}. The contour plots of the densities of $ {\cal SN}_2( \mathbf{0}, \boldsymbol{\Omega}, \boldsymbol{\alpha} ) $ and $ {\cal SN}_2( \mathbf{0}, \mathbf{I}_2, \boldsymbol{\alpha}^* ) $ are presented in \autoref{fig:introplot}. From Propositions 3 and 4 in \cite{azzalini1999statistical}, it follows that the skew-normal distributions $ {\cal SN}_2( \mathbf{0}, \boldsymbol{\Omega}, \boldsymbol{\alpha} ) $ and $ {\cal SN}_2( \mathbf{0}, \mathbf{I}_2, \boldsymbol{\alpha}^* ) $ have the same values of Mardia measure of multivariate skewness, $ \beta_{1, 2}= 0.889$. However, the distribution $ {\cal SN}_2( \mathbf{0}, \mathbf{I}_2, \boldsymbol{\alpha}^* ) $ has all its skewness in its first component $ X_1 $ with the marginal distribution of $ X_2 $ being symmetric, while both the components of the distribution $ {\cal SN}_2( \mathbf{0}, \boldsymbol{\Omega}, \boldsymbol{\alpha} ) $ have the same marginal skewness. This further supports the study of Mardia measures of multivariate skewness and kurtosis on sub-dimensional marginals.

There are certain non-Gaussian distributions for which all the lower-dimensional marginals are Gaussian. An example of such a distribution, which is a trivariate generalized skew-normal distribution \citep{GenLop2005} with all the univariate and bivariate marginal distributions being standard normal, is given in \citep{loperfido2018skewness}.
Here, any procedure based on $\bM_{1,q}$ or $\bM_{2,q}$, defined in \eqref{b1q} and \eqref{b2q}, for $ q < 3 $ would not be able to detect the presence of non-Gaussianity in the distribution. However, a procedure based on $\bM_{1}^*$ defined in \eqref{b1p} would be able to detect non-Gaussianity, since $\bM_{1}^*$ and $\bM_{2}^*$ incorporate the Mardia measures on all the sub-dimensions including the whole dimension. This further motivates developing procedures based on $\bM_{1}^*$ and $\bM_{2}^*$, which are defined earlier in \eqref{b1p} and \eqref{b2p}.

In this paper, we start by studying the forms of $\bM_{1,q}$ and $\bM_{2,q}$ defined in \eqref{b1q} and \eqref{b2q} for some parametric classes of non-Gaussian distributions, including, for example, the multivariate Student's $t$ distribution and the multivariate skew-normal and skew-$t$ distributions. Then, we propose tests of normality for each dimension $q$ based on the maximum entry of $\bM_{1,q}$ and of $\bM_{2,q}$, as well as global tests of multivariate normality based on the  maximum entry of $\bM_{1}^*$ and of $\bM_{2}^*$ in \eqref{b1p} and \eqref{b2p}. One important advantage of our approach is the ability of the test, when it rejects multivariate normality, to identify the dimension $q$ and the associated sub-dimensions for which the rejection occurs.

The definitions of the Mardia measures require that the population covariance matrix $ \bSigma $ is non-singular. On the other hand, the estimation of the Mardia measures requires the sample covariance matrix to be non-singular. Estimation based on $g$-inverses is considered in \cite{mardia1975algorithm}, but the null distributions of the Mardia tests get altered when the sample covariance matrix is singular. For these reasons, the Mardia measures and associated tests are not applicable in a high-dimensional setup, particularly when $ n \le p $. While the quantities $\bM_{1}^*$ and of $\bM_{2}^*$ cannot be used because they include the Mardia measures, the quantities $\bM_{1,q}$ and $\bM_{2,q}$ can be considered for small values of $ q $ even when $ n \le p $. If it is known that the sub-dimension supporting skewness or kurtosis has a small dimension $ q $ considerably less than the sample size $ n $, then the quantities $\bM_{1,q}$ and $\bM_{2,q}$ can be employed to investigate the presence of skewness or kurtosis. Some possible ways in which the methodology can be extended to the general high-dimensional setup are discussed in \autoref{sec:conclusion}.

This paper is organized as follows. The vectors of sub-dimensional Mardia measures of multivariate skewness and kurtosis in the case of some symmetric distributions are investigated in \autoref{sec:symmetric}, whereas in the case of some skewed distributions are considered in \autoref{sec:skewed}. The invariance of these sub-dimensional measures under location-scale transformations is studied in \autoref{sec:invariance}. The new hypothesis tests are introduced in \autoref{sec:estimation} and their asymptotic distributions are established in \autoref{sec:asym}. The results of a Monte Carlo simulation study of the empirical sizes and powers of the new tests, as well as of the sub-dimensional detection, are reported in \autoref{sec:simulation}. Sub-dimensional data analyses of Fisher's iris data and of wind speed data near a wind farm in Saudi Arabia are presented in \autoref{sec:realdata}. The paper ends with a discussion in \autoref{sec:conclusion}. Some additional numerical results are provided in the supplementary material.

\section{Sub-Dimensional Mardia Measures for Some Symmetric Distributions}\label{sec:symmetric}

First, let $\bX\buildrel d\over=\bS\circ\bT\in\mathbb{R}^p$ be a standardized symmetric random vector, in which $\circ$ represents the componentwise (Hadamard) product, $\bS=\mbox{sign}(\bX)=(\mbox{sign}(X_1),\ldots,\mbox{sign}(X_p))^\top\sim {\cal U}\{-1,+1\}^p$ (discrete uniform), and $\bS$ is independent of $\bT=|\bX|=(|X_1|,\ldots,|X_p|)^\top$. Let $\bX_\pi=\bS_\pi\circ\bX=\bS_\pi\circ\bS\circ\bT$, where $\bS_\pi$ is formed by a permutation $\pi=(\pi_1,\ldots,\pi_p)$ of the components of $\bS$. Note that
$\bs\circ\bt=\bD(\bs)\bt=\bD(\bt)\bs$, where $\bD(\ba)=\text{diag}(a_1,\ldots,a_p)$. Suppose that $\bX$ has finite fourth moment with $\E(\bX)={\bf 0}_p$ and $\text{Var}(\bX)=\bI_p$. This implies $\E(\|\bX\|^2)=p$, where $\|\cdot\|$ denotes the Euclidean norm. Also, let $\bX_{\pi_*}=\bS_{\pi_*}\circ\bX$, where $\pi_*$ represents a permutation where no component remains in its original position. For example, for $p=3$, $\pi\in\{(1,2,3),(2,1,3),(2,3,1),(3,2,1),(3,1,2),(1,3,2)\}$ and $\pi_*\in\{(2,3,1), (3,1,2)\}$.  Since $\bS_{\pi_*}\circ\bS$ has mean vector ${\bf 0}_p$ and covariance matrix $\bI_p$,  we then have $\E(\bX_{\pi_*})=\E(\bS_{\pi_*}\circ\bS)\circ \E(|\bX|)={\bf 0}_p$ and $\text{Var}(\bX_{\pi_*})=\bI_p$. Also, obviously $\|\bX_{\pi_*}\|\buildrel d\over=\|\bX_\pi\|\buildrel d\over=\|\bX\|$, hence  $\E(\|\bX_{\pi_*}\|^k)=\E(\|\bX_{\pi}\|^k)=\E(\|\bX\|^k)$, $k\ge 1$. So, for the random vector $\bX_{\pi_*}$, we have
$\beta_{1,p}=0$ and $\beta_{2,p}=\E(\|\bX\|^4)$.
In particular:
\begin{enumerate}
\item If $\bX\sim {\cal N}_p({\bf 0}_p,\bI_p)$, with probability density function (pdf) given by $f(\bx)=(2\pi)^{-p/2}\exp(-\|\bx\|^2)$, $\bx\in\mathbb{R}^p$, then $\|\bX_{\pi_*}\|^2\buildrel d\over=\|\bX\|^2\sim\chi_p^2$ and $\beta_{2,p}=p(p+2)$. Therefore, for any $q$-dimensional subvector $\bX_q$ of $\bX$, we have $\bX_q\sim {\cal N}_q({\bf 0}_q,\bI_q)$ and so $\beta_{1,q}=\beta_{1,p}=0$ and $\beta_{2,q}=q(q+2)$.
Therefore, the $q$-th vectors of sub-dimensional Mardia measures of multivariate skewness and kurtosis are as in \eqref{MardiaNormal}.
\item If $\bX$ is spherically distributed with pdf  $f(\bx)=h(\|\bx\|^2)$, $\bx\in\mathbb{R}^p$, for some density generator function $h$, i.e., $h(u)>0$ for $u>0$ and $\int_0^\infty u^{p/2-1}h(u)\mbox{d}u=\Gamma(p/2)/\pi^{p/2}$, and $\text{Var}(\bX)=\sigma^2\bI_p$, then $\bX \buildrel d\over=R\bU^{(p)}$, where $R\buildrel d\over=\|\bX\|$, with  $\E(R^2)=p\sigma^2$, and $R$ is independent of $\bU^{(p)}\buildrel d\over=\bX/\|\bX\|$, the uniform vector on the $p$-dimensional unit sphere. Therefore, $\beta_{1,p}=0$ by symmetry and
$$\beta_{2,p}=\frac{p^2\E(R^4)}{\{\E(R^2)\}^2}=p(p+2)(\kappa+1),$$
where $\kappa=\gamma_2=\{\E(X_1^4)-3\}/3$ is the excess of kurtosis in the corresponding spherical univariate distribution, which can be computed from the relation:
$$\kappa+1=\frac{\beta_{2,p}}{p(p+2)}=\frac{p}{p+2}\frac{\E(R^4)}{\{\E(R^2)\}^2},$$
with the assumption $\E(R^4)<\infty$.
Moreover, any $q$-dimensional subvector $\bX_q$ of $\bX$, is also spherically distributed with stochastic representation $\bX_q=R_q\bU^{(q)}$, where $R_q=\sqrt{B_q}R$ with Beta $B_q\sim {\cal B}(q/2,(p-q)/2)$ and $R$, $B_q$ and $\bU^{(q)}$ are mutually independent. Using the fact that $\E(B_q^s)=\{\Gamma(p/2)\Gamma(q/2+s)\}/\{\Gamma(q/2)\Gamma(p/2+s)\}$, we find  again that  $\beta_{1,q}=0$ and
$$\beta_{2,q}=\frac{q^2\E(B_q^2)\E(R^4)}{\{\E(B_q)\E(R^2)\}^2}=\frac{q(q+2)}{p(p+2)}\beta_{2,p}=q(q+2)(\kappa+1).$$
Hence, for spherical distributions:
\begin{equation}
\bM_{1,q}^{\cal SPH}=\0_{Q_q}\quad \mbox{and} \quad \bM_{2,q}^{\cal SPH}=q(q+2)(\kappa+1)\1_{Q_q}, \label{MardiaSpherical}
\end{equation}
for all $1\leq q \leq p.$
For example:\\
a) If $\bX\sim {\cal N}_p({\bf 0}_p,\bI_p)$, then  $R^2\sim\chi_p^2$, with $\E(R^2)=p$, $\E(R^4)=p(p+2)$ and so
$$\kappa=\frac{p}{p+2}\frac{\E(R^4)}{\{\E(R^2)\}^2}-1=0.$$
b) If $\bX\sim t_p({\bf 0}_p,\bI_p,\nu)$, where  $t_p(\bxi,\bOmega,\nu)$ denotes the multivariate Student's $t$ distribution with location vector $\bxi$, dispersion matrix $\bOmega$, $\nu$ degrees of freedom with $\nu>4$ and pdf $c_p(\nu)|\bOmega|^{-p/2}\{1+(\bx-\bxi)^\top\bOmega^{-1}(\bx-\bxi)/\nu\}^{-(\nu+p)/2}$, $\bx\in\mathbb{R}^p$, where $c_p(\nu)=\Gamma\{(\nu+p)/2\}/\{\Gamma(\nu/2)(\nu\pi)^{p/2}\}$,  then $R^2\sim pF_{p,\nu}$ with
$$\E(R^{2k})=p^k\left(\frac{\nu}{p}\right)^k\frac{\Gamma(p/2+k)\Gamma(\nu/2-k)}{\Gamma(p/2)\Gamma(\nu/2)},~~\nu\ge2k,$$
thus $\kappa=2/(\nu-4)$.\\
c) If $\bX\sim {\cal EP}_p({\bf 0}_p,\bI_p,\nu)$, where  ${\cal EP}_p(\bxi,\bOmega,\nu)$ denotes the multivariate exponential power distribution,
with location vector $\bxi$, dispersion matrix $\bOmega$ and kurtosis parameter $\nu>0$, and pdf $c_p(\nu)|\bOmega|^{-p/2}\exp[-\{(\bx-\bxi)^\top\bOmega^{-1}(\bx-\bxi)\}^\nu/2\}]$, $\bx\in\mathbb{R}^p$, where $c_p(\nu)=\{p\Gamma(p/2)\}/\{\Gamma(p/2\nu +1)2^{p/2\nu+1}\pi^{p/2}\}$,
then $R^2\buildrel d\over=V^{1/\nu}$ with Gamma $V\sim {\cal G}(p/2\nu,1/2)$. Thus, we find that
$$\E(R^{2k})=\frac{2^{k/\nu}\Gamma\{(p+2k)/(2\nu)\}}{\Gamma\{p/(2\nu)\}},~~k\ge1,$$
so
$$\kappa=\frac{p}{p+2}\frac{\Gamma\{(p+4)/(2\nu)\}\Gamma\{p/(2\nu)\}}{[\Gamma\{(p+2)/(2\nu)\}]^2}-1.$$
In this case, note that $\kappa=\kappa^{(p)}$ depends on the dimension $p$. This is due to the fact that each marginal distribution of a $p$-variate exponential power distribution depends on $p$. In particular, the univariate marginal distributions are not equivalent to the univariate one obtained by putting $p=1$. Another relevant characteristic of this distribution is that it allows both lighter $(\nu>1)$ and heavier $(0<\nu\le 1)$ tails  than the normal distribution.
\item The class of random vectors $\bX_{\pi-\pi_*}$ is also very interesting because $\E(\bX_{\pi-\pi_*})\ne{\bf 0}_p$, but some of its marginal distributions have zero mean vector. For instance, for $p=3$, a random vector in this class is given by $(S_{11}|X_1|,S_{23}|X_2|,S_{23}|X_3|)^\top,$ where $S_{ij}=S_iS_j$, in which the first component has mean $\E(|X_1|)>0$, while the remaining components have zero mean.
\end{enumerate}

\section{Sub-Dimensional Mardia Measures for Some Skewed Distributions}\label{sec:skewed}

Consider a generalized skew-normal \citep{GenLop2005} random vector $\bX\sim {\cal GSN}_p({\bf 0}_p,\bOmega,\blambda)$ with probability density function $f_{\bX}(\bx)=2\phi_p(\bx;\bOmega)G(\blambda^\top\bx)$, $\bx\in\mathbb{R}^p$ and some absolutely continuous symmetric cumulative distribution function $G$ defined on $\mathbb{R}$. Since $\beta_{1,p}$ and $\beta_{2,p}$ are invariant with respect to non-singular linear  transformations, they can be computed by using the canonical representations  $\bZ=\bGamma\bX$ and $\bZ'=\bGamma\bY$ of $\bX$ and $\bY$, respectively, in which  $\bGamma$ is a $p\times p$ orthonormal matrix such that $\bGamma\bOmega\bGamma^\top=\bI_p$ and $\bGamma\blambda=\lambda_*{\bf e}_{1:p}$, where $\lambda_*=\sqrt{\blambda^\top\bOmega\blambda}$ and ${\bf e}_1$ is the first $p$-dimensional unit vector. Then,
$Z_1\sim {\cal GSN}_1(0,1,\lambda_*)$ which has mean $\mu_*$ and variance $1-\mu_*^2$, and $Z_1$ is independent of $\bZ_{2}=(Z_2,\ldots,Z_p)^\top\sim {\cal N}_{p-1}({\bf 0}_{p-1},\bI_{p-1})$, where
$$\mu_*=2\lambda_* \E\{G'(\lambda_* Y)\}=2\frac{d \E\{G(\lambda_* Y)\}}{d\lambda_*},$$ with $Y\sim {\cal HN}_1(0,1)$ (half-normal). The same holds for $Z_1'$ and $\bZ_2'=(Z_2',\ldots,Z_p')^\top$. Note that $\bmu_\bZ=\mu_*{\bf e}_1$ and $\bSigma_{\bZ}=\bI_p-\mu_*^2{\bf e}_1{\bf e}_1^\top.$ Since
$(\bX-\bmu_\bX)^\top\bSigma_\bX^{-1}(\bX'-\bmu_\bX)=(\bZ-\bmu_\bZ)^\top\bSigma_\bZ^{-1}(\bZ'-\bmu_\bZ)=Z_{01}Z_{01}'+\bZ_2^\top\bZ_2',$
where $Z_{01}=(Z_1-\mu_*)/\sqrt{1-\mu_*^2}$ and $Z_{01}'=(Z_1'-\mu_*)/\sqrt{1-\mu_*^2}$ are standardized and independent random variables:
\begin{align*}
\beta_{1,p}&=\E\{(Z_{01}Z_{01}'+\bZ_2^\top\bZ_2')^3\}
=\E\{(Z_{01}Z_{01}')^3\}+3\E\{(Z_{01}Z_{01}')^2(\bZ_2^\top\bZ_2')\}+3\E\{(Z_{01}Z_{01}')(\bZ_2^\top\bZ_2')^2\}+\E\{(\bZ_2^\top\bZ_2')^3\}
\\
&=\E\{(Z_{01}Z_{01}')^3\}
=\gamma_{1,1}^2,
\end{align*}
where $\gamma_{1,1}\equiv\gamma_{1}=\sqrt{\beta_{1,1}}=\E(Z_{01}^3)$. Similarly, we have $(\bX-\bmu_\bX)^\top\bSigma_\bX^{-1}(\bX-\bmu_\bX)
=Z_{01}^2+\bZ_2^\top\bZ_2$ and thus:
\begin{align*}
\beta_{2,p}&=\E\{(Z_{01}^2+\bZ_2^\top\bZ_2)^2\}
=\E(Z_{01}^4)+2\E(Z_{01}^2)\E\{(\bZ_2^\top\bZ_2)\}+\E\{(\bZ_2^\top\bZ_2)^2\}
=\E(Z_{01}^4)
=\gamma_{2,1}+p(p+2),
\end{align*}
with $\gamma_{2,1}\equiv\gamma_2= \E(Z_{01}^4)-3=\beta_{2,1}-3$ being the excess of kurtosis. In this case, we can also consider the Mardia multivariate excess of kurtosis index
$\gamma_{2,p}=\beta_{2,p}-p(p+2)$ which equals $\gamma_{2,1}$.

In the skew-normal case with $G(x)=\Phi(x)$, we have $\gamma_1=b(2b^2-1)\gamma_*^{3/2}$ and $\gamma_2=2b^2(2-3b^2)\gamma_*^2$, where $\gamma_*= \lambda_*^2 / \{1+(1-b^2)\lambda_*^2\}$, $b=\sqrt{2/\pi}$ and $\lambda_* =\sqrt{\blambda^\top\bOmega\blambda}$ as defined above. However, the indices $\gamma_1$ and $\gamma_2$ corresponding to the subvectors of $\bX$ are different through subvectors with different dimensions. As was shown in \cite{azzalini1999statistical}, if $\bX\sim {\cal SN}_p(\mathbf{0}_p,\bOmega,\blambda)$ and  $\bA$ is a $q\times p$ fixed matrix, then $\bA\bX\sim {\cal SN}_q\left(\mathbf{0}_q,\bOmega_A,\blambda_A\right),$ where $\bOmega_A=\bA\bOmega\bA^\top$ and $\blambda_A=\{(\bA\bOmega\bA^\top)^{-1}\bA\bOmega\blambda\} \{1+\blambda^\top(\bOmega-\bOmega\bA^\top\bOmega_A^{-1}\bA\bOmega)\blambda\}^{-1/2}.$ In particular, if we consider the partition $\bX=(\bX_q^\top,\bX_{p-q}^\top)^\top$ with the corresponding partition for the scale matrix $\bOmega=(\bOmega_{rs})_{r,s=q,p-q}$ and the skewness vector $\blambda=(\blambda_q^\top,\blambda_{p-q}^\top)^\top$, we then have:
$$\bX_q=\bA\bX\sim {\cal SN}_q\left(\mathbf{0}_q, \bOmega_{qq},\blambda^{(q)}\right) \quad\text{with}\quad\blambda^{(q)}=\frac{\blambda_q+\bOmega_{qq}^{-1}\bOmega_{q,p-q}\blambda_{p-q}}{\sqrt{1+\blambda_{p-q}^\top\bOmega_{p-q,p-q:q}\blambda_{p-q}}},$$
where $\bA\bA^\top=\bI_q$ and $\bOmega_{p-q,p-q:q}=\bOmega_{p-q,p-q}-\bOmega_{p-q,q}\bOmega_{qq}^{-1}\bOmega_{q,p-q}$. 
The associated canonical transformation to $\bX_q$ has summary skewness parameter:
$$\lambda_*^{(q)}=\sqrt{\blambda^{(q)\top}\bOmega_{qq}\blambda^{(q)}}=
\sqrt{\frac{(\blambda_q+\bOmega_{qq}^{-1}\bOmega_{q,p-q}\blambda_{p-q})^\top\bOmega_{qq}(\blambda_q+\bOmega_{qq}^{-1}\bOmega_{q,p-q}\blambda_{p-q})}
{1+\blambda_{p-q}^\top\bOmega_{p-q,p-q:q}\blambda_{p-q}}}.$$

Therefore, we have
$\beta_{1,q,i}=\gamma_1^{2(q)}$ and $\beta_{2,q,i}=\gamma_2^{(q)}+q(q+2)$, where for $k=1,2$, $\gamma_k^{(q)}=\gamma_k$ for $q=p$, and for $q<p$ it  must be computed as $\gamma_k$ but with $\lambda_*$ replaced by $\lambda_*^{(q)}$. In particular, if $\bOmega=\bI_p$, then $\lambda_*^{(q)}=\sqrt{(\blambda_q^\top\blambda_q) / (1+\blambda_{p-q}^\top\blambda_{p-q})}$ with $\lambda_*^{(q)}=\lambda_*$ if $q=p$, and  $\lambda_*^{(q)}=\lambda_i/\sqrt{1+\sum_{j\ne i}\lambda_j^2}$ for the $i$-th marginal component if $q=1$.

Next, we consider the multivariate skew-$ t $ distribution as described in \cite[section 6.2]{azzalini2013skew}.
Let $ \bX \sim {\cal ST}_p( \bxi, \bOmega, \balpha, \nu) $, where $ {\cal ST}_p $ denotes a $ p $-dimensional skew-$ t $ distribution with degrees of freedom $ \nu $, location vector $ \bxi $, scale matrix $ \bOmega $ and skewness vector $ \balpha $. Define:
\begin{align*}
\delta_* = \left( \frac{\balpha^\top \bOmega \balpha}{1 + \balpha^\top \bOmega \balpha} \right)^{1/2}
\quad\text{and}\quad
b_\nu = \frac{\sqrt{\nu} \Gamma\{(\nu - 1) / 2\}}{\sqrt{\pi} \Gamma(\nu / 2)} .
\end{align*}
Based on $ \delta_* $ and $ b_\nu $, we set
$ \mu_* = b_\nu \delta_*$
and
$\sigma_*^2 = \{\nu / (\nu - 2)\} - \mu_*^2$.
The Mardia measures of multivariate skewness and kurtosis for $ \bX $ are \cite{azzalini2013skew}: 
\begin{align}
& \beta_{1,p} = \beta_1^* + 3 (p - 1) \frac{\mu_*^2}{(\nu - 3) \sigma_*^2} \; \text{if } \nu > 3 , \label{eq:skewt_skewness}\\
& \beta_{2,p} = \beta_2^* + (p^2 - 1) \frac{\nu -2}{\nu - 4} + \frac{2 (p - 1)}{\sigma_*^2} \left\{ \frac{\nu}{\nu - 4} - \frac{(\nu - 1) \mu_*^2}{\nu - 3} \right\} - p (p + 2) \; \text{if } \nu > 4 , \label{eq:skewt_kurtosis}
\end{align}
where
\begin{align*}
& \beta_1^* = \frac{\mu_*^2}{\sigma_*^3} \left\{ \frac{\nu (3 - \delta_*^2)}{\nu - 3} - \frac{3 \nu}{\nu - 2} + 2 \mu_*^2 \right\}^2 , \\
& \beta_2^* = \frac{1}{\sigma_*^4} \left\{ \frac{3 \nu^2}{(\nu - 2) (\nu - 4)} - \frac{4 \mu_*^2 \nu (3 - \delta_*^2)}{\nu - 3} + \frac{6 \mu_*^2 \nu}{\nu - 2} - 3 \mu_*^4 \right\} . 
\end{align*}

Next, denote the subvector corresponding to $ \beta_{1,q,i} $ and $ \beta_{2,q,i} $ as $ \bX_q $, and consider the partition $ \bX = ( \bX_q^\top, \bX_{p-q}^\top)^\top $ with the corresponding partitions for the location vector $ \bxi = (\bxi_q^\top, \bxi_{p-q}^\top)^\top $, scale matrix $ \bOmega = ( \bOmega_{rs} )_{r, s = q, p-q} $ and the skewness vector $ \balpha = (\balpha_q^\top, \balpha_{p-q}^\top)^\top $ as in the  earlier part of this section. Then:
\begin{align*}
\bX_q \sim {\cal ST}_q\left( \bxi_q, \bOmega_{qq}, \balpha^{(q)}, \nu \right), \quad \balpha^{(q)} = \frac{\balpha_q + \bOmega_{qq}^{-1} \bOmega_{q,p-q} \balpha_{p-q}}{\sqrt{1 + \balpha_{p-q}^\top \bOmega_{p-q,p-q : q} \balpha_{p-q}}} ,
\end{align*}
where
$\bOmega_{p-q,p-q : q} = \bOmega_{p-q,p-q} - \bOmega_{p-q,q} \bOmega_{qq}^{-1} \bOmega_{q,p-q} $. Now, $ \beta_{1,q,i} $ and $ \beta_{2,q,i} $ are obtained by replacing $ \balpha $ and $ \bOmega $ with $ \balpha^{(q)} $ and $ \bOmega_{qq} $, respectively, in the formulae given in \eqref{eq:skewt_skewness} and \eqref{eq:skewt_kurtosis}.

\section{Invariance Under Location-Scale Transformations}\label{sec:invariance}

It is well known that the Mardia measures of multivariate skewness and kurtosis \eqref{beta1p} and \eqref{beta2p} are invariant under affine transformations. We show next that the sub-dimensional Mardia measures of multivariate skewness and kurtosis \eqref{beta1qi} and \eqref{beta2qi} are only invariant
under location and scale transformation, unless the multivariate distribution is spherically invariant.

Let ${\rm MD}^2(\bX)=(\bX-\bmu)^\top \bSigma^{-1} (\bX-\bmu)$ be the squared Mahalanobis distance. Consider the affine transformation $\bY=\bA\bX+\bb$, $\bA\in\mathbb{R}^{p\times p}$ with $|\bA|\ne0$ and $\bb\in\mathbb{R}^p$. Then, it is immediate that ${\rm MD}^2(\bY)={\rm MD}^2(\bA\bX+\bb)={\rm MD}^2(\bX)={\rm MD}^2(\bZ)$, where $\bZ=\bSigma^{-1/2}(\bX-\bmu)$.

Next, partition $\bX=(\bX_q^\top,\bX_{p-q}^\top)^\top$ and similarly for $\bY$ and $\bZ$. Then, $\bX_q=\bB_q \bX$, where $\bB_q=(\bI_q,\, {\bf O})\in \mathbb{R}^{q\times p}$ with rank$(\bB_q)=q$ and ${\bf O}$ being the null matrix. Similarly, $\bY_q=\bB_q \bY$
and $\bZ_q=\bB_q \bZ$. Therefore:
\begin{eqnarray*}
{\rm MD}^2(\bX_q)&=&(\bB_q \bX-\bB_q \bmu)^\top (\bB_q \bSigma \bB_q^\top )^{-1} (\bB_q \bX-\bB_q \bmu)
= (\bX-\bmu)^\top \bB_q^\top(\bB_q \bSigma \bB_q^\top )^{-1} \bB_q(\bX-\bmu) \\
&=& \bZ^\top \bSigma^{1/2} \bB_q^\top(\bB_q \bSigma \bB_q^\top )^{-1} \bB_q  \bSigma^{1/2} \bZ
= \bZ^\top \bP \bZ,
\end{eqnarray*}
where $\bP\geq 0$ is a $p\times p$ orthogonal projection matrix, that is, symmetric with $\bP^2=\bP$ and rank$(\bP)=q$. Similarly, ${\rm MD}^2(\bY_q)=\bZ^\top \bP_\bA \bZ$ with $\bP_\bA=\bSigma^{1/2} \bA^\top \bB_q^\top(\bB_q \bA \bSigma\bA^\top \bB_q^\top )^{-1} \bB_q  \bA \bSigma^{1/2}$, where $\bP_\bA\geq 0$ is a $p\times p$ orthogonal projection matrix, that is, symmetric with $\bP_\bA^2=\bP_\bA$ and rank$(\bP_\bA)=q$. Therefore, if $\bP_\bA=\bP$ then ${\rm MD}^2(\bX_q)={\rm MD}^2(\bY_q)={\rm MD}^2(\bZ_q)=\bZ_q^\top \bZ_q$. In particular, this holds when the matrix $\bA$ is diagonal,  which proves invariance under location and scale transformations.

More generally, simple calculations show that $\bP_\bA=\bP$ if and only if:
\begin{equation}
\bA_{12}\bSigma_{21}\bA_{11}^\top+\bA_{11}\bSigma_{12}\bA_{12}^\top+\bA_{12}\bSigma_{22}\bA_{12}^\top={\bf O}.\label{invarcond}
\end{equation}
In addition to the trivial case $\bA_{12}={\bf O}$, Equation \eqref{invarcond} holds, for instance, when
$\bA_{12}\perp(\bSigma_{21}, \bSigma_{22})$, where $\perp$ means orthogonal.

Moreover, if $\bZ$ is spherically distributed, that is, $\bGamma \bX\buildrel d\over=\bX$ for any orthogonal matrix $\bGamma$, then the invariance holds for any affine transformation with $|\bA|\neq 0$. Indeed, in this case $\bP_\bA=\bGamma_\bA^\top \bD_\bA \bGamma_\bA$ with $\bD_\bA=\bB_q^\top \bB_q$ for some orthogonal matrix $\bGamma_\bA$ and therefore:
$$\bZ^\top \bP_\bA \bZ=\bZ^\top \bGamma^\top_\bA \bB_q^\top \bB_q \bGamma_\bA \bZ=(\bGamma_\bA \bZ)^\top  \bB_q^\top \bB_q(\bGamma_\bA \bZ)
\buildrel d\over=\bZ^\top  \bB_q^\top \bB_q \bZ=\bZ_q^\top \bZ_q,$$
and similarly when $\bA=\bI_p$.

In summary, although the Mardia measures of multivariate skewness and kurtosis are invariant under affine transformations, this is generally not the case for the sub-dimensional measures as shown in this section and illustrated in the simple case of a bivariate skew-normal distribution in \autoref{fig:introplot}. Therefore, the sub-dimensional Mardia measures of multivariate skewness and kurtosis are informative for testing normality in the sub-dimensions as proposed in the next section.

\section{Sub-Dimensional Estimation and Testing of Hypotheses}\label{sec:estimation}

Let $ \mathbf{X}_1, \ldots, \mathbf{X}_n $ be a random sample from a probability distribution $ F $ in $ \mathbb{R}^p $. Let $ \mathbf{X}_{qij} $ denote a sub-vector of dimension $ q $ obtained from $ \mathbf{X}_j $ for $ j \in \{1, \ldots, n\} $. The dimension of $ \mathbf{X}_{qij} $ corresponds to that of $ \mathbf{X}_{qi} $ defined in \autoref{sec:intro}, where the random vector $ \mathbf{X} $ has distribution $ F $. More precisely, suppose $ \mathbf{X}_{qi} = \mathbf{P}_{qi} \mathbf{X} $, where $ \mathbf{P}_{qi} $ is a diagonal matrix whose $ i $th diagonal element is either 1 or 0 (depending on whether the $ i $th coordinate of $ \mathbf{X} $ is included in $ \mathbf{X}_{qi} $ or not). Then, $ \mathbf{X}_{qij} = \mathbf{P}_{qi} \mathbf{X}_j $ for all $ j $.
We assume that the covariance matrix $ \bSigma $ of the distribution $ F $ is positive definite. This implies that the covariance matrices of all sub-vectors $ \mathbf{X}_{qi} $ are positive definite, and hence all of the quantities $ \beta_{1,q,i} $ and $ \beta_{2,q,i} $ defined in \eqref{beta1qi} and \eqref{beta2qi} are well defined for the underlying distribution $ F $. Further, we assume that $ n > p $ so that the sample covariances corresponding to all of the sub-vectors $ \mathbf{X}_{qi} $ are invertible almost surely. This is necessary for defining the sample estimators of $ \beta_{1,q,i} $ and $ \beta_{2,q,i} $. The dimension $ p $ is fixed in this setup.
Let $ \bar{\mathbf{X}}_{qi} $ and $ \mathbf{S}_{qi} $ be the sample mean and the sample covariance matrix of the observations $ \mathbf{X}_{qi1}, \ldots, \mathbf{X}_{qin} $, respectively. The sample estimators of $ \beta_{1,q,i} $ and $ \beta_{2,q,i} $ defined in \eqref{beta1qi} and \eqref{beta2qi} are:\\
\begin{align*}
& b_{1,q,i} = \frac{1}{n^2} \sum_{j=1}^n \sum_{k=1}^n \{(\bX_{qij}-\bar{\bX}_{qi})^\top \bS_{qi}^{-1}(\bX_{qik}-\bar{\bX}_{qi})\}^3, \\
& b_{2,q,i} = \frac{1}{n} \sum_{j=1}^n \{(\bX_{qij}-\bar{\bX}_{qi})^\top \bS_{qi}^{-1}(\bX_{qij}-\bar{\bX}_{qi})\}^2,
\end{align*}
respectively. Consequently, the estimates of $ \mathbf{M}_{1,q} $ and $ \mathbf{M}_{2,q} $ defined in \eqref{b1q} and \eqref{b2q} are
$\mathbf{m}_{1, q} = ( b_{1,q,1}, \ldots, b_{1,q,Q_q} )^\top$ and $\mathbf{m}_{2, q} = ( b_{2,q,1}, \ldots, b_{2,q,Q_q} )^\top$, respectively. Similarly, the estimates of $ \mathbf{M}_{1}^* $ and $ \mathbf{M}_{2}^* $ defined in \eqref{b1p} and \eqref{b2p} are
$\mathbf{m}_{1}^* = ( b_{1,1,1},\ldots, b_{1,p,1})^\top$ and
$\mathbf{m}_{2}^* = ( b_{2,1,1},\ldots, b_{2,p,1})^\top$, respectively.

The quantities $ \mathbf{m}_{1, q} $ and $ \mathbf{m}_{2, q} $ provide information about the skewness and kurtosis present in all the $ q $-dimensional sub-vectors constructed from the sample. On the other hand, $ \mathbf{m}_{1}^* $ and $ \mathbf{m}_{2}^* $ reflect the skewness and kurtosis present in all possible sub-dimensions of the sample. Based on these quantities, tests of skewness and kurtosis can be constructed.

\subsection{Testing Skewness}\label{subsec:skewnesstest}

From (2.4) in \cite{mardia1970measures}, it follows that $ \beta_{1,q,i} \ge 0 $ for all $ q, i $. However, for symmetric distributions, $ \beta_{1,q,i} = 0 $ for all $ q, i $ as discussed in \autoref{sec:symmetric}. So, if any of the sub-dimensions bears skewness, we shall have $ \beta_{1,q,i} > 0 $ for the $ (q, i) $-pair corresponding to that sub-dimension, which implies that $ \max\{ \beta_{1,q,i} \;|\; q \in \{1, \ldots, p\}, i \in \{1, \ldots, Q_q\} \} > 0 $. Therefore, a hypothesis for testing skewness in all sub-dimensions of the distribution can be formulated as follows:
\begin{align}
\text{H}_0^{(s)} : \max_{q, i} \beta_{1,q,i} = 0 \;\;\text{and}\;\; \text{H}_\text{A}^{(s)} : \max_{q, i} \beta_{1,q,i} > 0 .
\label{h_skewness}
\end{align}
The above hypothesis is equivalent to:
\begin{align*}
\tilde{\text{H}}_0^{(s)} : \beta_{1,q,i} = 0 \text{ for all } q, i \;\;\text{and}\;\; \tilde{\text{H}}_\text{A}^{(s)} : \beta_{1,q,i} > 0 \text{ for some } q, i.
\end{align*}
The usual Mardia skewness test \cite{mardia1970measures} only tests for:
\begin{align*}
\bar{\text{H}}_0^{(s)} : \beta_{1, p} = 0 \;\;\text{and}\;\; \bar{\text{H}}_\text{A}^{(s)} : \beta_{1, p} > 0 ,
\end{align*}
and thus it may be less efficient in providing information about skewness supported on a smaller sub-dimension. In this aspect, testing for \eqref{h_skewness} can be expected to be more efficient than the usual Mardia skewness test.

The null hypothesis $ \text{H}_0^{(s)}$ in \eqref{h_skewness} should be rejected when the maximum of $ b_{1,q,i} $ is large. However, directly comparing the $ b_{1,q,i} $s is not proper, because they have different means and standard deviations under the null hypothesis. When the underlying distribution $ F $ is Gaussian, it follows from Equation (2.26) in \cite{mardia1970measures} that the asymptotic expectation and asymptotic standard deviation of $ n b_{1,q,i} $ are $ q (q + 1) (q + 2) $ and $ \sqrt{12 q (q + 1) (q + 2)} $, respectively. So, while comparing the quantities $ b_{1,q,i} $, it is appropriate to center and scale them first. Because our aim is to detect non-Gaussian features in the sample, we center and scale $ b_{1,q,i} $ using its asymptotic expectation and standard deviation under Gaussianity, and consider:
\begin{align*}
\tilde{b}_{1,q,i} = \frac{n b_{1,q,i} - q (q + 1) (q + 2)}{\sqrt{12 q (q + 1) (q + 2)}} , \quad q \in \{1, \ldots, p\},\; i \in \{1, \ldots, Q_q\} .
\end{align*}
We reject $ \text{H}_0^{(s)}$ in \eqref{h_skewness} when $ \max_{q, i} \tilde{b}_{1,q,i} $ is large, and we shall denote this test as the MaxS test.
Let $ \tilde{\mathbf{m}}_{1, q} $ and $ \tilde{\mathbf{m}}_{1}^* $ be the centered and scaled analogues of $ \mathbf{m}_{1, q} $ and $ \mathbf{m}_{1}^* $, formed by replacing $ b_{1,q,i} $ by $ \tilde{b}_{1,q,i} $.
Note that $ \max_{q, i} \tilde{b}_{1,q,i} $ is the maximum of the coordinates of $ \tilde{\mathbf{m}}_1^* $.
To find the p-value of the MaxS test or to construct the cutoff for a given level, we need the asymptotic null distribution of $ \max_{q, i} \tilde{b}_{1,q,i} $, which is derived in \autoref{sec:asym}.

Sometimes, it may be the case that there is information about the possible presence of skewness in the sub-vectors of a fixed dimension, say, $ q_0 $, where $ q_0 < p $, but it is not known exactly which subvector has a skewed distribution. Then, it would be judicious to construct the test based on $ \max_{i} \tilde{b}_{1, q_0, i} $ only. In such a situation, the hypothesis is:
\begin{align}
\text{H}_0^{q_0, (s)} : \max_{i} \beta_{1, q_0, i} = 0 \;\;\text{and}\;\; \text{H}_\text{A}^{q_0, (s)} : \max_{i} \beta_{1, q_0, i} > 0 .
\label{h_skewness_q0}
\end{align}
Here, the null hypothesis $ \text{H}_0^{q_0, (s)} $ in \eqref{h_skewness_q0} is rejected for large values of $ \max_{i} \tilde{b}_{1, q_0, i} $. We denote this test as the $ \text{MaxS}_{q_0} $ test. To find the p-value (or, the cutoff) for rejection at a given level, we derive the asymptotic distribution of $ \max_{i} \tilde{b}_{1, q_0, i} $ in \autoref{sec:asym}.

\subsection{Testing Kurtosis}\label{subsec:kurtosistest}

Similar to tests of skewness, tests of kurtosis can be also constructed based on the $ b_{2,q,i} $ quantities. For a $ p $-dimensional Gaussian distribution, the value of the Mardia measure of multivariate kurtosis $ \beta_{2, p} $ is $ p (p + 2) $ (see \cite{mardia1970measures}). The quantity $ \beta_{2, p} $ measures how heavy-tailed the distribution is. For distributions with tails heavier than the Gaussian distribution, e.g., the multivariate Student's $ t $ distributions, we have $ \beta_{2, p} > p (p + 2) $. Similarly, for distributions with lighter tails than the Gaussian distribution, we have $ \beta_{2, p} < p (p + 2) $.
However, the Mardia test of kurtosis \cite{mardia1970measures} again checks the overall kurtosis of all the dimensions, and would not be efficient in checking whether the distribution in a particular sub-dimension deviates from Gaussianity in terms of kurtosis.

Suppose one wants to test whether the kurtosis of the distribution of any sub-dimension deviates from the Gaussian distribution. If this is the case, then we have $ \left| \beta_{2,q,i} - q (q + 2) \right| > 0 $, where the pair $ (q, i) $ corresponds to that sub-dimension. For this scenario, the suitable hypothesis is:
\begin{align}
\text{H}_0^{(k)} : \max_{q, i} \left| \beta_{2,q,i} - q (q + 2) \right| = 0 \;\;\text{and}\;\; \text{H}_\text{A}^{(k)} : \max_{q, i} \left| \beta_{2,q,i} - q (q + 2) \right| > 0 .
\label{h_kurtosis}
\end{align}
Like in the case of $ b_{1,q,i} $, $ b_{2,q,i} $ has different expectation and standard deviation for different $ q $. When the underlying distribution $ F $ is Gaussian, from the derivations in subsection 3.2 in \cite{mardia1970measures}, the asymptotic expectation and the asymptotic standard deviation of $ b_{2,q,i} $ are $ q (q + 2) $ and $ \sqrt{\{8 q (q + 2) \} / n} $. So, we center and scale $ b_{2,q,i} $, and consider:
\begin{align*}
\tilde{b}_{2,q,i} = \frac{b_{2,q,i} - q (q + 2)}{\sqrt{\{8 q (q + 2) \} / n}} , \quad q \in \{1, \ldots, p\},\; i \in \{1, \ldots, Q_q\} .
\end{align*}
The null hypothesis $ \text{H}_0^{(k)} $ in \eqref{h_kurtosis} is rejected if $ \max_{q, i} | \tilde{b}_{2,q,i}| $ is large. We denote this test as the MaxK test. To find the p-value (or the cutoff) of the test at a given level, we use the asymptotic null distribution of $ \max_{q, i} | \tilde{b}_{2,q,i}| $ derived in \autoref{sec:asym}.

Now, suppose one knows that there is possible deviation from the Gaussian distribution in terms of kurtosis in some sub-dimension with dimension $ q_0 < p $, but the exact sub-dimension is unknown. In such a case, the appropriate hypothesis is:
\begin{align}
\text{H}_0^{q_0, (k)} : \max_{i} \left| \beta_{2, q_0, i} - q_0 (q_0 + 2) \right| = 0 \;\;\text{and}\;\; \text{H}_\text{A}^{q_0, (k)} : \max_{i} \left| \beta_{2, q_0, i} - q_0 (q_0 + 2) \right| > 0 .
\label{h_kurtosis_q0}
\end{align}
Here, the null hypothesis $ \text{H}_0^{q_0, (k)} $ in \eqref{h_kurtosis_q0} is rejected if $ \max_{i} | \tilde{b}_{2,q_0,i}| $ is large. This test is denoted as the $ \text{MaxK}_{q_0} $ test. We employ the asymptotic distribution of $ \max_{i}| \tilde{b}_{2,q_0,i}| $ derived in \autoref{sec:asym} to find the p-value (or the cutoff) of the test at a given level.

\section{Asymptotic Distributions and Implementation of Tests}\label{sec:asym}

In this section, the asymptotic distributions of the quantities $ \max_{q, i} \tilde{b}_{1,q,i} $, $ \max_{i} \tilde{b}_{1, q_0, i} $, $ \max_{q, i} | \tilde{b}_{2,q,i}| $ and $ \max_{i} | \tilde{b}_{2, q_0, i}| $ introduced in \autoref{sec:estimation} are derived. Based on the respective asymptotic distributions, the implementations of the MaxS, $ \text{MaxS}_{q_0} $, MaxK and $ \text{MaxK}_{q_0} $ tests corresponding to \eqref{h_skewness}, \eqref{h_skewness_q0}, \eqref{h_kurtosis} and \eqref{h_kurtosis_q0}, respectively, are described. To derive the aforementioned asymptotic distributions, we shall use certain linearizations corresponding to the terms $ b_{1,q,i} $ and $ b_{2,q,i} $.

\subsection{Skewness}

Given any $ q, i $, define:
\begin{align}
h_{qi}( \mathbf{x}, \mathbf{y} )
&= \left\{ (\mathbf{x} - \boldsymbol{\mu}_{qi} )^\top \boldsymbol{\Sigma}_{qi}^{-1} (\mathbf{y} - \boldsymbol{\mu}_{qi}) \right\}^3 
 - 3 ( \mathbf{x} - \boldsymbol{\mu}_{qi} )^\top \boldsymbol{\Sigma}_{qi}^{-1} ( \mathbf{x} - \boldsymbol{\mu}_{qi} ) (\mathbf{x} - \boldsymbol{\mu}_{qi} )^\top \boldsymbol{\Sigma}_{qi}^{-1} (\mathbf{y} - \boldsymbol{\mu}_{qi}) \nonumber\\
& \quad - 3 ( \mathbf{y} - \boldsymbol{\mu}_{qi} )^\top \boldsymbol{\Sigma}_{qi}^{-1} ( \mathbf{y} - \boldsymbol{\mu}_{qi} ) (\mathbf{x} - \boldsymbol{\mu}_{qi} )^\top \boldsymbol{\Sigma}_{qi}^{-1} (\mathbf{y} - \boldsymbol{\mu}_{qi}) + 3 (q + 2) (\mathbf{x} - \boldsymbol{\mu}_{qi} )^\top \boldsymbol{\Sigma}_{qi}^{-1} (\mathbf{y} - \boldsymbol{\mu}_{qi}) . 
\label{eq:skewness_h}
\end{align}
Define the integral operator $ \mathbf{h}_{qi} $ by:
\begin{align}
\{\mathbf{h}_{qi}( g )\}( \mathbf{x} ) = \text{E}\left\{ h_{qi}( \mathbf{x}, \mathbf{X}_{qi} ) g( \mathbf{X}_{qi} ) \right\}.
\label{eq:integralop_h}
\end{align}
If $ \text{E}( \| \mathbf{X} \|^3 ) < \infty $, then the integral operator $ \mathbf{h}_{qi} $ is well defined on $ L_2[ \mathbf{X}_{qi} ] $, the space of measurable functions $ g $ which are square integrable with respect to the distribution of $ \mathbf{X}_{qi} $. Under appropriate assumptions, it can be established that $ \mathbf{h}_{qi} $ has only finitely many nonzero eigenvalues, where the number $ K( q ) $ of nonzero eigenvalues depends on $ q $ (see \cite{baringhaus1992limit}). Let $ \lambda_{qik} $ be the eigenvalues and $ f_{qik}( \cdot ) $ be the corresponding eigenfunctions of $ \mathbf{h}_{qi} $ for $ k \in \{1, \ldots, K( q )\} $. Then, we have:
\begin{align}
h_{qi}( \mathbf{x}, \mathbf{y} ) = \sum_{k=1}^{K( q )} \lambda_{qik} f_{qik}( \mathbf{x} ) f_{qik}( \mathbf{y} )
\label{eq:h_1}
\end{align}
for all $ \mathbf{x}, \mathbf{y} $.
The function $ h_{qi}( \mathbf{x}, \mathbf{y} ) $ is closely related to the quantity $ b_{1,q,i} $. The eigenvalues and eigenfunctions of $ h_{qi}( \mathbf{x}, \mathbf{y} ) $ are used to establish a relationship between $ b_{1,q,i} $ and the average of independent random vectors in the following theorem. This linearization will be used to derive the asymptotic distributions of the quantities $ \max_{q, i} \tilde{b}_{1,q,i} $ and $ \max_{i} \tilde{b}_{1, q_0, i} $. Define:
\begin{align*}
 m_{4, qi} &= \text{E}\left[ \left\{ ( \bX_{qi} - \boldsymbol{\mu}_{qi} )^\top \boldsymbol{\Sigma}_{qi}^{-1} ( \bX_{qi} - \boldsymbol{\mu}_{qi} ) \right\}^2 \right] , \\ m_{6, qi} &= \text{E}\left[ \left\{ ( \bX_{qi} - \boldsymbol{\mu}_{qi} )^\top \boldsymbol{\Sigma}_{qi}^{-1} ( \bX_{qi} - \boldsymbol{\mu}_{qi} ) \right\}^3 \right] .
\end{align*}
We have the following theorem on $ b_{1,q,i} $.
Note that when $q = 1$, $\mathbf{X}_{qi} = X_i$, the $i$-th component of $\mathbf{X}$ for $i \in \{1, \ldots, p\}$. Let $\mu_i$ be the $i$-th component of $\boldsymbol{\mu}$ and $X_{ij}$ be the $i$-th component of the observation vector $\mathbf{X}_j$ for $i \in \{1, \ldots, p\}$ and $j \in \{1, \ldots, n\}$.
\begin{theorem}\label{thm:skewness1}
Suppose that $ \text{E}( \| \mathbf{X} \|^6 ) < \infty $ and the distribution of $ \mathbf{X} $ is elliptical.
Let $ \sigma_i^{2} = \text{E}\{( X_i - \mu_i )^2 \} $ for $i \in \{1, \ldots, p\}$.
For $ j \in \{1, \ldots, n\} $, define $ \bu_{qij} $ as:
\begin{align*}
\bu_{qij} = 
\begin{cases}
& \sigma_i^{-3} ( X_{ij} - \mu_{i}) \{( X_{ij} - \mu_{i})^2 - 3 \sigma_i^{2}\} \text{ if } q = 1 , \\
& \left( \sqrt{\lambda_{qi1}} f_{qi1}( \mathbf{X}_{qij} ), \ldots, \sqrt{\lambda_{qiK(q)}} f_{qiK( q )}( \mathbf{X}_{qij} )\right)^\top \text{ otherwise}.
\end{cases}
\end{align*}
Let $ \bar{\bu}_{qi} = n^{-1} \sum_{j=1}^{n} \bu_{qij} $. Then:
\begin{align*}
n b_{1,q,i} 
&= \left\| \sqrt{n} \bar{\bu}_{qi} \right\|_2^2 + o_P( 1 )
\end{align*}
as $ n \to \infty $. Further, for $ q > 1 $, $ K(q) = q + q (q - 1) (q + 4) / 6 $:
\begin{align*}
\lambda_{qik} = 
\begin{cases}
& (3 / q) \left\{ m_{6, qi} / (q + 2) - 2 m_{4, qi} + (q + 2) q \right\} \text{ for } k \in \{1, \ldots, q\}, \\
& 6 m_{6, qi} / \left\{ q (q + 2) (q + 4) \right\} \text{ for } k \in \{(q + 1), \ldots, K(q)\},
\end{cases}
\end{align*}
and $ \text{E}\{ f_{qik}( \bX_{qij} ) \} = 0 $ for all $ q, i, j $ and $ k $.
\end{theorem}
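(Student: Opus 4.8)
The plan is to treat $b_{1,q,i}$ as a degenerate V-statistic of order $2$ with symmetric kernel
$\kappa_{qi}(\mathbf{x},\mathbf{y}) = \{(\mathbf{x}-\bmu_{qi})^\top\bSigma_{qi}^{-1}(\mathbf{y}-\bmu_{qi})\}^3$, evaluated at the empirical mean $\bar{\bX}_{qi}$ and empirical covariance $\bS_{qi}$ rather than at $\bmu_{qi}$, $\bSigma_{qi}$. First I would argue that replacing $(\bar{\bX}_{qi},\bS_{qi})$ by $(\bmu_{qi},\bSigma_{qi})$ costs only $o_P(n^{-1})$: expand $b_{1,q,i}$ as a function of these plug-in quantities by a Taylor argument, note that under ellipticity (hence symmetry) the relevant first-order terms have mean zero and the centering/scaling has been chosen so that the cross terms cancel, and collect the remainder. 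This is where the hypothesis that $\bX$ is elliptical is used crucially — it forces $\beta_{1,q,i}=0$ and makes the Hájek-type projection of the kernel onto single coordinates vanish, so that the statistic is genuinely degenerate at the normalizing point. The correction function $h_{qi}$ in \eqref{eq:skewness_h} is precisely the kernel obtained after this bias/estimation adjustment, as one checks by matching the polynomial terms produced by differentiating in $\bmu_{qi}$ and $\bSigma_{qi}^{-1}$; I would verify $\text{E}\{h_{qi}(\mathbf{x},\bX_{qi})\}=0$ for every $\mathbf{x}$, i.e. that $h_{qi}$ is $\mathbf{h}_{qi}$-degenerate, which is the content of $\text{E}\{f_{qik}(\bX_{qij})\}=0$.

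Next I would invoke the spectral decomposition \eqref{eq:h_1} of the degenerate kernel $h_{qi}$ — legitimate because $\mathbf{h}_{qi}$ is a self-adjoint Hilbert–Schmidt operator on $L_2[\bX_{qi}]$ with finitely many nonzero eigenvalues (the finiteness for polynomial kernels of this type is the Baringhaus–Henze result cited as \cite{baringhaus1992limit}). Substituting $h_{qi}(\bX_{qij},\bX_{qik}) = \sum_{k'} \lambda_{qik'} f_{qik'}(\bX_{qij}) f_{qik'}(\bX_{qik})$ into the double sum and using $\text{E}\{f_{qik'}(\bX_{qi})\}=0$ turns the leading term of $n b_{1,q,i}$ into
$\sum_{k'} \lambda_{qik'} \bigl( n^{-1/2}\sum_j f_{qik'}(\bX_{qij}) \bigr)^2 = \bigl\| \sqrt{n}\,\bar{\bu}_{qi} \bigr\|_2^2$
with $\bu_{qij}$ defined coordinatewise as $\sqrt{\lambda_{qik'}} f_{qik'}(\bX_{qij})$ — exactly the $q>1$ case in the statement. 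For $q=1$ the kernel is $\sigma_{qi}^{-6}(x-\mu_{qi})^3(y-\mu_{qi})^3$ after standardizing, whose degenerate correction is the cubic Hermite-type polynomial $\sigma_{qi}^{-3}(x-\mu_{qi})\{(x-\mu_{qi})^2-3\sigma_{qi}^2\}$, giving the stated one-dimensional $\bu_{1ij}$; I would record that the single nonzero eigenvalue there is $\text{Var}$ of this polynomial, consistent with $K(1)=1$.

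Finally I would pin down the eigenvalues and the count $K(q)$ for $q>1$. The eigenfunctions of $\mathbf{h}_{qi}$ are orthogonal polynomials in $\bX_{qi}$ of degree $\le 3$ that are $\mathbf{h}_{qi}$-degenerate; grouping them by the structure of the cubic form shows two families — $q$ "linear-type" eigenfunctions (one per coordinate direction, built from $(\bX_{qi}-\bmu_{qi})$ times a scalar quadratic correction) and the remaining genuinely cubic ones — which accounts for the split in the displayed eigenvalue formula and yields the dimension count $K(q) = q + q(q-1)(q+4)/6$ by counting cubic monomials in $q$ variables modulo the degenerate linear combinations. The eigenvalues themselves come out by applying $\mathbf{h}_{qi}$ to a representative eigenfunction in each family and reading off the scalar multiplier; under ellipticity the required moments reduce to $\bm_{4,qi}$, $\bm_{6,qi}$ and to Gaussian-type identities for $\text{E}\{f f\}$, producing $(3/q)\{\bm_{6,qi}/(q+2) - 2\bm_{4,qi} + (q+2)q\}$ for the first family and $6\bm_{6,qi}/\{q(q+2)(q+4)\}$ for the second. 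The main obstacle I anticipate is not any single estimate but bookkeeping: explicitly identifying the two eigenfunction families, verifying their mutual orthogonality and degeneracy, and checking that applying $\mathbf{h}_{qi}$ really returns a scalar multiple — this is a careful but routine computation with moments of elliptical vectors, and the elliptical assumption is what keeps those moments tractable (all odd moments vanish and fourth/sixth moments are determined by $\bm_{4,qi},\bm_{6,qi}$ and the dimension).
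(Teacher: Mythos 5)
Your proposal follows essentially the same route as the paper: linearize $nb_{1,q,i}$ to a degenerate V-statistic with the corrected kernel $h_{qi}$, spectrally decompose that kernel into finitely many mean-zero eigenfunctions, and rewrite the double sum as $\|\sqrt{n}\,\bar{\bu}_{qi}\|_2^2$, with the Hermite-type cubic handling $q=1$. The only difference is that the paper delegates the linearization, the eigenvalue/multiplicity computation, and the mean-zero property of the eigenfunctions to citations (Lemma 2.1 and Theorem 2.2 of \cite{baringhaus1992limit}, \cite{gastwirth1977classical}, \cite{gregory1977large}), whereas you sketch how to derive them directly.
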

\begin{proof}[\textbf{\upshape Proof:}]
When $ q = 1 $, from the arguments in the proof of Theorem 1 in \cite{gastwirth1977classical}, it follows that, as $ n \to \infty $:
\begin{align*}
n b_{1,q,i} &= \left[ n^{-1/2} \sum_{j=1}^{n} \sigma_i^{-3}( X_{ij} - \mu_{i} ) \{ ( X_{ij} - \mu_{i})^2 - 3 \sigma_i^{2} \} \right]^2 + o_P( 1 )= \left( \sqrt{n} \bar{\bu}_{qi} \right)^2 + o_P( 1 ).
\end{align*}

For $ q > 1 $, let $ \bY_{qij} = \boldsymbol{\Sigma}_{qi}^{-1/2} \left( \bX_{qij} - \bmu_{qi} \right) $ for all $ i, j $. Then $ \bY_{qij} $ has an elliptical distribution, which is identical over $ j $ for fixed $ i $, with $ \text{E}( \bY_{qij} ) = \0 $ and $ \text{E}( \bY_{qij} \bY_{qij}^\top ) = \bI_q $. Also, $ \text{E}( \| \bY_{qij} \|^6 ) < \infty $ for all $ i, j $, and hence $ \bY_{qij} $s satisfy the conditions of Lemma 2.1 in \cite{baringhaus1992limit}. From an application of this lemma:
\begin{align}
n b_{1,q,i} = n^{-1} \sum_{j=1}^{n} \sum_{k=1}^{n} h_{qi}( \bX_{qij}, \bX_{qik} ) + o_P( 1 )
\label{eq:h_2}
\end{align}
as $ n \to \infty $.
Next, from the arguments in the proof of Theorem 2.2 in \cite{baringhaus1992limit}, it follows that the integral operator $ \mathbf{h}_{qi} $ defined in \eqref{eq:integralop_h} has only two non-zero distinct eigenvalues, which are:
\begin{align*}
& \gamma_{qi1} = (3 / q) \left\{ m_{6, qi} / (q + 2) - 2 m_{4, qi} + (q + 2) q \right\} ,\\
& \gamma_{qi2} = 6 m_{6, qi} / \left\{ q (q + 2) (q + 4) \right\},
\end{align*}
with associated multiplicities $ \nu_{qi1} = q $ and $ \nu_{qi2} = q (q - 1) (q + 4) / 6 $, respectively. So, we can take $ \lambda_{qik} = \gamma_{qi1} $ for $ k \in \{1, \ldots, \nu_{qi1}\} $ and $ \lambda_{qik} = \gamma_{qi2} $ for $ k \in \{ (\nu_{qi1} + 1), \ldots, (\nu_{qi1} + \nu_{qi2}) \} $. Consequently, $ K(q) = q + q (q - 1) (q + 4) / 6 $. From \eqref{eq:h_1} and \eqref{eq:h_2}, we get:
\begin{align*}
n b_{1,q,i} &= n^{-1} \sum_{j=1}^{n} \sum_{l=1}^{n} \sum_{k=1}^{K(q)} \lambda_{qik} f_{qik}( \bX_{qij} ) f_{qik}( \bX_{qil} ) + o_P( 1 ) = \sum_{k=1}^{K(q)} \left\{ n^{-1/2} \sum_{j=1}^{n} \sqrt{\lambda_{qik}} f_{qik}( \bX_{qij} ) \right\}^2 + o_P( 1 ) \\
&= \left\| \sqrt{n} \bar{\bu}_{qi} \right\|_2^2 + o_P( 1 )
\end{align*}
as $ n \to \infty $.
Finally, from the arguments in the proof of Theorem 2.1 in \cite{gregory1977large}, it follows that $ \text{E}\{ f_{qik}( \bX_{qij} ) \} = 0 $ for all $ q, i, j $ and $ k $.
\end{proof}

Define $ \bU_j = \left( \bu_{1 1 j}^\top, \ldots, \bu_{q i j}^\top, \ldots, \bu_{p 1 j}^\top \right)^\top $ and $ \bU_{q j} = \left( \bu_{q 1 j}^\top, \ldots, \bu_{q Q_q j}^\top \right)^\top $.
Let $ G( \cdot ) $ and $ G_q( \cdot ) $ be such that:
\begin{align*}
G( \bU_j ) = \max\left\{ \frac{\left\| \bu_{1 1 j} \right\|_2^2 - 6}{\sqrt{72}}, \ldots, \frac{\left\| \bu_{q i j} \right\|_2^2 - q (q + 1) (q + 2)}{\sqrt{12 q (q + 1) (q + 2)}}, \ldots, \frac{\left\| \bu_{p 1 j} \right\|_2^2 - p (p + 1) (p + 2)}{\sqrt{12 p (p + 1) (p + 2)}} \right\},
\end{align*}
and
\begin{align*}
G_q( \bU_{qj} ) = \max\left\{ \frac{\left\| \bu_{q 1 j} \right\|_2^2 - q (q + 1) (q + 2)}{\sqrt{12 q (q + 1) (q + 2)}}, \ldots, \frac{\left\| \bu_{q Q_q j} \right\|_2^2 - q (q + 1) (q + 2)}{\sqrt{12 q (q + 1) (q + 2)}} \right\} .
\end{align*}
Clearly, $ G( \cdot ) $ and $ G_q( \cdot ) $ are continuous functions. From these observations, we derive the asymptotic null distributions of the test statistics $ \max_{q, i} \tilde{b}_{1,q,i} $ for the MaxS test and $ \max_{i} \tilde{b}_{1, q_0, i} $ for the $ \text{MaxS}_{q_0} $ test in the next theorem.
\begin{theorem}\label{thm:skewness2}
Let $ \bOmega $ and $ \bOmega_q $ be the dispersion matrices of $ \bU_1 $ and $ \bU_{q1} $, respectively. Let $ \bW $ and $ \bW_{q_0} $ be zero-mean Gaussian random vectors with dispersion matrices $ \bOmega $ and $ \bOmega_{q_0} $, respectively. Assume $ \text{E}( \| \mathbf{X} \|^6 ) < \infty $ and the distribution of $ \mathbf{X} $ is elliptical.
Then,
$ \max_{q, i} n \tilde{b}_{1,q,i} \stackrel{d}{\rightarrow} G( \bW ) $ and
$ \max_{i} n \tilde{b}_{1, q_0, i} \stackrel{d}{\rightarrow} G_{q_0}( \bW_{q_0} )$
as $ n \to \infty $.
\end{theorem}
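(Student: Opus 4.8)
The plan is to reduce the $\text{MaxS}$ statistic to a continuous functional of the sample mean of i.i.d.\ random vectors and then apply the multivariate central limit theorem together with the continuous mapping theorem and Slutsky's lemma. First I would invoke Theorem~\ref{thm:skewness1}: under the hypotheses that $\bX$ is elliptical and $\E(\|\bX\|^6)<\infty$, for every pair $(q,i)$ one has $n\,b_{1,q,i}=\|\sqrt{n}\,\bar{\bu}_{qi}\|_2^2+o_P(1)$ with $\bar{\bu}_{qi}=n^{-1}\sum_{j=1}^n\bu_{qij}$. Because there are only finitely many pairs $(q,i)$ (exactly $2^p-1$ of them, and $Q_{q_0}$ of them in the $\text{MaxS}_{q_0}$ case), this collection of $o_P(1)$ remainders is jointly $o_P(1)$.

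Next I would establish a joint central limit theorem for the stacked summands $\bU_j$. These vectors are i.i.d., each a fixed measurable function of the corresponding $\bX_j$, and they are centered: $\E\{f_{qik}(\bX_{qij})\}=0$ for $q>1$ by Theorem~\ref{thm:skewness1}, while for $q=1$ the ellipticity (hence symmetry) of $\bX$ forces $\E[(\bX_{1ij}-\bmu_{1i})\{(\bX_{1ij}-\bmu_{1i})^2-3\sigma_{1i}^2\}]=0$. They possess a finite covariance matrix $\bOmega$: for $q>1$ every coordinate of $\bu_{qij}$ lies in $L_2[\bX_{qi}]$ by construction of the eigenfunctions, whereas the $q=1$ coordinates have squares involving sixth powers of centered $\bX$, which are integrable precisely because $\E(\|\bX\|^6)<\infty$. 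The classical multivariate CLT then gives $\sqrt{n}\,\bar{\bU}=n^{-1/2}\sum_{j=1}^n\bU_j\stackrel{d}{\rightarrow}\bW\sim N(\0,\bOmega)$, and likewise $\sqrt{n}\,\bar{\bU}_{q_0}\stackrel{d}{\rightarrow}\bW_{q_0}\sim N(\0,\bOmega_{q_0})$.

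To finish, I would note that, by the definitions of $\tilde{b}_{1,q,i}$ in \autoref{sec:estimation} and of $G$, the $(q,i)$ coordinate of the centered and scaled vector equals $\{\|\sqrt{n}\,\bar{\bu}_{qi}\|_2^2-q(q+1)(q+2)\}/\sqrt{12q(q+1)(q+2)}+o_P(1)$, so, because $\bx\mapsto\max_{q,i}x_{qi}$ is $1$-Lipschitz in the sup-norm, $\max_{q,i}\tilde{b}_{1,q,i}=G(\sqrt{n}\,\bar{\bU})+o_P(1)$, where $G(\cdot)$ is read as the deterministic function displayed above evaluated at the argument $\sqrt{n}\,\bar{\bU}$. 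Since $G$ is continuous, the continuous mapping theorem gives $G(\sqrt{n}\,\bar{\bU})\stackrel{d}{\rightarrow}G(\bW)$, and Slutsky's lemma yields $\max_{q,i}\tilde{b}_{1,q,i}\stackrel{d}{\rightarrow}G(\bW)$; the identical argument with $G_{q_0}$ and $\bU_{q_0,j}$ in place of $G$ and $\bU_j$ gives $\max_i\tilde{b}_{1,q_0,i}\stackrel{d}{\rightarrow}G_{q_0}(\bW_{q_0})$. The step I expect to demand the most care is the second one: confirming that $\bU_j$ genuinely has finite coordinatewise second moments, hence a bona fide covariance matrix $\bOmega$, even though the blocks $\bu_{qij}$ for different $(q,i)$ are strongly dependent (all built from the same $\bX_j$), with the $q=1$ blocks being exactly what tie the moment requirement to the stated assumption $\E(\|\bX\|^6)<\infty$; the remainder is a routine chaining of the central limit theorem, the continuous mapping theorem, and Slutsky's lemma, valid because the maximum is taken over a finite index set.
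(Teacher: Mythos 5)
Your proposal is correct and follows essentially the same route as the paper: linearize via \autoref{thm:skewness1}, apply the multivariate central limit theorem to the stacked i.i.d.\ centered vectors $\bU_j$, and conclude with the continuous mapping theorem. If anything, you are slightly more careful than the paper's own proof, which treats $\max_{q,i} n \tilde{b}_{1,q,i} = G(\sqrt{n}\,\bar{\bU})$ as an exact identity rather than an equality up to the $o_P(1)$ remainder inherited from the linearization, and which does not spell out the finiteness of $\bOmega$ or the centering of the $q=1$ blocks---points you close explicitly with Slutsky's lemma and the moment assumption $\text{E}(\|\bX\|^6)<\infty$.
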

\begin{proof}[\textbf{\upshape Proof:}]
Since we have $ \text{E}\{ f_{qik}( \bX_{qij} ) \} = 0 $ for all $ q, i, j $ and $ k $ from \autoref{thm:skewness1}, it follows that $ \text{E}( \bU_{qj} ) = \bf0 $ and $ \text{E}( \bU_j ) = \bf0 $ for all $ j $. Further, the distributions of $ \bU_{qj} $ are independent and identical for all $ j $, and the same is true for the distributions of $ \bU_{j} $. Define $ \bar{\bU} = n^{-1} \sum_{j=1}^{n} \bU_j $ and $ \bar{\bU}_q = n^{-1} \sum_{j=1}^{n} \bU_{qj} $. It follows from the multivariate central limit theorem that
$ \sqrt{n} \bar{\bU} \stackrel{d}{\rightarrow} \bW $ and $ \sqrt{n} \bar{\bU}_{q_0} \stackrel{d}{\rightarrow} \bW_{q_0} $ as $ n \to \infty $.
Now, note that $ \max_{q, i} n \tilde{b}_{1,q,i} = G( \sqrt{n} \bar{\bU} ) $ and $ \max_{i} n \tilde{b}_{1, q_0, i} = \sqrt{n} \bar{\bU}_{q_0} $. Since $ G( \cdot ) $ and $ G_q( \cdot ) $ are continuous, the proof of the theorem is completed from an application of the continuous mapping theorem.
\end{proof}

We implement the tests of skewness under Gaussianity of the null hypotheses.
To derive the p-values of the tests of the hypotheses described in \eqref{h_skewness} and \eqref{h_skewness_q0}, we need to first estimate $ \bOmega $ and $ \bOmega_{q_0} $, the dispersion matrices of $ \bU_j $ and $ \bU_{q_0 j} $, respectively.
However, the random vectors $ \bU_j $ and $ \bU_{q_0 j} $ are constituted of $\bu_{qij}$, whose definition involve unknown population quantities $ \sigma_i $, $ \mu_i $, $ \lambda_{qik} $ and $ f_{qik}( \bX_{qij} ) $.
So, we substitute $\bu_{qij}$ in $ \bU_j $ and $ \bU_{q_0 j} $ by $\widehat{\bu}_{qij}$ for all $q, i \text{ and } j$ to form $ \widehat{\bU}_j $ and $ \widehat{\bU}_{q_0 j} $, respectively, where $\widehat{\bu}_{qij}$ is constructed based on the sample observations. The construction procedure of $\widehat{\bu}_{qij}$ is described below.

Let $\bar{X}_i$ and $s_i^2$ denote the sample mean and sample variance, respectively, of the $i$-th components $X_{ij}$ of the observation vectors $\bX_j$, where $j \in \{1, \ldots, n\}$ and $i \in \{1, \ldots, p\}$.
When $ q = 1 $, $\bu_{qij}$ is a univariate random variable, and it can be verified that under Gaussianity, $ \text{Var}( \bu_{qij} ) = 6 $ for all $ i \text{ and } j $.
For $ q = 1 $, we construct $\widehat{\bu}_{qij}$ ensuring that its sample variance is also equal to $6$. We first compute $ \tilde{u}_{ij} = s_i^{-3} ( X_{ij} - \bar{X}_i) \{( X_{ij} - \bar{X}_i)^2 - 3 s_i^{2}\} $ for all $j$, and then set $ \widehat{\bu}_{qij} = \left\{ \sqrt{6} / \tilde{s}_i \right\} \tilde{u}_{ij} $, where $ \tilde{s}_i^2 $ denotes the sample variance of $ \tilde{u}_{ij} $. In this way, it is ensured that the sample variance of $\widehat{\bu}_{qij}$ is equal to $ \text{Var}( \bu_{qij} ) $ under Gaussianity for all $i \text{ and } j$ when $q = 1$.
Next, for $q > 1$, $ \lambda_{qik} $ is derived under Gaussianity, and it can be verified that $ \lambda_{qik} = 6 $ for all $ q, i, k $ in this case.
Recall that the function $ f_{qik}( \cdot ) $ is the eigenfunction of the integral operator $ \mathbf{h}_{qi} $ corresponding to the eigenvalue $ \lambda_{qik} $. From the arguments in Theorem 2.2 in \cite{baringhaus1992limit}, it follows that $ f_{qik}( \cdot ) $ are spherical harmonic functions (see \cite{erdelyi1953higher}). Explicit expressions of these spherical harmonic functions can be derived (see \cite{higuchi1987symmetric}, \cite{blanco1997evaluation}). However, their numerical approximation may be unstable due to the involvement of hypergeometric functions (see \cite[p.~1554]{higuchi1987symmetric}). For this reason and ease of computation, the random variables $ f_{qik}( \bX_{qij} ) $ in the definition of $\bu_{qij}$ are substituted in the following way. First, we form the $ n \times n $ matrix $ \widehat{{\bf H}}_{qi} = ( \widehat{h}_{qi}( \bX_{qij}, \bX_{qil} )) $, where:
\begin{align*}
\widehat{h}_{qi}( \bX_{qij}, \bX_{qil} ) 
&= \left\{ ( \bX_{qij} - \bar{\bX}_{qi} )^\top \mathbf{S}_{qi}^{-1} ( \bX_{qil} - \bar{\bX}_{qi} ) \right\}^3+ 3 (q + 2) \left\{ ( \bX_{qij} - \bar{\bX}_{qi} )^\top \mathbf{S}_{qi}^{-1} ( \bX_{qil} - \bar{\bX}_{qi} ) \right\} \nonumber \\
& \quad - 3 ( \bX_{qij} - \bar{\bX}_{qi} )^\top \mathbf{S}_{qi}^{-1} ( \bX_{qij} - \bar{\bX}_{qi} ) ( \bX_{qij} - \bar{\bX}_{qi} )^\top \mathbf{S}_{qi}^{-1} ( \bX_{qil} - \bar{\bX}_{qi} ) \nonumber\\
& \quad - 3 ( \bX_{qil} - \bar{\bX}_{qi} )^\top \mathbf{S}_{qi}^{-1} ( \bX_{qil} - \bar{\bX}_{qi} ) ( \bX_{qij} - \bar{\bX}_{qi} )^\top \mathbf{S}_{qi}^{-1} ( \bX_{qil} - \bar{\bX}_{qi} ). \nonumber
\end{align*}
We then compute the $ K(q) $ eigenvectors of the matrix $ \widehat{{\bf H}}_{qi} $ corresponding to the $ K(q) $ eigenvalues with the largest magnitudes, and arrange them by the descending order of the magnitudes of their corresponding eigenvalues. Each computed eigenvector is multiplied by $ \sqrt{n} $ to maintain its correspondence with the eigenfunctions of the integral operator $ \mathbf{h}_{qi} $.
Each such vector obtained, denoted as $ \widehat{f}_{qik} $, substitutes the vector $ ( f_{qik}( \bX_{qi1} ), \ldots, f_{qik}( \bX_{qin} ) )^\top $.
From the components of the vectors $ \widehat{f}_{qik} $, we compute $\widehat{\bu}_{qij}$ analogous to how $\bu_{qij}$ is defined using the components of the vectors $ ( f_{qik}( \bX_{qi1} ), \ldots, f_{qik}( \bX_{qin} ) )^\top $ and take $ \lambda_{qik} = 6 $. Then, we construct the vectors $ \widehat{\bU}_j $ and $ \widehat{\bU}_{q_0 j} $ using $\widehat{\bu}_{qij}$ in place of $\bu_{qij}$ in $ \bU_j $ and $ \bU_{q_0 j} $, respectively. Finally, the dispersion matrices $ \bOmega $ and $ \bOmega_{q_0} $ are estimated by:
\begin{align*}
 \widehat{\bOmega} &= (n - 1)^{-1} \sum_{j=1}^{n} \left( \widehat{\bU}_j - n^{-1} \sum_{l=1}^{n} \widehat{\bU}_l \right) \left( \widehat{\bU}_j - n^{-1} \sum_{l=1}^{n} \widehat{\bU}_l \right)^\top , \\
 \widehat{\bOmega}_{q_0} &= (n - 1)^{-1} \sum_{j=1}^{n} \left( \widehat{\bU}_{q_0 j} - n^{-1} \sum_{l=1}^{n} \widehat{\bU}_{q_0 l} \right) \left( \widehat{\bU}_{q_0 j} - n^{-1} \sum_{l=1}^{n} \widehat{\bU}_{q_0 l} \right)^\top .
\end{align*}

To compute the p-values of the MaxS test for \eqref{h_skewness}, we generate 1000 independent zero-mean Gaussian random vectors $ \tilde{\bW}_1, \ldots, \tilde{\bW}_{1000} $ with the dispersion matrix $ \widehat{\bOmega} $. The proportion of the values $ G( \tilde{\bW}_1 ), \ldots, G( \tilde{\bW}_{1000} ) $ larger than $ \max_{q, i} n \tilde{b}_{1,q,i} $ is taken as the p-value of the null hypothesis in \eqref{h_skewness}.
The p-value of the test $ \text{MaxS}_{q_0} $ in \eqref{h_skewness_q0} is derived similarly.

\subsection{Kurtosis}

Next, we derive the asymptotic null distributions of $ \max_{q, i} | \tilde{b}_{2,q,i}| $ and of $ \max_{i}| \tilde{b}_{2, q_0, i}| $. Here also, we first derive a linearization of $ b_{2,q,i} $.

\begin{theorem}\label{thm:kurtosis1}
Let $ \mathbf{X}_1, \ldots, \mathbf{X}_n $ be independent and identically distributed random vectors in $ \mathbb{R}^p $ with $ \text{E}( \| \mathbf{X} \|^8 ) < \infty $.
Let $ \bA_{qi} = \bSigma_{qi}^{-1} \text{E}\{( \bX_{qi} - \bmu_{qi})( \bX_{qi} - \bmu_{qi})^\top \bSigma_{qi}^{-1}( \bX_{qi} - \bmu_{qi})( \bX_{qi} - \bmu_{qi})^\top \} $.
Define:
\begin{align*}
\bZ_{qij} =
\begin{bmatrix}
\left\{ \left( \bX_{qij} - \bmu_{qi} \right)^\top \bSigma_{qi}^{-1} \left( \bX_{qij} - \bmu_{qi} \right) \right\}^2 - \text{E}\left[ \left\{ \left( \bX_{qi} - \bmu_{qi} \right)^\top \bSigma_{qi}^{-1} \left( \bX_{qi} - \bmu_{qi} \right) \right\}^2 \right]  \\
\left( \bX_{qij} - \bmu_{qi} \right)^\top \bA_{qi} \bSigma_{qi}^{-1} \left( \bX_{qij} - \bmu_{qi} \right) - \text{E}\left\{ \left( \bX_{qij} - \bmu_{qi} \right)^\top \bA_{qi} \bSigma_{qi}^{-1} \left( \bX_{qij} - \bmu_{qi} \right) \right\} \\
\bX_{qij} - \bmu_{qi}
\end{bmatrix},
\end{align*}
and
\begin{align*}
\ba_{qi} = \left( 1,-2,-4 \text{E}\left[ \left\{ \left( \bX_{qi} - \bmu_{qi} \right)^\top \bSigma_{qi}^{-1} \left( \bX_{qi} - \bmu_{qi} \right) \right\} \bSigma_{qi}^{-1} \left( \bX_{qi} - \bmu_{qi} \right) \right] \right)^{\top} .
\end{align*}
Then,
$ n^{1/2}(b_{2,q,i}-\beta_{2,q,i}) = n^{-1/2}\sum_{j=1}^n \ba_{qi}^{\top} \bZ_{qij} + o_P(1) $
as $ n \to \infty $.
\end{theorem}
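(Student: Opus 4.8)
The plan is to fix the pair $(q,i)$, drop it from the notation, and centre everything at the true parameters by setting $\bY_j = \bX_{qij} - \bmu_{qi}$, $\bar{\bY} = \bar{\bX}_{qi} - \bmu_{qi} = n^{-1}\sum_{j=1}^{n}\bY_j = O_P(n^{-1/2})$, and $Q_j = \bY_j^\top\bSigma_{qi}^{-1}\bY_j$, the squared Mahalanobis distance built from the true $\bmu_{qi},\bSigma_{qi}$, so that $\beta_{2,q,i} = \text{E}(Q_1^2)$. The two perturbation facts I would record first are $\bS_{qi} - \bSigma_{qi} = n^{-1}\sum_{j}(\bY_j\bY_j^\top - \bSigma_{qi}) + O_P(n^{-1}) = O_P(n^{-1/2})$ (the $O_P(n^{-1})$ term absorbing $-\bar{\bY}\bar{\bY}^\top$ as well as the difference between the $n^{-1}$ and $(n-1)^{-1}$ normalizations), and, via $\bS_{qi}^{-1} - \bSigma_{qi}^{-1} = -\bSigma_{qi}^{-1}(\bS_{qi}-\bSigma_{qi})\bS_{qi}^{-1}$, the expansion $\bS_{qi}^{-1} = \bSigma_{qi}^{-1} - \bSigma_{qi}^{-1}(\bS_{qi}-\bSigma_{qi})\bSigma_{qi}^{-1} + R_n$ with $\|R_n\| = O_P(n^{-1})$ (here $\bS_{qi}^{-1} = O_P(1)$ since $\bS_{qi}$ converges to the positive definite matrix $\bSigma_{qi}$). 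Substituting this, together with $(\bY_j - \bar{\bY})^\top\bB(\bY_j-\bar{\bY}) = \bY_j^\top\bB\bY_j - 2\bar{\bY}^\top\bB\bY_j + \bar{\bY}^\top\bB\bar{\bY}$, into $W_j := (\bX_{qij}-\bar{\bX}_{qi})^\top\bS_{qi}^{-1}(\bX_{qij}-\bar{\bX}_{qi})$ yields $W_j = Q_j - 2\bar{\bY}^\top\bSigma_{qi}^{-1}\bY_j - \bY_j^\top\bSigma_{qi}^{-1}(\bS_{qi}-\bSigma_{qi})\bSigma_{qi}^{-1}\bY_j + r_j$, where $r_j$ collects terms whose averaged contributions to $b_{2,q,i}$ below are $O_P(n^{-1})$, using only $\text{E}(\|\bX\|^4) < \infty$, $\|\bar{\bY}\| = O_P(n^{-1/2})$ and $\|\bS_{qi}-\bSigma_{qi}\| = O_P(n^{-1/2})$.

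Next I would square and average. Writing $L_j = -2\bar{\bY}^\top\bSigma_{qi}^{-1}\bY_j - \bY_j^\top\bSigma_{qi}^{-1}(\bS_{qi}-\bSigma_{qi})\bSigma_{qi}^{-1}\bY_j$, we get $b_{2,q,i} = n^{-1}\sum_j W_j^2 = n^{-1}\sum_j Q_j^2 + 2 n^{-1}\sum_j Q_j L_j + n^{-1}\sum_j L_j^2 + (\text{residual sums})$. For the quadratic part, Cauchy--Schwarz with $n^{-1}\sum_j\|\bY_j\|^4 = O_P(1)$ gives $n^{-1}\sum_j(\bar{\bY}^\top\bSigma_{qi}^{-1}\bY_j)^2 = \bar{\bY}^\top\bSigma_{qi}^{-1}\big(n^{-1}\sum_j\bY_j\bY_j^\top\big)\bSigma_{qi}^{-1}\bar{\bY} = O_P(n^{-1})$, $n^{-1}\sum_j\{\bY_j^\top\bSigma_{qi}^{-1}(\bS_{qi}-\bSigma_{qi})\bSigma_{qi}^{-1}\bY_j\}^2 \le \|\bSigma_{qi}^{-1}(\bS_{qi}-\bSigma_{qi})\bSigma_{qi}^{-1}\|^2\, n^{-1}\sum_j\|\bY_j\|^4 = O_P(n^{-1})$, and likewise for the mixed quadratic term and for the residual sums; after multiplication by $n^{1/2}$ these are all $o_P(1)$. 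The leading term is exact: $n^{-1}\sum_j Q_j^2 = \beta_{2,q,i} + n^{-1}\sum_j(Q_j^2 - \text{E}(Q_j^2))$, and this is where the hypothesis $\text{E}(\|\bX\|^8) < \infty$ is really used, since it is precisely what makes $n^{-1/2}\sum_j(Q_j^2 - \text{E}(Q_j^2)) = O_P(1)$; the summand $Q_j^2 - \text{E}(Q_j^2)$ is the first coordinate of $\ba_{qi}^\top\bZ_{qij}$.

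It then remains to identify $2n^{-1}\sum_j Q_j L_j$, which splits as $-4 n^{-1}\sum_j Q_j\,\bar{\bY}^\top\bSigma_{qi}^{-1}\bY_j$ plus $-2 n^{-1}\sum_j Q_j\,\bY_j^\top\bSigma_{qi}^{-1}(\bS_{qi}-\bSigma_{qi})\bSigma_{qi}^{-1}\bY_j$. For the first, writing it as $-4\big(n^{-1}\sum_j Q_j\bY_j\big)^\top\bSigma_{qi}^{-1}\bar{\bY}$ and using the law of large numbers $n^{-1}\sum_j Q_j\bY_j \to \text{E}(Q_1\bY_1)$, it becomes $n^{-1}\sum_j\big(-4\,\text{E}[Q_1\bSigma_{qi}^{-1}\bY_1]\big)^\top\bY_j + o_P(n^{-1/2})$, whose coefficient is exactly the third block of $\ba_{qi}$ and whose summand $\bY_j$ is the third block of $\bZ_{qij}$. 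For the second, using $\bY_j^\top\bB\bY_j = \text{tr}(\bB\bY_j\bY_j^\top)$ rewrite it as $-2\,\text{tr}\big[\bSigma_{qi}^{-1}(\bS_{qi}-\bSigma_{qi})\bSigma_{qi}^{-1}\big(n^{-1}\sum_j Q_j\bY_j\bY_j^\top\big)\big]$, then replace $n^{-1}\sum_j Q_j\bY_j\bY_j^\top$ by its limit $\bM := \text{E}(Q_1\bY_1\bY_1^\top)$ (an $o_P(1)\cdot O_P(n^{-1/2})$ error) and $\bS_{qi}-\bSigma_{qi}$ by $n^{-1}\sum_j(\bY_j\bY_j^\top-\bSigma_{qi})$, obtaining $-2 n^{-1}\sum_j\big[\bY_j^\top\bSigma_{qi}^{-1}\bM\bSigma_{qi}^{-1}\bY_j - \text{tr}(\bSigma_{qi}^{-1}\bM)\big] + o_P(n^{-1/2})$. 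Since $\bY_1^\top\bSigma_{qi}^{-1}\bY_1$ is a scalar, the expectation in the definition of $\bA_{qi}$ equals $\text{E}\{\bY_1\bY_1^\top\bSigma_{qi}^{-1}\bY_1\bY_1^\top\} = \text{E}(Q_1\bY_1\bY_1^\top) = \bM$, hence $\bA_{qi} = \bSigma_{qi}^{-1}\bM$, so $\bY_j^\top\bSigma_{qi}^{-1}\bM\bSigma_{qi}^{-1}\bY_j = \bY_j^\top\bA_{qi}\bSigma_{qi}^{-1}\bY_j$ and $\text{tr}(\bSigma_{qi}^{-1}\bM) = \text{tr}(\bA_{qi}) = \text{E}\{\bY_1^\top\bA_{qi}\bSigma_{qi}^{-1}\bY_1\}$; thus the second cross term equals $-2 n^{-1}\sum_j(\text{second block of }\bZ_{qij}) + o_P(n^{-1/2})$, matching the entry $-2$ of $\ba_{qi}$. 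Summing the three pieces and multiplying by $n^{1/2}$ gives $n^{1/2}(b_{2,q,i}-\beta_{2,q,i}) = n^{-1/2}\sum_{j=1}^{n}\ba_{qi}^\top\bZ_{qij} + o_P(1)$. The main obstacle is not any single estimate but the uniform-in-$j$ bookkeeping certifying that every term outside the three leading averages is $o_P(n^{-1/2})$; in particular one must check that the cross term built from $\bS_{qi}-\bSigma_{qi}$, itself an average, interacts with $n^{-1}\sum_j Q_j\bY_j\bY_j^\top$ only through its mean $\bM$ and does not generate an additional first-order sum.
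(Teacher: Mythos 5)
Your proposal is correct and follows essentially the same route as the paper's proof: a first-order expansion of $\bS_{qi}^{-1}$ around $\bSigma_{qi}^{-1}$, expansion of the centred quadratic form, and identification of the three leading sums (the exact $Q_j^2$ term, the $\bar{\bX}_{qi}$ term giving the third block of $\ba_{qi}^\top\bZ_{qij}$, and the $\bS_{qi}-\bSigma_{qi}$ trace term giving the second block via $\bA_{qi}=\bSigma_{qi}^{-1}\text{E}(Q_1\bY_1\bY_1^\top)$), with all remaining terms shown to be $o_P(n^{-1/2})$. The only difference is organizational — you linearize the quadratic form before squaring, whereas the paper squares first and then substitutes the inverse expansion — which does not change the substance of the argument.
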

\begin{proof}[\textbf{\upshape Proof:}]
Recall that $\beta_{2,q,i}$ and $b_{2,q,i}$ are invariant under location transformations. So, without loss of generality, we can assume $\bmu = \text{E}(\bX)=\0$, which means $\bmu_{qi}=\0$ and $ \text{Var}( \bX )=\text{E}( \bX\bX^\top ) = \bSigma$.

The arguments are similar to those in the proof of Theorem 2.1 in \cite{henze1994mardia}.
In this proof, $\bY_n=O_P(b_n)$ means that the sequence $\bY_n/b_n$ is bounded in Euclidean/matrix norm, while $\bY_n=o_P(b_n)$ means that $b_n^{-1}\bY_n \stackrel{\text{P}}{\to}0$.
Let:
\begin{align*}
\bB_{q i} = n^{1/2} \left\{ n^{-1} \sum_{j=1}^{n} \left( \bX_{qij} - \bmu_{qi} \right) \left( \bX_{qij} - \bmu_{qi} \right)^\top - \bSigma_{qi} \right\} .
\end{align*}
From the weak law of large numbers, we get
$ \bB_{q i} = O_P( 1 ) $ for all $ q, i $.
Also:
\begin{align}
n^{1/2} \left( \bS_{qi} - \bSigma_{qi} \right) = \bB_{q i} - n^{1/2} \left( \bar{\bX}_{qi} - \bmu_{qi} \right) \left( \bar{\bX}_{qi} - \bmu_{qi} \right)^\top .
\label{kurteq1}
\end{align}
From the multivariate central limit theorem, we have that $ n^{1/2} \left( \bar{\bX}_{qi} - \bmu_{qi} \right) \left( \bar{\bX}_{qi} - \bmu_{qi} \right)^\top = O_P( n^{-1/2} ) $, which, applied to \eqref{kurteq1}, yields:
\begin{align*}
& \bS_{qi} = \bSigma_{qi} + n^{-1/2} \bB_{q i} + O_P( n^{-1} ) , \text{or, }
\bSigma_{qi}^{-1} \bS_{qi} = \bI_{q} + n^{-1/2} \bSigma_{qi}^{-1} \bB_{q i} + O_P( n^{-1} ) .
\end{align*}

Note that
 $n^{1/2}(\bS_{qi} - \bSigma_{qi}) =\bB_{qi} - n^{1/2} \bar{\bX}_{qi} \bar{\bX}_{qi}^{\top}$
and $n^{1/2} \bar{\bX}_{qi} \bar{\bX}_{qi}^{\top} = O_P(n^{-1/2})$ by the multivariate central limit theorem. It follows that:
$$ \bS_{qi} = \bSigma_{qi} + n^{-1/2} \bB_{qi} + O_P(n^{-1}),$$
and thus:
$$\bSigma_{qi}^{-1} \bS_{qi} = \bI_{qi} + n^{-1/2} \bSigma_{qi}^{-1}\bB_{qi} + O_P(n^{-1}).$$
This means $ \bSigma_{qi}^{-1} \bS_{qi} $ is invertible for all sufficiently large $ n $ with probability approaching 1, and we have:
$$(\bSigma_{qi}^{-1} \bS_{qi})^{-1} = \bI_{qi} - n^{-1/2} \bSigma_{qi}^{-1}\bB_{qi} + O_P(n^{-1}),$$
which implies that:
\begin{equation} \label{EQ1}
\bS_{qi}^{-1} = \bSigma_{qi}^{-1} - n^{-1/2} \bSigma_{qi}^{-1}\bB_{qi}\bSigma_{qi}^{-1} + O_P(n^{-1}).
\end{equation}
Now,
\begin{align}
 \{(\bX_{qij}-\bar{\bX}_{qi})^\top \bS_{qi}^{-1}(\bX_{qij}-\bar{\bX}_{qi})\}^2
&= (\bX_{qij}^\top \bS_{qi}^{-1}\bX_{qij})^2 +4 (\bar{\bX}_{qi}^\top \bS_{qi}^{-1}\bX_{qij})^2 + (\bar{\bX}_{qi}^\top \bS_{qi}^{-1} \bar{\bX}_{qi})^2 \nonumber \\
& \quad - 4 (\bX_{qij}^\top \bS_{qi}^{-1}\bX_{qij} \bar{\bX}_{qi}^\top \bS_{qi}^{-1} \bX_{qij}) + 2 (\bX_{qij}^\top \bS_{qi}^{-1} \bX_{qij} \bar{\bX}_{qi}^\top \bS_{qi}^{-1} \bar{\bX}_{qi}) \nonumber \\
& \quad -  4 (\bar{\bX}_{qi}^\top \bS_{qi}^{-1} \bX_{qij} \bar{\bX}_{qi}^\top \bS_{qi}^{-1} \bar{\bX}_{qi}). \label{EQ2}
\end{align}

Using this and the expression for the inverse in \eqref{EQ1} above, we get for the first term in \eqref{EQ2} to be:
$$ \bX_{qij}^\top \bS_{qi}^{-1}\bX_{qij} = \bX_{qij}^\top \bSigma_{qi}^{-1}\bX_{qij} - n^{-1/2} \bX_{qij}^\top \bSigma_{qi}^{-1}\bB_{qi}\bSigma_{qi}^{-1}\bX_{qij} + O_P(n^{-1}).$$
Squaring both sides, we get:
\begin{align*}
(\bX_{qij}^\top \bS_{qi}^{-1}\bX_{qij})^2 
&= (\bX_{qij}^\top \bSigma_{qi}^{-1}\bX_{qij})^2 - 2 n^{-1/2} \bX_{qij}^\top \bSigma_{qi}^{-1}\bX_{qij} \bX_{qij}^\top \bSigma_{qi}^{-1}\bB_{qi}\bSigma_{qi}^{-1} \bX_{qij} +  O_P(n^{-1}).
\end{align*}
Using the fact that the trace of a matrix is invariant under cyclic permutations, we have:
\begin{align*}
\text{tr}(\bX_{qij}^\top \bSigma_{qi}^{-1}\bX_{qij} \bX_{qij}^\top \bSigma_{qi}^{-1}\bB_{qi}\bSigma_{qi}^{-1}\bX_{qij}) = \text{tr}\{\bSigma_{qi}^{-1}\bB_{qi} (\bSigma_{qi}^{-1} \bX_{qij} \bX_{qij}^\top) (\bSigma_{qi}^{-1} \bX_{qij} \bX_{qij}^\top)\}.
\end{align*}
Since $n^{-1} \sum_{j=1}^n ( \bSigma_{qi}^{-1}\bX_{qij} \bX_{qij}^\top) ( \bSigma_{qi}^{-1}\bX_{qij} \bX_{qij}^\top) = \bA_{qi} + o_P(1)$ with $ \bA_{qi}= \text{E}\{ (\bSigma_{qi}^{-1}\bX_{qi} \bX_{qi}^\top) (\bSigma_{qi}^{-1}\bX_{qi} \bX_{qi}^\top) \} $, we now get:
\begin{align*}
\frac{1}{n} \sum_{j=1}^n (\bX_{qij}^\top \bS_{qi}^{-1}\bX_{qij})^2 = \frac{1}{n} \sum_{j=1}^n (\bX_{qij}^\top \bSigma_{qi}^{-1}\bX_{qij})^2 - 2 n^{-1/2} \text{tr}(\bSigma_{qi}^{-1}\bB_{qi} \bA_{qi}) + o_P(n^{-1}) .
\end{align*}

For the second term in \eqref{EQ2}, again using equation \eqref{EQ1} and the fact that $n^{-1} \sum_{j=1}^n \bX_{qij} \bX_{qij}^\top = O_P(1)$ (using the weak law of large numbers) we obtain:
\begin{align*}
\frac{1}{n} \sum_{j=1}^n (\bar{\bX}_{qi}^\top \bS_{qi}^{-1} \bX_{qij})^2 = O_P(n^{-1/2}).
\end{align*}
Furthermore, for the third term in \eqref{EQ2}:
$
(\bar{\bX}_{qi}^\top \bS_{qi}^{-1} \bar{\bX}_{qi})^2 = O_P(n^{-2})
$.
Since, $n^{-1} \sum_{j=1}^n \bSigma_{qi}^{-1} \bX_{qij} (\bX_{qij}^\top \bSigma_{qi}^{-1} \bX_{qij}) = \tilde{\ba}_{qi} + o_P(1)$ with $ \tilde{\ba}_{qi} = \text{E}\{ \bSigma_{qi}^{-1} \bX_{qi} (\bX_{qi}^\top \bSigma_{qi}^{-1} \bX_{qi}) \} $, it is easy to see that for the fourth term:
\begin{align*}
\frac{1}{n} \sum_{j=1}^n (\bX_{qij}^\top \bS_{qi}^{-1} \bX_{qij}) (\bar{\bX}_{qi}^\top \bS_{qi}^{-1} \bX_{qij}) = \tilde{\ba}_{qi}^\top \bar{\bX}_{qi} + O_P(n^{-1}).
\end{align*}
Finally, for the last two terms:
\begin{align*}
& \frac{1}{n} \sum_{j=1}^n (\bX_{qij}^\top \bS_{qi}^{-1} \bX_{qij})(\bar{\bX}_{qi}^\top \bS_{qi}^{-1} \bar{\bX}_{qi}) = O_P(n^{-1}) 
\;\text{and}\;
 \frac{1}{n} \sum_{j=1}^n (\bar{\bX}_{qi}^\top \bS_{qi}^{-1} \bX_{qij})(\bar{\bX}_{qi}^\top \bS_{qi}^{-1} \bar{\bX}_{qi}) = O_P(n^{-1}).
\end{align*}

Summarizing, we obtain:
\begin{align*}
b_{2,q,i} =  \frac{1}{n} \sum_{j=1}^n (\bX_{qij}^\top \bSigma_{qi}^{-1}\bX_{qij})^2 - 2 n^{-1/2} \text{tr}(\bSigma_{qi}^{-1}\bB_{qi} \bA_{qi}) - 4\tilde{\ba}_{qi}^\top \bar{\bX}_{qi} + O_P(n^{-1}).
\end{align*}
Observing that
\begin{align*}
\text{tr}(\bSigma_{qi}^{-1}\bB_{qi} \bA_{qi}) 
&= \text{tr}(\bB_{qi} \bA_{qi}\bSigma_{qi}^{-1}) 
= n^{1/2} \left\{ \frac{1}{n} \sum_{j=1}^n \bX_{qij}^\top \bA_{qi}\bSigma_{qi}^{-1} \bX_{qij} - \text{E}( \bX_{qi}^\top \bA_{qi}\bSigma_{qi}^{-1} \bX_{qi} ) \right\}
\end{align*}
we obtain
\begin{align*}
n^{1/2}(b_{2,q,i}-\beta_{2,q,i}) = n^{-1/2}\sum_{j=1}^n  \ba_{qi}^{\top} \bZ_{qij} + o_P(1),
\end{align*}
where
\begin{align*}
\bZ_{qij} =
\begin{bmatrix}
\left(\bX_{qij}^\top \bSigma_{qi}^{-1} \bX_{qij}\right)^2 - \text{E}\{( \bX_{qi}^\top \bSigma_{qi}^{-1} \bX_{qi} )^2\}  \\
\bX_{qij}^\top \bA_{qi}\bSigma_{qi}^{-1} \bX_{qij} - \text{E}( \bX_{qi}^\top \bA_{qi} \bSigma_{qi}^{-1} \bX_{qi} ) \\
\bX_{qij}
\end{bmatrix}
\end{align*}
is a $(2+q)$-dimensional vector and $\ba_{qi}=(1,-2,-4\tilde{\ba}_{qi}^\top)^{\top}$.
\end{proof}

Tests of kurtosis are usually conducted for testing normality (see \cite{henze1994mardia}). When the underlying distribution is Gaussian, we can simplify the quantities $ \bA_{qi} $ and $ \ba_{qi} $ described in \autoref{thm:kurtosis1}, and consequently, the linearization becomes simpler.
\begin{corollary}\label{coro:kurtosis2}
Let $ \mathbf{X}_1, \ldots, \mathbf{X}_n $ be independent and identically distributed Gaussian random vectors in $ \mathbb{R}^p $.
Define
$
Y_{qij} = \left\{( \bX_{qij} - \bmu_{qi})^\top \bSigma_{qi}^{-1} ( \bX_{qij} - \bmu_{qi} ) \right\}^2 - 2 (q + 2) ( \bX_{qij} - \bmu_{qi} )^\top \bSigma_{qi}^{-1} ( \bX_{qij} - \bmu_{qi}) .
$
Then,
$ n^{1/2}(b_{2,q,i}-\beta_{2,q,i}) = n^{-1/2} \sum_{j=1}^n \{ Y_{qij} - \text{E}( Y_{qij} ) \} + o_P(1) $ as $ n \to \infty $.
\end{corollary}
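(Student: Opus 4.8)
The plan is to obtain \autoref{coro:kurtosis2} directly from the linearization already proved in \autoref{thm:kurtosis1}, by specializing the two population quantities $\bA_{qi}$ and $\bA_{1,q,i}$ to the Gaussian case. First, note that a Gaussian vector has $\text{E}(\|\bX\|^8)<\infty$, so \autoref{thm:kurtosis1} applies and yields $n^{1/2}(b_{2,q,i}-\beta_{2,q,i}) = n^{-1/2}\sum_{j=1}^n \ba_{qi}^\top\bZ_{qij} + o_P(1)$, with $\ba_{qi} = (1,-2,-4\bA_{1,q,i})^\top$ and $\bZ_{qij}$ as displayed there. As in the proof of that theorem, $b_{2,q,i}$ and $\beta_{2,q,i}$ are location invariant, so we may take $\bmu_{qi}=\0$; put $\bW_{qij}=\bSigma_{qi}^{-1/2}\bX_{qij}\sim\mathcal{N}_q(\0_q,\bI_q)$, so that $\bX_{qij}^\top\bSigma_{qi}^{-1}\bX_{qij}=\|\bW_{qij}\|^2$.

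Next I evaluate $\bA_{qi}$. Writing $\bX_{qij}=\bSigma_{qi}^{1/2}\bW_{qij}$ gives $\bA_{qi}=\bSigma_{qi}^{-1}\bSigma_{qi}^{1/2}\,\text{E}\{\|\bW\|^2\bW\bW^\top\}\,\bSigma_{qi}^{1/2}$ for a standard $q$-variate normal $\bW$. A direct fourth-moment computation (using $\text{E}(W_a^4)=3$, $\text{E}(W_a^2W_b^2)=1$ for $a\ne b$, and vanishing odd moments) shows $\text{E}\{\|\bW\|^2\bW\bW^\top\}=(q+2)\bI_q$, hence $\bA_{qi}=(q+2)\bI_q$. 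Consequently the contribution of the second block of $\bZ_{qij}$ becomes $-2\{\bX_{qij}^\top\bA_{qi}\bSigma_{qi}^{-1}\bX_{qij}-\text{E}(\cdot)\}=-2(q+2)\{\bX_{qij}^\top\bSigma_{qi}^{-1}\bX_{qij}-\text{E}(\bX_{qij}^\top\bSigma_{qi}^{-1}\bX_{qij})\}$. Similarly, $\bA_{1,q,i}=\text{E}\{\|\bW\|^2\bSigma_{qi}^{-1/2}\bW\}=\bSigma_{qi}^{-1/2}\text{E}\{\|\bW\|^2\bW\}=\0$, since $\bw\mapsto\|\bw\|^2\bw$ is odd and the standard normal is symmetric; thus the third block of $\bZ_{qij}$ drops out entirely.

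Substituting these two evaluations into $\ba_{qi}^\top\bZ_{qij}$ collapses the linearization to $n^{-1/2}\sum_{j=1}^n\big[\{(\bX_{qij}^\top\bSigma_{qi}^{-1}\bX_{qij})^2-\text{E}(\cdot)\}-2(q+2)\{\bX_{qij}^\top\bSigma_{qi}^{-1}\bX_{qij}-\text{E}(\cdot)\}\big]+o_P(1)$, which, after reinstating $\bmu_{qi}$, is exactly $n^{-1/2}\sum_{j=1}^n\{\bY_{qij}-\text{E}(\bY_{qij})\}+o_P(1)$. No real obstacle is anticipated: the only substantive computation is the Gaussian fourth-moment identity $\text{E}\{\|\bW\|^2\bW\bW^\top\}=(q+2)\bI_q$ together with $\text{E}\{\|\bW\|^2\bW\}=\0$, and everything else is bookkeeping plus the appeal to the location invariance and the linearization already established in \autoref{thm:kurtosis1}.
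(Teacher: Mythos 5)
Your proposal is correct and follows essentially the same route as the paper: both specialize the linearization of \autoref{thm:kurtosis1} to the Gaussian case by whitening with $\bSigma_{qi}^{-1/2}$, using the vanishing of odd Gaussian moments to kill the third block of $\bZ_{qij}$ (so $\bA_{1,q,i}=\0$), and the fourth-moment identity $\text{E}\{\|\bW\|^2\bW\bW^\top\}=(q+2)\bI_q$ to reduce the second block to $(q+2)$ times the quadratic form. The computations match the paper's Equations (6.6) and (6.7) exactly, so there is nothing to add.
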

\begin{proof}[\textbf{\upshape Proof:}]
Define $ \tilde{\bX}_{qij} = \bSigma_{qi}^{-1/2} ( \bX_{qij} - \bmu_{qi}) $. Under the assumption of the corollary, the $ \tilde{\bX}_{qij} $s are independent zero-mean Gaussian random vectors with the identity matrix as their dispersion matrix. It follows that:
\begin{align}
 \text{E}\left[ \left\{ \left( \bX_{qi} - \bmu_{qi} \right)^\top \bSigma_{qi}^{-1} \left( \bX_{qi} - \bmu_{qi} \right) \right\} \bSigma_{qi}^{-1} \left( \bX_{qi} - \bmu_{qi} \right) \right] 
&= \bSigma_{qi}^{-1/2} \text{E}\left\{ \left( \tilde{\bX}_{qi1}^\top \tilde{\bX}_{qi1} \right) \tilde{\bX}_{qi1} \right\} \nonumber \\
&= \bSigma_{qi}^{-1/2} \times \bf0 = \bf0 .
\label{eq:coro1}
\end{align}
Next, we have:
\begin{align}
& \left( \bX_{qij} - \bmu_{qi} \right)^\top \bA_{qi} \bSigma_{qi}^{-1} \left( \bX_{qij} - \bmu_{qi} \right) 
= \tilde{\bX}_{qij}^\top \text{E}\left[ \tilde{\bX}_{qij} \left( \tilde{\bX}_{qij}^\top \tilde{\bX}_{qij} \right) \tilde{\bX}_{qij}^\top \right] \tilde{\bX}_{qij}
= (q + 2) \tilde{\bX}_{qij}^\top \tilde{\bX}_{qij} .
\label{eq:coro2}
\end{align}
The proof follows from \eqref{eq:coro1}, \eqref{eq:coro2} and the linearization in \autoref{thm:kurtosis1}.
\end{proof}

The asymptotic null distributions of $ \max_{q, i} | \tilde{b}_{2,q,i}| $ and $ \max_{i}| \tilde{b}_{2, q_0, i} | $ are derived from the linearization in \autoref{coro:kurtosis2}.
\begin{theorem}
Let $ \mathbf{X}_1, \ldots, \mathbf{X}_n $ be independent and identically distributed Gaussian random vectors in $ \mathbb{R}^p $.
Define:
\begin{align*}
& \tilde{\bY}_j = \left( \frac{Y_{1 1 j}}{\sqrt{24}}, \ldots, \frac{Y_{q i j}}{\sqrt{8 q (q + 2)}}, \ldots, \frac{Y_{p 1 j}}{\sqrt{8 p (p + 2)}} \right)^\top ,\\
& \tilde{\bY}_{q j} = \left( \frac{Y_{q 1 j}}{\sqrt{8 q (q + 2)}}, \ldots, \frac{Y_{q Q_q j}}{\sqrt{8 q (q + 2)}} \right)^\top ,
\end{align*}
where $Y_{q i j}$ is as defined in \autoref{coro:kurtosis2}.
Let $ \bGamma $ and $ \bGamma_q $ be the dispersion matrices of $ \tilde{\bY}_1 $ and $ \tilde{\bY}_{q1} $, respectively. Let $ \bW $ and $ \bW_q $ be zero-mean Gaussian random vectors with dispersion matrices $ \bGamma $ and $ \bGamma_q $, respectively.
Then,
$ \sqrt{n} \max_{q, i} | \tilde{b}_{2,q,i} | \stackrel{d}{\rightarrow} \| \bW \|_\infty $ and $ \sqrt{n} \max_{i} | \tilde{b}_{2, q_0, i}| \stackrel{d}{\rightarrow} \| \bW_q \|_\infty $ as $ n \to \infty $, respectively, where $\| \cdot \|_\infty$ is the $ l_\infty $ norm in the Euclidean space, i.e., $ \| \bv \|_\infty = \max_i | v_i | $ with $v_i$ being the $i$-th component of the vector $\bv$.
\end{theorem}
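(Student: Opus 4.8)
The plan is to read this as the kurtosis analogue of \autoref{thm:skewness2}: reduce each of the two maxima, up to an $ o_P(1) $ term, to a continuous functional of a single normalized sample mean, and then invoke the multivariate central limit theorem together with Slutsky's lemma and the continuous mapping theorem. The only substantive ingredient is the linearization of $ b_{2,q,i} $, and it is already supplied by \autoref{coro:kurtosis2}, which applies because a Gaussian random vector satisfies $ \text{E}( \| \bX \|^8 ) < \infty $.

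First I would specialize to Gaussianity. By \eqref{MardiaNormal}, $ \beta_{2,q,i} = q(q+2) $ for every pair $ (q,i) $, so \autoref{coro:kurtosis2} reads
\[
n^{1/2}\{ b_{2,q,i} - q(q+2) \} = n^{-1/2}\sum_{j=1}^{n} \{ \bY_{qij} - \text{E}( \bY_{qij} ) \} + o_P(1) .
\]
The deterministic constants $ \sqrt{8 q (q + 2)} $, and $ \sqrt{24} $ when $ q = 1 $, appearing in the definitions of $ \tilde{\bY}_j $ and $ \tilde{\bY}_{q,j} $ are precisely the asymptotic standard deviations of $ n^{1/2}\{ b_{2,q,i} - q(q+2) \} $ under the Gaussian null; verifying this amounts to the chi-square moment identity $ \text{Var}\{ (\chi^2_q)^2 - 2(q+2)\,\chi^2_q \} = 8 q (q + 2) $. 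Dividing the display above by these constants and stacking over the $ 2^p - 1 $ pairs $ (q,i) $ — the coordinatewise $ o_P(1) $ errors assemble into a vector $ o_P(1) $ error because there are finitely many of them — yields
\[
\big( \sqrt{n}\,\tilde{b}_{2,q,i} \big)_{q,i} = \sqrt{n}\,\big\{ \bar{\tilde{\bY}} - \text{E}( \tilde{\bY}_1 ) \big\} + o_P(1), \qquad \bar{\tilde{\bY}} = n^{-1}\sum_{j=1}^{n}\tilde{\bY}_j ,
\]
and, restricting to the block indexed by dimension $ q_0 $, $ \big( \sqrt{n}\,\tilde{b}_{2, q_0, i} \big)_{i} = \sqrt{n}\,\big\{ \bar{\tilde{\bY}}_{q_0} - \text{E}( \tilde{\bY}_{q_0,1} ) \big\} + o_P(1) $ with $ \bar{\tilde{\bY}}_{q_0} = n^{-1}\sum_{j=1}^{n}\tilde{\bY}_{q_0,j} $.

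Next, since $ \bX_1, \ldots, \bX_n $ are i.i.d., so are $ \tilde{\bY}_1, \ldots, \tilde{\bY}_n $ and so are $ \tilde{\bY}_{q_0,1}, \ldots, \tilde{\bY}_{q_0,n} $; each entry $ \bY_{qij} $ is a degree-four polynomial in the coordinates of a Gaussian vector and hence has finite variance, so $ \bGamma $ and $ \bGamma_{q_0} $ are well defined, and the multivariate central limit theorem gives $ \sqrt{n}\,\{ \bar{\tilde{\bY}} - \text{E}( \tilde{\bY}_1 ) \} \stackrel{d}{\rightarrow} \bW $ and $ \sqrt{n}\,\{ \bar{\tilde{\bY}}_{q_0} - \text{E}( \tilde{\bY}_{q_0,1} ) \} \stackrel{d}{\rightarrow} \bW_{q_0} $, the zero-mean Gaussian vectors with dispersion matrices $ \bGamma $ and $ \bGamma_{q_0} $. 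Since $ \bv \mapsto \| \bv \|_\infty $ is continuous, combining these weak limits with the two linearization displays through Slutsky's lemma and the continuous mapping theorem gives $ \sqrt{n}\,\max_{q,i} | \tilde{b}_{2,q,i} | \stackrel{d}{\rightarrow} \| \bW \|_\infty $ and $ \sqrt{n}\,\max_{i} | \tilde{b}_{2, q_0, i} | \stackrel{d}{\rightarrow} \| \bW_{q_0} \|_\infty $.

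I do not anticipate a real obstacle. All of the analytic content — the $ n^{-1/2} $-rate expansion of $ b_{2,q,i} $ and the control of the quadratic-form perturbation terms arising from replacing $ \bSigma_{qi} $ by $ \bS_{qi} $ — is already carried out in \autoref{thm:kurtosis1} and \autoref{coro:kurtosis2}. What remains is the elementary chi-square moment computation pinning down the scaling constants, the routine fact that finitely many coordinatewise $ o_P(1) $ remainders form a vector $ o_P(1) $ remainder, and the standard passage to a Gaussian limit via the CLT and the continuous mapping theorem, which mirrors the proof of \autoref{thm:skewness2} with $ \| \cdot \|_\infty $ playing the role of $ G(\cdot) $ and $ G_{q_0}(\cdot) $.
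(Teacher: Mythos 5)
Your proposal is correct and follows essentially the same route as the paper's (very terse) proof: under Gaussianity $\beta_{2,q,i}=q(q+2)$, the linearization of \autoref{coro:kurtosis2} reduces the stacked, rescaled statistics to a normalized sample mean plus $o_P(1)$, and the multivariate central limit theorem together with the continuity of $\|\cdot\|_\infty$ and the continuous mapping theorem yields the two limits. Your explicit verification that $\mathrm{Var}(\bY_{qi1})=8q(q+2)$ via the $\chi^2_q$ moments is a useful detail the paper leaves implicit, and the residual $\sqrt{n}$ bookkeeping issue you inherit is present in the paper's own statement of the theorem, not a defect of your argument.
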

\begin{proof}[\textbf{\upshape Proof:}]
Under the assumption of Gaussianity in the theorem, $ \beta_{2,q,i} = q (q + 2) $ for all $ q $. Since the $ l_\infty $ norm in the Euclidean space is continuous, the proof follows from an application of the multivariate central limit theorem in \autoref{coro:kurtosis2} and then applying the continuous mapping theorem.
\end{proof}

The tests of kurtosis are implemented under Gaussianity of the null hypotheses.
To compute the p-values for the MaxK test in \eqref{h_kurtosis} and the $ \text{MaxK}_{q_0} $ test in \eqref{h_kurtosis_q0}, we need to estimate the dispersion matrices $ \bGamma $ and $ \bGamma_q $ of the random vectors $ \tilde{\bY}_j $ and $ \tilde{\bY}_{q j} $. But, $ \tilde{\bY}_j $ and $ \tilde{\bY}_{q j} $ are constituted of $Y_{qij}$, whose definition involves unknown population quantities $\bmu_{qi}$ and $\bSigma_{qi}$. So, we substitute $ Y_{qij} $ in $ \tilde{\bY}_j $ and $ \tilde{\bY}_{q j} $ by:
\begin{align*}
\widehat{Y}_{qij} = \{ ( \bX_{qij} - \bar{\bX}_{qi} )^\top \mathbf{S}_{qi}^{-1} ( \bX_{qij} - \bar{\bX}_{qi} )\}^2 - 2 (q + 2) ( \bX_{qij} - \bar{\bX}_{qi})^\top \mathbf{S}_{qi}^{-1} ( \bX_{qij} - \bar{\bX}_{qi}) ,
\end{align*}
which is obtained by replacing $\bmu_{qi}$ and $\bSigma_{qi}$ in the expression of $ Y_{qij} $ by their estimates $\bar{\bX}_{qi}$ and $ \mathbf{S}_{qi} $, respectively. Now, it can be derived that under Gaussianity, $ \text{Var}( Y_{qi1} ) = 8 q (q + 2) $ for all $ q, i $. So, the diagonal entries of the estimates of $ \bGamma $ and $ \bGamma_q $, denoted as $ \widehat{\bGamma} $ and $ \widehat{\bGamma}_q $, respectively, are fixed to be 1. The off-diagonal entries of $ \widehat{\bGamma} $ and $ \widehat{\bGamma}_q $ are obtained from the sample correlations of $\widehat{Y}_{q_1 i_1 j}$ and $\widehat{Y}_{q_2 i_2 j}$ for the corresponding indices $q_1, i_1$ and $q_2, i_2$.
Next, we generate 1000 independent zero mean Gaussian random vectors $ \tilde{\bW}_1, \ldots, \tilde{\bW}_{1000} $ with dispersion matrix $ \widehat{\bGamma} $. The p-value of the MaxK test in \eqref{h_kurtosis} is the proportion of the values $ \| \tilde{\bW}_1 \|_\infty, \ldots, \| \tilde{\bW}_{1000} \|_\infty $ larger than $ \max_{q, i}| \tilde{b}_{2,q,i}| $. The p-value of the $ \text{MaxK}_{q_0} $ test in \eqref{h_kurtosis_q0} is computed similarly.

\subsection{Testing Gaussianity Based on Both Skewness and Kurtosis}\label{subsec:gaussianitytest}

Based on the tests of skewness and kurtosis, a test of Gaussianity can be constructed. Analogous to \eqref{h_skewness} and \eqref{h_kurtosis}, the null hypothesis here is:
\begin{align}
\text{H}_0^{(g)} : \text{The underlying distribution is Gaussian}.
\label{h_gaussianity}
\end{align}
Similarly, analogous to \eqref{h_skewness_q0} and \eqref{h_kurtosis_q0}, the null hypothesis is:
\begin{align}
\text{H}_0^{q_0,(g)} : \text{All $ q_0 $-dimensional subsets of the data follow some Gaussian distribution}.
\label{h_gaussianity_q0}
\end{align}
Here, if at least one of the null hypotheses of the corresponding skewness test or kurtosis test is rejected, then we reject the null hypothesis of Gaussianity (after Bonferroni correction). For example, if any of the MaxS test and the MaxK test rejects their null hypotheses, then \eqref{h_gaussianity} is also rejected. The test for \eqref{h_gaussianity} is denoted as MaxSK test. Similarly, if any of the $ \text{MaxS}_{q_0} $ test or the $ \text{MaxK}_{q_0} $ test rejects their null hypothesis for a fixed $ q_0 $, then \eqref{h_gaussianity_q0} is rejected, and we denote this test by $ \text{MaxSK}_{q_0} $ test.

\section{Simulation Study}\label{sec:simulation}

In this section, the performance of our proposed tests are investigated in terms of the estimated sizes and powers using some simulated models. The estimated powers of our tests are compared with the corresponding Mardia tests. We also compare the estimated powers of ours tests with several tests of Gaussianity.

\subsection{Estimated Sizes}\label{subsec:simulation-1}
For estimating the sizes of our tests, we consider $ \mathbf{X}_1, \ldots, \mathbf{X}_n $ being a random sample from the $ p $-variate Gaussian distribution $ {\cal N}_p( \mathbf{0}, \boldsymbol{\Sigma} ) $, where $ \boldsymbol{\Sigma} = ( \sigma_{ij} ) $ with $ \sigma_{ij} = 0.5 + 0.5 \mathbb{I}( i = j ) $. We take $ p = 5 $. The number of replicates to estimate the sizes of our tests is taken as $ 1000 $. The sample size $ n $ is varied.

\begin{table}[b!]
\begin{center}
\caption{Estimated sizes of the tests for 5\% nominal level in $ {\cal N}_5( \mathbf{0}, \boldsymbol{\Sigma} ) $ based on 1000 replicates, where $ \boldsymbol{\Sigma} = ( \sigma_{ij} ) $ with $ \sigma_{ij} = 0.5 + 0.5 \mathbb{I}( i = j ) $.}
\label{tab:size}
\begin{tabular} {l c c c c c}  
\hline 
Test & $ n = 50 $ & $ n = 100 $ & $ n = 200 $ & $ n = 500 $ & $ n = 1000 $ \\ \hline 
MaxS & 0.048 & 0.056 & 0.048 & 0.047 & 0.044 \\ 
MaxK & 0.043 & 0.049 & 0.058 & 0.060 & 0.049 \\ 
MaxSK & 0.044 & 0.053 & 0.057 & 0.055 & 0.045 \\ 
$ \text{MaxS}_1 $ & 0.041 & 0.053 & 0.043 & 0.051 & 0.046 \\ 
$ \text{MaxS}_2 $ & 0.050 & 0.062 & 0.051 & 0.055 & 0.040 \\ 
$ \text{MaxS}_3 $ & 0.034 & 0.059 & 0.055 & 0.055 & 0.035 \\ 
$ \text{MaxS}_4 $ & 0.030 & 0.050 & 0.043 & 0.055 & 0.038 \\ 
$ \text{MaxS}_5 $ & 0.015 & 0.041 & 0.043 & 0.045 & 0.048 \\ 
$ \text{MaxK}_1 $ & 0.047 & 0.062 & 0.053 & 0.058 & 0.048 \\ 
$ \text{MaxK}_2 $ & 0.021 & 0.034 & 0.045 & 0.057 & 0.050 \\ 
$ \text{MaxK}_3 $ & 0.014 & 0.018 & 0.030 & 0.049 & 0.051 \\ 
$ \text{MaxK}_4 $ & 0.031 & 0.038 & 0.036 & 0.059 & 0.045 \\ 
$ \text{MaxK}_5 $ & 0.151 & 0.088 & 0.060 & 0.069 & 0.050 \\ 
$ \text{MaxSK}_1 $ & 0.048 & 0.065 & 0.053 & 0.049 & 0.048 \\ 
$ \text{MaxSK}_2 $ & 0.033 & 0.049 & 0.049 & 0.049 & 0.050 \\ 
$ \text{MaxSK}_3 $ & 0.025 & 0.039 & 0.039 & 0.052 & 0.044 \\ 
$ \text{MaxSK}_4 $ & 0.022 & 0.035 & 0.035 & 0.055 & 0.039 \\ 
$ \text{MaxSK}_5 $ & 0.070 & 0.058 & 0.046 & 0.057 & 0.041 \\ \hline 
\end{tabular}
\end{center}
\end{table}

From \autoref{tab:size}, it can be seen that the tests MaxS, MaxK and MaxSK have estimated sizes close to the 5\% nominal level for $ n = 50 $ and up. The estimated sizes of $ \text{MaxS}_5 $ and $ \text{MaxK}_2 $, $ \text{MaxK}_3 $, $ \text{MaxK}_5 $ and $ \text{MaxSK}_3 $, $ \text{MaxSK}_4 $ deviate slightly from the 5\% nominal level for $ n = 50 \text{ and } 100 $, and require higher sample sizes to converge to the nominal level.

In section 1 of the supplementary material, estimated sizes for some other values of the dimension $ p $ are presented, namely $ p = 3, 4 $ which correspond to the dimensions of the two datasets analyzed in \autoref{sec:realdata}. There also, it is found that the estimated sizes of the majority of the tests, including the tests MaxS, MaxK and MaxSK, are close to the 5\% nominal level for $ n = 50 $ and up. A few tests, like $ \text{MaxK}_2 $, $ \text{MaxK}_3 $, $ \text{MaxSK}_2 $, $ \text{MaxSK}_3 $, require higher sample sizes for their estimated sizes to reach close to the nominal level. In section 3 of the supplementary material, the estimated sizes of the tests are presented in an equicorrelation model, where the pairwise-correlation between the components of the vector $ \mathbf{X} $ is high. There, it is found that the estimated sizes of nearly all of the tests, including the tests MaxS, MaxK and MaxSK, are close to the nominal level for $ n = 100 $ and up. When the sample size is lower, i.e., $ n = 50 $, some deviations of the estimated sizes from the nominal level are observed.

\subsection{Estimated Powers}\label{subsec:simulation-2}

We now compare the powers of the tests with the usual Mardia skewness (MS) and kurtosis (MK) tests along with several tests of normality; see \cite{chen2022you} for a recent review.

The following tests of Gaussianity are considered for the comparison of performances.
The test by Henze and Zirkler \cite{henze1990class} is denoted as the HZ test. The test of normality developed by Royston \cite{royston1982extension,royston1983some,royston1992approximating,royston1995remark} is denoted as the R test. The testing procedure described by Doornik and Hansen \cite{doornik2008omnibus} is denoted as the DH test.
The skewness-based test of normality described by Kankainen, Taskinen and Oja \cite{kankainen2007tests} is denoted as the $ \text{KTO}_\text{S} $ test, and the kurtosis-based test of normality described in the same paper is denoted as the $ \text{KTO}_\text{K} $ test.
The test developed by Bowman and Shenton \cite{bowman1975omnibus} is denoted as the BS test.
The testing procedure studied by Villasenor Alva and Estrada \cite{villasenor2009generalization} is denoted as the VE test.
The test of normality developed by Zhou and Shao \cite{zhou2014powerful} is denoted as the ZS test.
The testing procedure based on the measure of skewness proposed in \cite{mori1993multivariate} is denoted as the MRS test.
The test for multivariate skewness described in \cite{malkovich1973tests} is denoted as the MAS test, while the test for multivariate kurtosis described by the same authors is denoted as the MAK test.
The test of kurtosis based on the measure of multivariate kurtosis proposed in \cite{koziol1989note} is denoted as the KK test.

The HZ test, the R test and the DH test are implemented using the corresponding functions in the \textsf{R} (\textsf{R} version 4.1.3 (2022-03-10), \cite{R}) package \texttt{MVN} \cite{MVN}. The $ \text{KTO}_\text{S} $ and $ \text{KTO}_\text{K} $ tests are implemented using the functions in the \textsf{R} package \texttt{ICS} \cite{ICS}. The BS test and the ZS test are implemented using their functions in the \textsf{R} package \texttt{mvnormalTest} \cite{mvnormalTest}. The VE test is implemented using its function in the \textsf{R} package \texttt{mvShapiroTest} \cite{mvShapiroTest}.
The MRS test is implemented using its function in the \textsf{R} package \texttt{MultiSkew} \cite{MultiSkew}.
The MAS test, the MAK test and the KK test are implemented using their respective functions in the \textsf{R} package \texttt{mnt} \cite{mnt}.
The Mardia skewness and kurtosis tests are implemented based on the asymptotic distributions of the test statistics derived in \cite{mardia1970measures} using the unbiased estimate of the population covariance matrix.

\begin{figure}[h!]
\centering
\includegraphics[height=5.9cm]{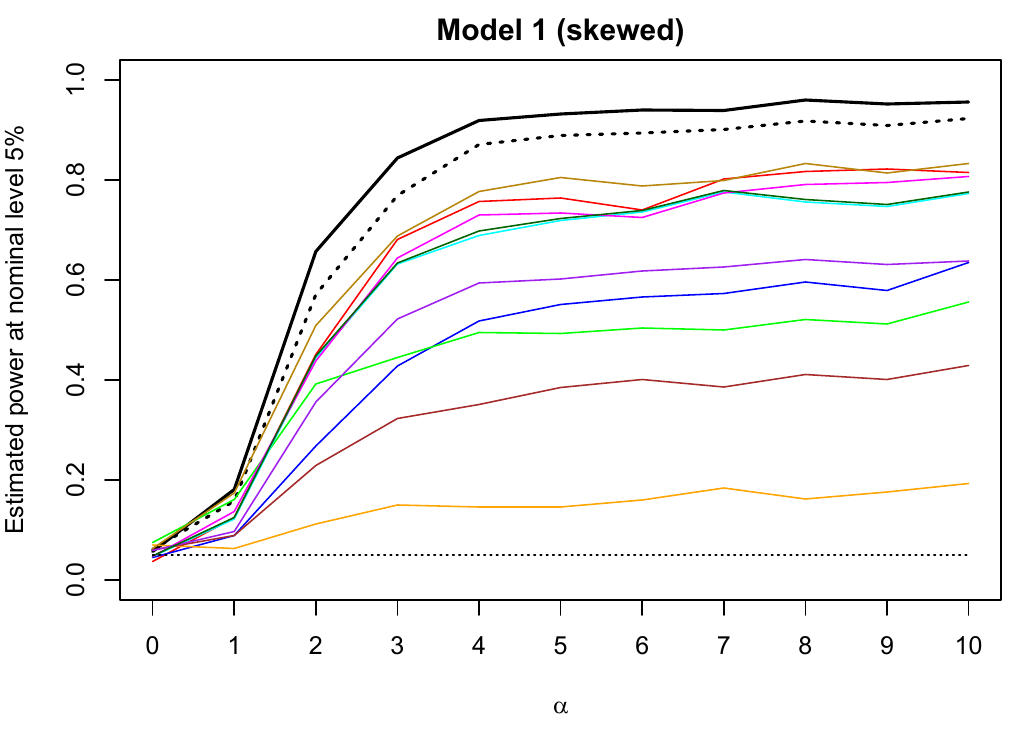}
\includegraphics[height=5.9cm]{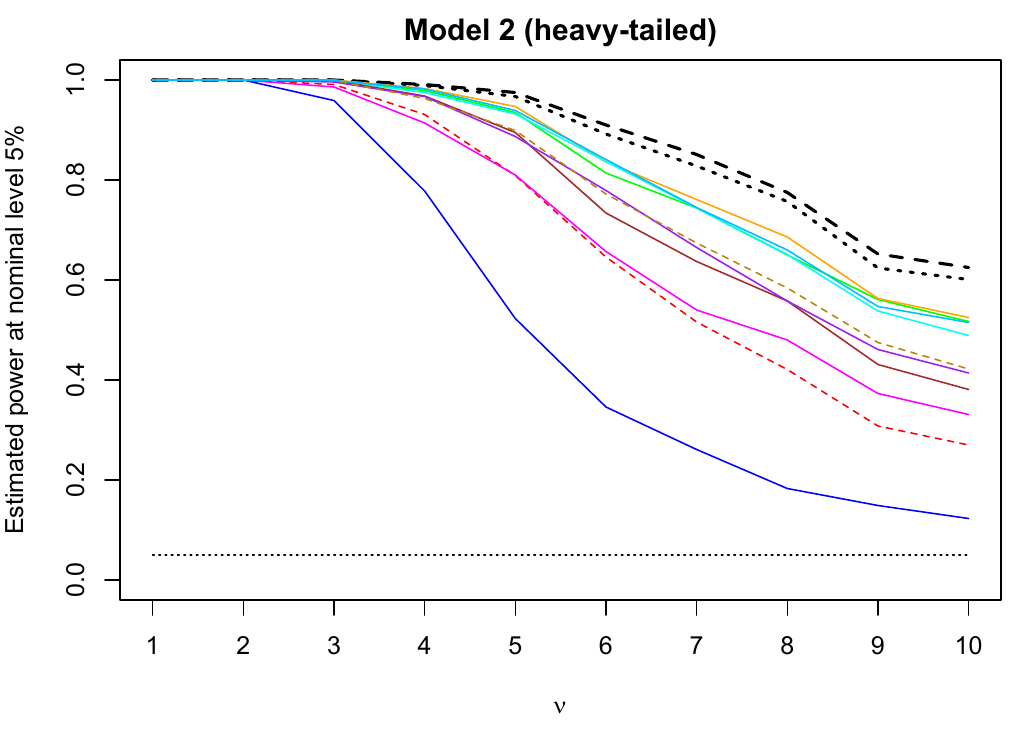}
\includegraphics[height=5.9cm]{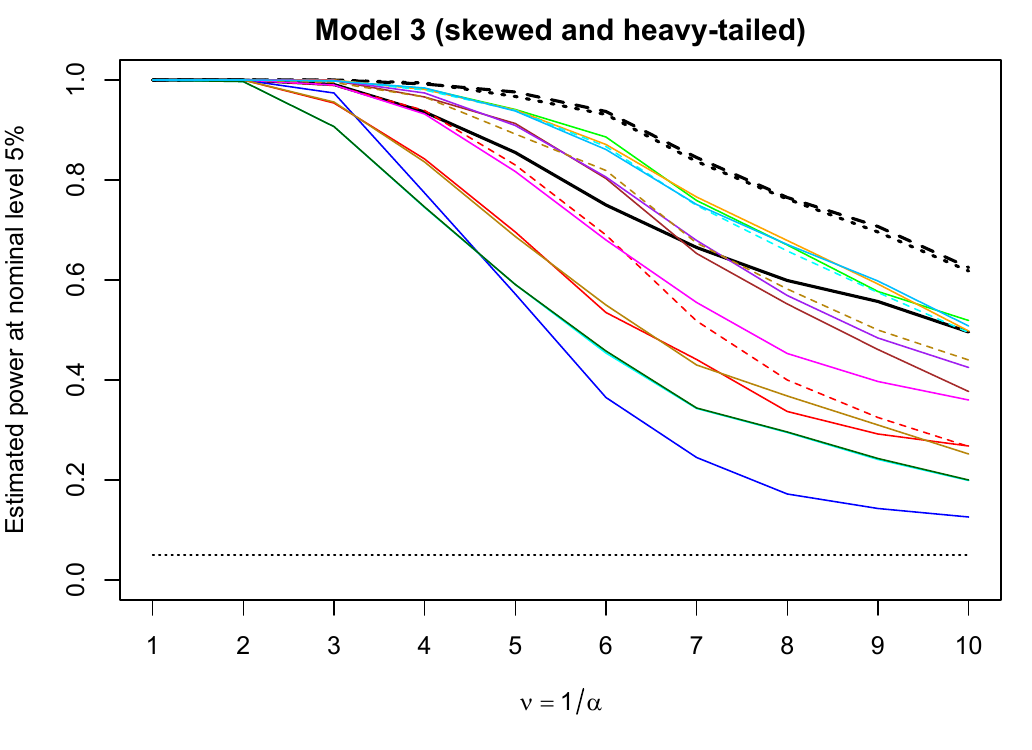}
\includegraphics[width=1\linewidth,trim={0 70 0 230},clip]{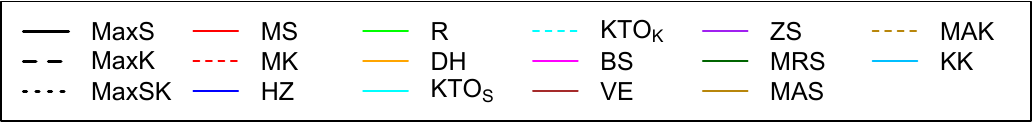}
\caption{Estimated powers of MaxS, MaxK, MaxSK, MS, MK, HZ, R, DH, $ \text{KTO}_\text{S} $, $ \text{KTO}_\text{K} $, BS, VE, ZS, MRS, MAS, MAK and KK tests for 5\% nominal level (horizontal dashed line) in Model 1 (top left), Model 2 (top right) and Model 3 (bottom) for $ n = 200 $, $p=5$, $q=2$ based on 1000 replicates. The horizontal dotted lines near the bottom of the plots correspond to the nominal level of 5\%.}
\label{fig:powerplots1}
\end{figure}

For the comparison of performances of the tests, we consider three simulation models. In each of the models, the non-Gaussian feature is supported on a small number of components of the random vector. Let $ \boldsymbol{\Sigma} = ( \sigma_{ij} ) $ with $ \sigma_{ij} = 0.5 + 0.5 \mathbb{I}( i = j ) $. The dimension of the matrix $ \boldsymbol{\Sigma} $ is to be determined based on the context.
Let $ \bX = (\bX_q^\top, \bX_{p-q}^\top)^\top $, where $ \bX_q $ and $ \bX_{p-q} $ are independent and $ \bX_{p-q} $ follows $ {\cal N}_{p-q}( \bf0, \boldsymbol{\Sigma} ) $. Then:
\begin{itemize}
\item Model 1 (skewed): $ \bX_q \sim {\cal SN}_q( \bf0, \boldsymbol{\Sigma}, \balpha ) $, where $ \balpha = \alpha {\bf1}_q $;
\item Model 2 (heavy-tailed): $ \bX_q \sim t_{q}( \bf0, \boldsymbol{\Sigma}, \nu ) $, where the degrees of freedom are $ \nu $;
\item Model 3 (skewed and heavy-tailed): $ \bX_q \sim {\cal ST}_q( \bf0, \boldsymbol{\Sigma}, \balpha, \nu ) $, where $ \balpha = (1 / \nu) {\bf1}_q $.
\end{itemize}

It can be seen that the class of distributions in Model 1 is skewed-Gaussian distributions, while in Model~2, the class of distributions is symmetric heavy-tailed. In Model 3, the non-Gaussian distributions are both skewed and heavy-tailed.

We fix the sample size $ n = 200 $, $ p = 5 $ and $ q = 2 $. Then, we vary the values of $ \alpha $ or $ \nu $ to investigate the changes in power of the tests in the distributions. The plots of the estimated power curves of the tests are presented in \autoref{fig:powerplots1}. In the panel of Model 1, the kurtosis-based tests, namely MaxK test, the Mardia kurtosis (MK) test, the $ \text{KTO}_\text{K} $ test, the MAK test and the KK test are not included, as Model 1 is concerned with skewness only. Similarly, in the panel for Model 2, the skewness-based tests, namely the MaxS test, the Mardia skewness (MS) test, the $ \text{KTO}_\text{S} $ test, the MRS test and the MAS test are not included, as Model 2 is concerned with kurtosis only. However, in the plot for Model 3, all the tests are included.

From \autoref{fig:powerplots1}, it can be seen that in Model 1, the performances of MaxS and MaxSK tests are significantly better than all other tests. The power of the MaxSK test is slightly lower than the MaxS test. This is because MaxSK combines the MaxS and the MaxK tests using Bonferroni correction. If one of the tests does not exhibit a high power, then the power of MaxSK would be lower than the best performing test. Similar observations can be made in the panel for Model 2, where the estimated powers of the MaxK test and the MaxSK test are found to be better than all other tests, and the power curve of the MaxSK test is slightly below the power curve of the MaxK test. In Model 3, the estimated powers of the MaxSK test and the MaxK test are higher than other testing procedures, while the power exhibited by the MaxS test is lower in this model (this is because the skewness in Model 3 is rather weak).
In all the cases, it can be clearly seen that our tests significantly outperform the other tests most of the time. The test for Gaussianity, i.e., the MaxSK test, always performs quite well and exhibits better performance than all the other tests.

Among the other testing procedures, the MAS test and the MAK test involve finding the projections which maximize the univariate skewness and kurtosis, respectively. Skewness-based projection pursuit has also been studied in \cite{loperfido2018skewness}. The projection pursuit methods yield directions to project the data which would maximize the skewness or kurtosis. However, if the skewness or kurtosis is supported on only a sub-dimension of the data, the obtained direction vector may not directly help in identifying the particular sub-dimension. It is also notable that the two tests MAS and MAK are found to be less powerful compared to our proposed tests in the above simulation study of estimated powers. The $ \text{KTO}_\text{K} $ test proposed in \cite{kankainen2007tests} and implemented using the \textsf{R} package \texttt{ICS} \cite{ICS} involves the ratio of the regular
covariance matrix and the matrix of fourth moments, which is closely related to independent component analysis (see, e.g., section 2.4 in \cite{oja2006scatter}). In independent component analysis, one tries to find a linear transform of the original multivariate data vectors so that the components of the transformed vectors are independent. However, the methodology of independent component analysis would not help in detecting the sub-dimension supporting the skewness or kurtosis present in the data due to the effect of taking linear transforms of the original data. The $ \text{KTO}_\text{K} $ test is also found to be less powerful compared to our proposed tests in the above simulation study.

In \cite{franceschini2019maxskew}, the authors conjectured that the normality test based on Mardia skewness is less powerful when skewness is present in a lower-dimensional space than the space of the underlying distribution, where the lower-dimensional space may be a few components of the random vector following the underlying distribution, or a projection to a lower-dimensional space. In the plots of the estimated powers, this phenomenon is clearly observed with the Mardia tests exhibiting significantly less power compared to our proposed tests and other tests in models where the non-Gaussian features are supported on only a small sub-dimension of the overall distribution. In \cite{franceschini2019maxskew}, graphical methods to investigate skewness in such scenarios are described using skewness-based projection pursuit.

In section 2 of the supplementary material, estimated powers of the tests are presented for the smaller sample size $ n = 50 $. For this smaller sample size, it is found that in Model 1, which is the skewed model, the powers of all the tests decrease considerably, and a few tests exhibit higher powers compared to the MaxSK test. However, the power of the MaxS test is found to be higher than all these tests, except the R test. The R test sometimes exhibits higher estimated power compared to the MaxS test. In Model 2, which is the heavy-tailed model, again it is found that the estimated power of the R test is slightly higher than the MaxK and the MaxSK tests, although they are very close to each other. The other tests exhibit lower powers, but some of them are very close to the powers of the MaxK and MaxSK tests. In Model 3, which is the skewed and heavy-tailed model, it is found that the powers of the MaxS and MaxSK tests are considerably higher than all other tests. The power of the MaxK test is very close to that of the R test, which is significantly lower than the powers of the MaxS test and the MaxSK test.

In section 3 of the supplementary material, the estimated powers of the tests are also presented in an equicorrelation model with high pairwise correlation between the components of the vector $ \mathbf{X}_i $, for sample size $ n = 200 $ and $ 50 $. This experiment is carried out to evaluate the effect of the high pairwise correlation on the powers of the tests, and how this effect varies with sample size. It is found that for $ n = 200 $, the estimated powers of the MaxS, MaxSK and R tests increase sharply in Model 1, which is the skewed model, under high pairwise correlation. There is almost no difference among the powers of the MaxS, MaxSK and R tests in Model 1. However, in Model 2 and Model 3, the estimated powers of the R test is significantly lower than several other tests, and the estimated powers of the MaxK test and the MaxSK test are significantly higher than all other tests. On the other hand, when $ n = 50 $, it is found that in Model 1 under high pairwise correlation, the power of the R test is significantly higher than all of tests, and the powers of the MaxS and MaxSK tests are lower than that of the R test but significantly higher than the powers of all of the other tests. This strong dominance of the power of the R test is not maintained in case of Model 2 and Model 3 under high pairwise correlation, where several other tests exhibit higher powers compared to the R test. There, the MaxK and MaxSK tests exhibit the highest powers, but the powers of several other tests are close to them.

\subsection{Detection of Sub-Dimensions Supporting Skewness and Excess Kurtosis}\label{subsec:simulation-3}

The testing procedures described earlier can be used to detect the sub-dimensions supporting non-Gaussian features in the data. Suppose the data are skewed, but skewness is supported only on a small sub-dimension of the data. Then, to detect the sub-dimension supporting skewness, we can first conduct the MaxS test. If the p-value of the MaxS test is small (say, lower than 5\%), then there is statistical evidence of presence of skewness in the sample. Next, we find the sub-dimension corresponding to the maximum $ \tilde{b}_{1,q,i} $, which is the detected sub-dimension supporting skewness in the data. Similarly, if a heavy-tailed component is present in a small sub-dimension of the data, and we wish to detect that sub-dimension, we can conduct the MaxK test and if it rejects Gaussianity, we find the sub-dimension which corresponds to the maximum $| \tilde{b}_{2,q,i}| $. If we want to detect the sub-dimension supporting a non-Gaussian distribution, we can use the MaxSK test in the following way. We first conduct the MaxSK test. If it rejects Gaussianity, then we find which p-value, whether for MaxS or MaxK, caused the rejection. If it is only one of the tests, say MaxS, then we detect the sub-dimension corresponding to the maximum $ \tilde{b}_{1,q,i} $. Otherwise, if the p-values of both the MaxS test and the MaxK test are below 2.5\% (due to Bonferroni correction on the 5\% nominal level), then we find the sub-dimensions corresponding to the maximum $ \tilde{b}_{1,q,i} $ and the maximum $ | \tilde{b}_{2,q,i}| $. The union of these two sub-dimensions is the detected sub-dimension supporting the non-Gaussian distribution.

\begin{table}[b!]
\begin{center}
\caption{Indices of all sub-dimensions for $ p = 5 $.}
\label{tab:indices}
\begin{tabular} {r c | r c | r c}  
 \hline 
Sub-dimension   &Index  &Sub-dimension  &Index  &Sub-dimension  &Index  \\\hline
(1)             &1      &(2)            &2      &(3)            &3      \\
(4)             &4      &(5)            &5      &(1, 2)         &6      \\
(1, 3)          &7      &(1, 4)         &8      &(1, 5)         &9      \\
(2, 3)          &10     &(2, 4)         &11     &(2, 5)         &12     \\
(3, 4)          &13     &(3, 5)         &14     &(4, 5)         &15     \\
(1, 2, 3)       &16     &(1, 2, 4)      &17     &(1, 2, 5)      &18     \\
(1, 3, 4)       &19     &(1, 3, 5)      &20     &(1, 4, 5)      &21     \\
(2, 3, 4)       &22     &(2, 3, 5)      &23     &(2, 4, 5)      &24     \\
(3, 4, 5)       &25     &(1, 2, 3, 4)   &26     &(1, 2, 3, 5)   &27     \\
(1, 2, 4, 5)    &28     &(1, 3, 4, 5)   &29     &(2, 3, 4, 5)   &30     \\
(1, 2, 3, 4, 5) &31     &               &       &               &       \\\hline
\end{tabular}
\end{center}
\end{table}

\begin{figure}[b!]
\centering
\includegraphics[width=0.67\linewidth]{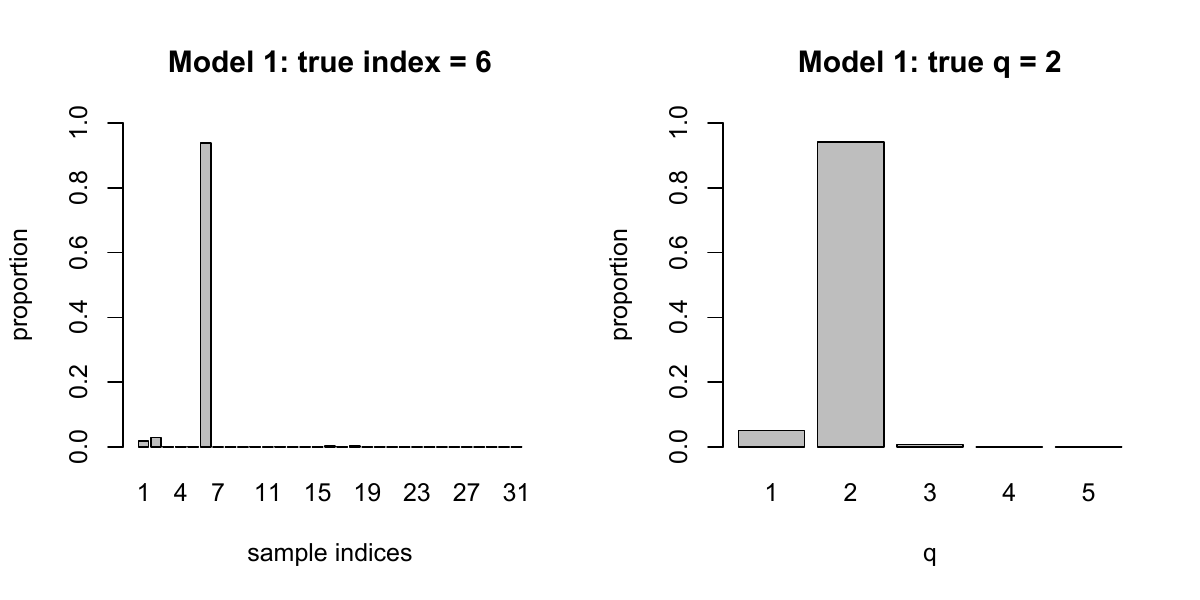}
\includegraphics[width=0.67\linewidth]{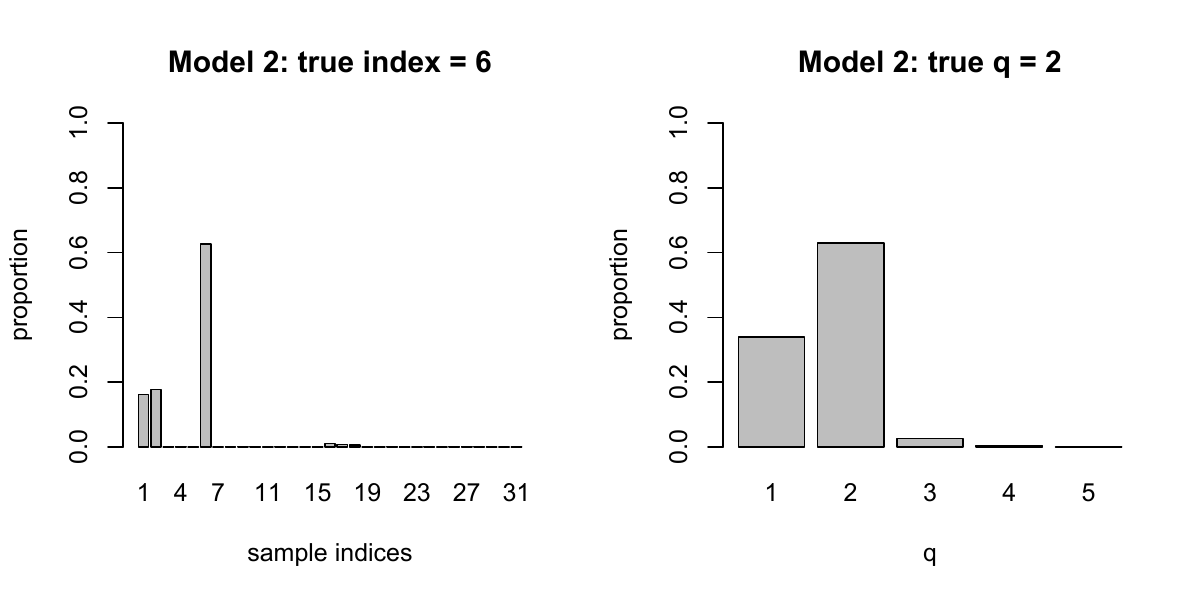}
\includegraphics[width=0.67\linewidth]{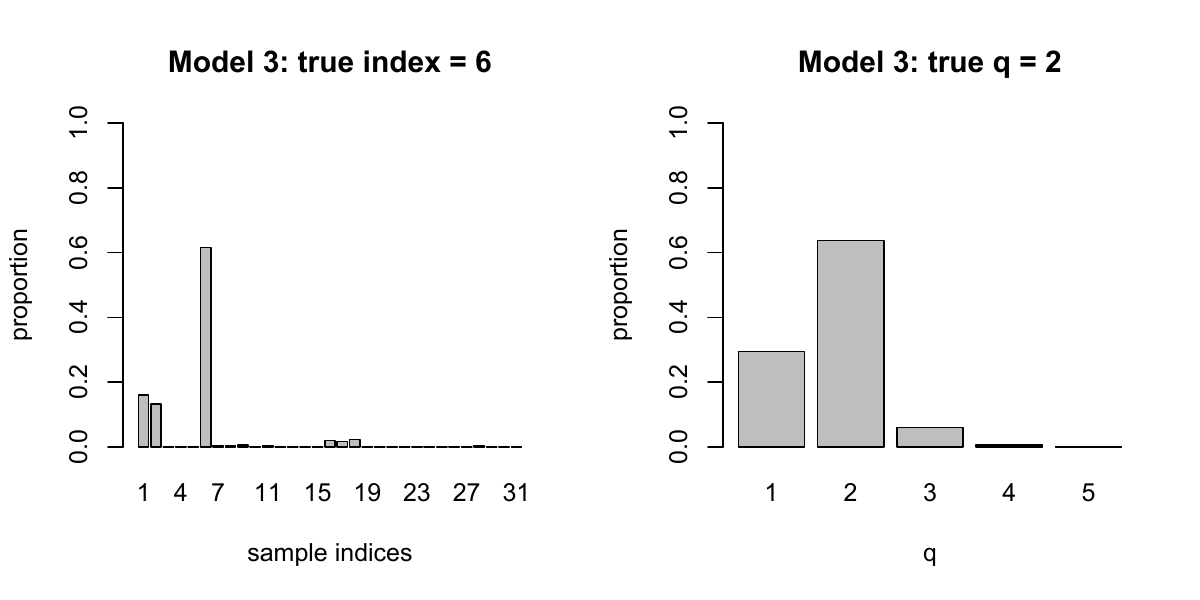}
\caption{Histograms of the detection rate of sub-dimensions in Model 1 using the centered and scaled skewness measure (first row), in Model 2 using the centered and scaled kurtosis measure (second row), in Model 3 using the procedure for detecting non-Gaussianity (third row). In all the cases, $ n = 200 $, $p=5$ and the true sub-dimension supporting skewness or excess kurtosis is $q=2$. Results based on 1000 replicates.}
\label{fig:skewkurtdimdetect}
\end{figure}
To investigate the performance of the detection procedure described above, we consider the three models described in \autoref{subsec:simulation-2}. In Model 1, we fix $ \alpha = 5 $ and conduct the detection procedure 1000 times on independent replicates. The proportion of times each of the sub-dimensions is detected as the one supporting the skewed distribution is computed. Also, the size of the sub-dimensions (denoted as $ q $) thus detected to support the skewed distribution is also recorded, and the proportions for the $ q $ values are computed. All the possible sub-dimensions from $ p = 5 $ variables are assigned indices, which are presented in \autoref{tab:indices}, and the proportions thus computed are plotted against the indices in the first row of \autoref{fig:skewkurtdimdetect}. The estimated power of the MaxS test there is 0.922. It can be clearly seen that the highest proportions in the respective histograms are attained for the true sub-dimension and the true $ q $.

Similarly, the procedure to detect the sub-dimension supporting the heavy-tailed distribution using the MaxK test is also carried out based on 1000 replicates in Model 2 with $ \nu = 5 $. The estimated power of the MaxK test there is 0.985. The histograms are presented in the second row of \autoref{fig:skewkurtdimdetect}, and we again see that the highest proportions are attained for the true sub-dimension and true $ q $.

Finally, the procedure to detect the sub-dimension supporting a non-Gaussian distribution using the MaxSK test is carried out in Model 3 taking $ \nu = 1/\alpha=5 $ and based on 1000 independent replicates. The estimated power of the MaxSK test is 0.974 there. The histograms are presented in the third row of \autoref{fig:skewkurtdimdetect}, where we again see that the highest proportions are attained for the true sub-dimension and true value of $ q $.

\section{Sub-Dimensional Data Analysis}\label{sec:realdata}
\subsection{Fisher's Iris Data}\label{subsec:iris}

We revisit Fisher's iris dataset discussed in \autoref{sec:intro}, where we considered a part of the data related to the species `iris setosa' to demonstrate that the Mardia test of skewness fails to detect skewed features in sub-dimensions. In \autoref{tab:realdatapvalues}, the p-values of all the tests are presented. It can be seen that our test detects skewness in the data, while the Mardia test fails.
Similar observations were made in \cite{MVN}.
The p-values of the $ \text{MaxS}_1 $, $ \text{MaxS}_2 $ and $ \text{MaxSK}_1 $ tests, the R test, the DH test, the VG test and the ZS test are also smaller than the 5\% nominal level.
Next, we consider the whole Fisher's iris dataset and compute the p-values of all the tests. We find that our test of kurtosis can detect the deviation of kurtosis from Gaussian kurtosis, while the Mardia test of kurtosis fails.
The p-values of the majority of the other tests are also smaller than the 5\% nominal level.

Fisher's iris dataset is generally modeled using a Gaussian distribution. However, our findings point to the non-Gaussianity of the data, and thus it may be judicious to use non-Gaussian and skewed distributions while analyzing this dataset.

\begin{table}[t!]
\begin{center}
\caption{Estimated p-values of the tests for the two data examples. Significant p-values at 5\% level are in bold.}
\label{tab:realdatapvalues}
\begin{tabular}{c|cc|c}
\hline
                    &Iris setosa    &Iris           &Wind       \\\hline
Test name           &$p=4$          &$p=4$          &$p=3$          \\\hline
MS         			&0.236 			&\textbf{0.000} &\textbf{0.004} \\
MK         			&0.448 			&0.611          &0.338          \\\hline
MaxS   			    &\textbf{0.001} &\textbf{0.000}	&\textbf{0.032} \\
MaxK  			    &0.360 			&\textbf{0.003} &\textbf{0.007} \\
MaxSK  			    &\textbf{0.002}	&\textbf{0.000} &\textbf{0.014} \\\hline
$ \text{MaxS}_1 $   &\textbf{0.002}	&0.317          &\textbf{0.023} \\
$ \text{MaxS}_2 $   &\textbf{0.047}	&\textbf{0.000} &\textbf{0.011} \\
$ \text{MaxS}_3 $   &0.132 			&\textbf{0.000} &\textbf{0.001} \\
$ \text{MaxS}_4 $   &0.235 			&\textbf{0.000} &--	            \\
$ \text{MaxK}_1 $   &0.244 			&\textbf{0.000} &\textbf{0.004}	\\
$ \text{MaxK}_2 $   &0.236 			&0.068          &0.124          \\
$ \text{MaxK}_3 $   &0.570 			&0.258          &0.367          \\
$ \text{MaxK}_4 $   &0.436 			&0.568          &--		        \\
$ \text{MaxSK}_1 $  &\textbf{0.004}	&\textbf{0.000} &\textbf{0.008} \\
$ \text{MaxSK}_2 $  &0.094 			&\textbf{0.000} &\textbf{0.022} \\
$ \text{MaxSK}_3 $  &0.264   		&\textbf{0.000} &\textbf{0.002} \\
$ \text{MaxSK}_4 $  &0.470 			&\textbf{0.000} &--             \\\hline
HZ         			&0.050      	&\textbf{0.000} &0.097          \\
R          			&\textbf{0.000}	&\textbf{0.000} &\textbf{0.011} \\
DH         			&\textbf{0.000}	&\textbf{0.000} &\textbf{0.022} \\
$ \text{KT}_S $ 	&0.221 			&\textbf{0.040} &0.710          \\
$ \text{KT}_K $ 	&0.875			&\textbf{0.046} &0.349          \\
BS         			&0.060			&\textbf{0.000} &\textbf{0.000} \\
VG         			&\textbf{0.012}	&\textbf{0.000} &0.057          \\
ZS         			&\textbf{0.020}	&0.090          &0.090          \\
MRS                 &0.212          &\textbf{0.038} &0.708          \\
MAS                 &0.342          &\textbf{0.034} &0.142          \\
MAK                 &0.258          &0.906          &0.064          \\
KK                  &0.060          &0.312          &0.080          \\\hline
\end{tabular}
\end{center}
\end{table}

\subsection{Wind Speed Data in Saudi Arabia}\label{subsec:wind}   

We consider a trivariate windspeed dataset produced by Yip \cite{Yip2018} with the Weather Research and Forecasting (WRF) model. The three components correspond to bi-weekly mid-day windspeed during the period 2009-2014 at three locations near Dumat Al Jandal, the first wind farm currently under construction in Saudi Arabia. It is important to study the distributional properties of this trivariate windspeed vector because they are crucial for understanding wind patterns that will influence the production of electricity by the nearby wind farm. In particular, it is of interest to assess whether a Gaussian distribution is suitable, or a non-Gaussian model needs to be developed.

The dataset consists of $n=156$ trivariate windspeed vectors. A Ljung-Box test reveals no indication of serial dependence, hence, the dataset is treated as a random sample from a three-dimensional distribution.  The p-values of the various tests are listed in \autoref{tab:realdatapvalues}. At the 5\% level, the Mardia tests support skewness but do not reject a Gaussian kurtosis. Our global tests, however, reject both symmetry and Gaussian kurtosis, suggesting that a non-Gaussian distribution would be more suitable to model these data. Looking at our sub-dimensional tests, we observe that skewness is rejected in all sub-dimensions, whereas Gaussian kurtosis is rejected only for the $q=1$ dimensional marginals. Among the other twelve tests of Gaussianity, only three reject Gaussianity whereas the remaining nine cannot. 

In summary, our new tests suggest to use a non-Gaussian distribution to model these data. They provide additional information about the non-Gaussian behavior in sub-dimensional components of the trivariate distribution.

\section{Discussion}\label{sec:conclusion}
We have developed some new tests of skewness and kurtosis which take into account the skewness and excess kurtosis present in the sub-dimensions of the data. It was demonstrated through analyses of simulated and real data that our tests outperform the classical Mardia tests of skewness and kurtosis when the skewness and the excess kurtosis are present in a small sub-dimension of the variables under consideration. Moreover, our tests can also be used as tests of Gaussianity, and it was observed that as such, they outperform several popular tests of Gaussianity. We have further developed a methodology to detect the true sub-dimension when the skewness and the excess kurtosis are supported on a small sub-dimension of the data.

One limitation of our methodology is that it considers all the possible sub-dimensions, which is $2^p-1$, to detect skewness or excess kurtosis. The number $2^p-1$ becomes large for even moderate values of $ p $. So, the methodology is computationally  intensive. Future research needs to develop suitable computational methods when the dimension $p$ of the multivariate data is high to reduce the computation burden. In particular, methodology is required to be developed when the data are high-dimensional in nature, i.e., $p > n$. We discuss some possible ways.

In the high-dimensional setup, in case it is known that the skewness or the non-Gaussian kurtosis can only possibly be supported on a sub-dimension $q_0$ (with $q_0$ relatively small compared to the sample size $n$), then the testing procedures $ \text{MaxS}_{q_0} $, $ \text{MaxK}_{q_0} $ and $ \text{MaxSK}_{q_0} $ can be applied, which are described in \autoref{subsec:skewnesstest}, \autoref{subsec:kurtosistest} and \autoref{subsec:gaussianitytest}, respectively. However, in a setup where such information is not available, new procedures need to be developed. One way is to randomly select a fixed number of sub-dimensions from the collection of $2^p - 1$ possible sub-dimensions, carry out the usual Mardia tests of skewness and kurtosis on those sub-dimensions, and then combine the results of those Mardia tests based on a multiple testing procedure to get the result of the overall test.
A second possible way is to consider a fixed but suitably large number of random projections from the original dimension $p$ to a smaller dimension $p'$, where our procedures can be applied. Then, our testing procedures $ \text{MaxS} $, $ \text{MaxK} $ and $ \text{MaxSK} $ can be applied on the $p'$-dimensional projected data for all the random projections, and the results for all the random projections can be combined using a multiple testing method.
A third possible way can be to perform principal component analysis to reduce the dimension of the data from $p$ to $p'$, and then apply our procedures on the dimension-reduced data.

The procedures proposed herein are developed based on the Mardia measures and tests, and these tests cannot detect non-Gaussianity if none of the sub-dimensional Mardia measures are able to detect it. One example of such a non-Gaussian distribution is given in \cite{dutta2014non}. There, all the sub-dimensional Mardia measures including the global Mardia measures coincide with the Gaussian distribution. In such a case, our methodology will not work, and different methods are required.

\section*{Supplementary material} The supplementary material contains some additional simulations: estimated sizes for $ p = 3 \text{ and } 4 $, estimated powers for $ n = 50 $, and investigation of the performance of the tests under high pairwise correlation.

\section*{Acknowledgements} This research was supported by the King Abdullah University of Science and Technology (KAUST). We thank the associate editor and the anonymous reviewers for their helpful comments and suggestions.

\bibliographystyle{myjmva}
\bibliography{bibliography}

\end{document}


\begin{frontmatter}

\title{Supplementary material:\\ Sub-Dimensional Mardia Measures of Multivariate Skewness and Kurtosis}

\author[mymainaddress1]{Joydeep Chowdhury\corref{mycorrespondingauthor}}
\author[mymainaddress2]{Subhajit Dutta}
\author[mymainaddress3]{Reinaldo B. Arellano-Valle}
\author[mymainaddress1]{Marc~G.~Genton}
\address[mymainaddress1]{Statistics Program, King Abdullah University of Science and Technology, Thuwal, Saudi Arabia}
\address[mymainaddress2]{Department of Mathematics and Statistics, Indian Institute of Technology, Kanpur 208016, India}
\address[mymainaddress3]{Department of Statistics, Pontificia Universida Cat\'olica de Chile, Santiago 22, Chile}

\cortext[mycorrespondingauthor]{Corresponding author}

\end{frontmatter}

\section{Estimated sizes for different $ n $ and $ p $}\label{sec:sizes}
In this section, we present the estimated sizes of our tests for different sample size $ n $ and dimension $ p $ in some simulated models. As in the main paper, for the estimation of the sizes, we consider $ \mathbf{X}_1, \ldots, \mathbf{X}_n $ being a random sample from the $ p $-variate Gaussian distribution $ {\cal N}_p( \mathbf{0}, \boldsymbol{\Sigma} ) $, where $ \boldsymbol{\Sigma} = ( \sigma_{ij} ) $ with $ \sigma_{ij} = 0.5 + 0.5 \mathbb{I}( i = j ) $. We consider the cases $ p = 3 \text{ and } 4 $. The number of replicates to estimate the sizes of our tests is taken as $ 1000 $. The sample size $ n $ is varied. The estimated sizes for $ p = 3 $ are presented in \autoref{tab:size_p3}, and those for $ p = 4 $ are presented in \autoref{tab:size_p4}.

\begin{table}[h!]
\begin{center}
\caption{Estimated sizes of the tests for 5\% nominal level in $ {\cal N}_3( \mathbf{0}, \boldsymbol{\Sigma} ) $ based on 1000 replicates, where $ \boldsymbol{\Sigma} = ( \sigma_{ij} ) $ with $ \sigma_{ij} = 0.5 + 0.5 \mathbb{I}( i = j ) $.}
\label{tab:size_p3}
\begin{tabular} {l c c c c c}  
\hline 
Test & $ n = 50 $ & $ n = 100 $ & $ n = 200 $ & $ n = 500 $ & $ n = 1000 $ \\ \hline 
MaxS & 0.046 & 0.053 & 0.050 & 0.047 & 0.061 \\ 
MaxK & 0.038 & 0.044 & 0.040 & 0.046 & 0.058 \\ 
MaxSK & 0.043 & 0.064 & 0.047 & 0.050 & 0.064 \\ 
$ \text{MaxS}_1 $ & 0.047 & 0.048 & 0.048 & 0.049 & 0.054 \\ 
$ \text{MaxS}_2 $ & 0.047 & 0.051 & 0.058 & 0.051 & 0.058 \\ 
$ \text{MaxS}_3 $ & 0.034 & 0.037 & 0.052 & 0.045 & 0.058 \\ 
$ \text{MaxK}_1 $ & 0.042 & 0.045 & 0.042 & 0.043 & 0.050 \\ 
$ \text{MaxK}_2 $ & 0.022 & 0.021 & 0.028 & 0.037 & 0.052 \\ 
$ \text{MaxK}_3 $ & 0.040 & 0.042 & 0.038 & 0.049 & 0.070 \\ 
$ \text{MaxSK}_1 $ & 0.044 & 0.065 & 0.054 & 0.055 & 0.051 \\ 
$ \text{MaxSK}_2 $ & 0.026 & 0.033 & 0.038 & 0.039 & 0.061 \\ 
$ \text{MaxSK}_3 $ & 0.026 & 0.034 & 0.043 & 0.042 & 0.071 \\ \hline 
\end{tabular}
\end{center}
\end{table}

\begin{table}[h!]
\begin{center}
\caption{Estimated sizes of the tests for 5\% nominal level in $ {\cal N}_4( \mathbf{0}, \boldsymbol{\Sigma} ) $ based on 1000 replicates, where $ \boldsymbol{\Sigma} = ( \sigma_{ij} ) $ with $ \sigma_{ij} = 0.5 + 0.5 \mathbb{I}( i = j ) $.}
\label{tab:size_p4}
\begin{tabular} {l c c c c c}  
\hline 
Test & $ n = 50 $ & $ n = 100 $ & $ n = 200 $ & $ n = 500 $ & $ n = 1000 $ \\ \hline 
MaxS & 0.045 & 0.057 & 0.054 & 0.045 & 0.050 \\ 
MaxK & 0.043 & 0.053 & 0.059 & 0.051 & 0.051 \\ 
MaxSK & 0.046 & 0.058 & 0.063 & 0.054 & 0.059 \\ 
$ \text{MaxS}_1 $ & 0.039 & 0.057 & 0.065 & 0.049 & 0.042 \\ 
$ \text{MaxS}_2 $ & 0.047 & 0.055 & 0.053 & 0.035 & 0.060 \\ 
$ \text{MaxS}_3 $ & 0.038 & 0.047 & 0.050 & 0.034 & 0.062 \\ 
$ \text{MaxS}_4 $ & 0.029 & 0.043 & 0.045 & 0.032 & 0.051 \\ 
$ \text{MaxK}_1 $ & 0.040 & 0.053 & 0.063 & 0.044 & 0.048 \\ 
$ \text{MaxK}_2 $ & 0.022 & 0.038 & 0.042 & 0.038 & 0.045 \\ 
$ \text{MaxK}_3 $ & 0.019 & 0.023 & 0.038 & 0.041 & 0.036 \\ 
$ \text{MaxK}_4 $ & 0.088 & 0.068 & 0.058 & 0.042 & 0.040 \\ 
$ \text{MaxSK}_1 $ & 0.044 & 0.059 & 0.068 & 0.054 & 0.057 \\ 
$ \text{MaxSK}_2 $ & 0.030 & 0.051 & 0.051 & 0.039 & 0.052 \\ 
$ \text{MaxSK}_3 $ & 0.023 & 0.038 & 0.039 & 0.040 & 0.047 \\ 
$ \text{MaxSK}_4 $ & 0.044 & 0.044 & 0.047 & 0.035 & 0.050 \\ \hline 
\end{tabular}
\end{center}
\end{table}

In \autoref{tab:size_p3}, which corresponds to $ p = 3 $, it can be seen that most of the tests have estimated sizes close to the nominal level for $ n = 50 $ and up, while the tests $ \text{MaxK}_2 $ and $ \text{MaxSK}_2 $ require higher sample sizes for their estimated sizes to reach close to the nominal level.

In \autoref{tab:size_p4}, which corresponds to $ p = 4 $, it can be seen that again the majority of the tests have estimated sizes close to the nominal level for $ n = 50 $ and up. However, the tests $ \text{MaxS}_4 $, $ \text{MaxK}_2 $, $ \text{MaxK}_3 $, $ \text{MaxK}_4 $ and $ \text{MaxSK}_3 $ require higher sample sizes for their estimated sizes to be close to the nominal level of 5\%.

\section{Estimated powers for smaller $ n $}\label{sec:powers}
While our testing procedures are asymptotic in nature, it is of interest to see how they perform when the sample size is not large. For this, the simulation experiment of estimated powers presented in section 7.2 of the main paper is carried out here keeping everything else same, but setting the sample size $ n = 50 $. The curves of estimated powers are presented in \autoref{fig:powerplotsn50}.

\begin{figure}[h!]
\centering
\includegraphics[height=5.9cm]{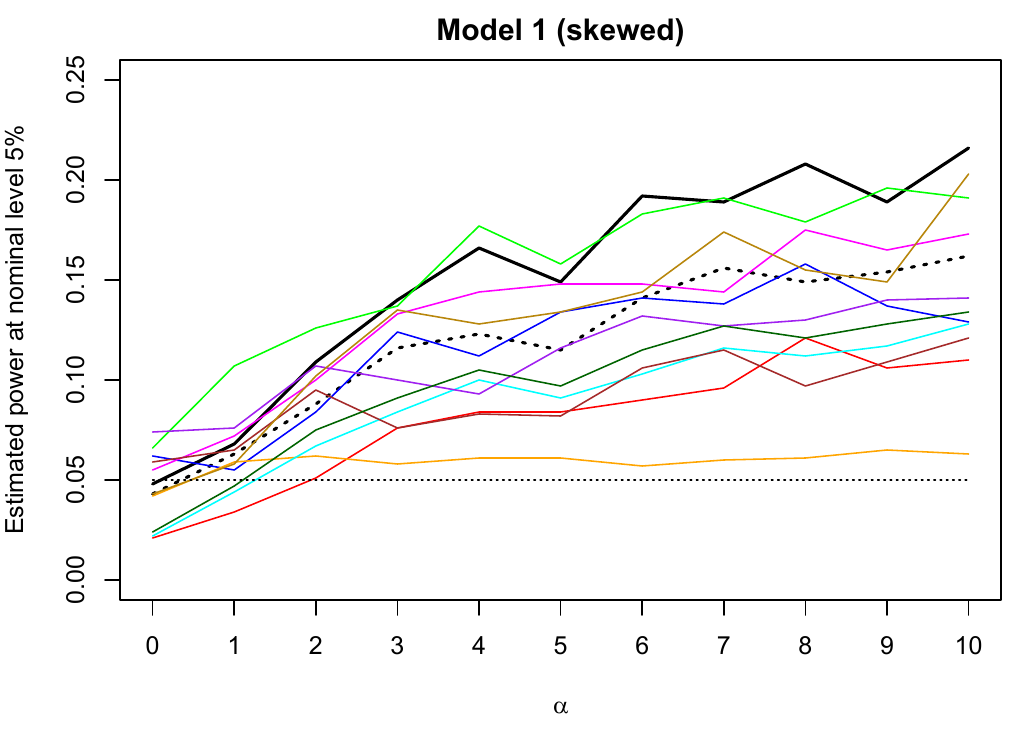}
\includegraphics[height=5.9cm]{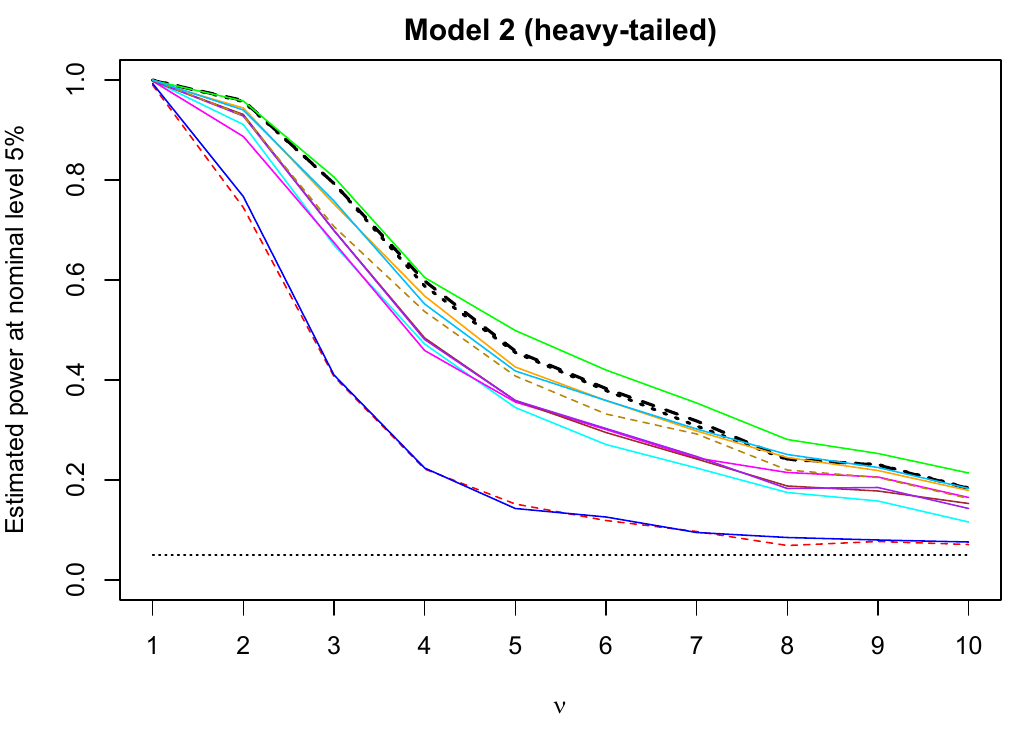}
\includegraphics[height=5.9cm]{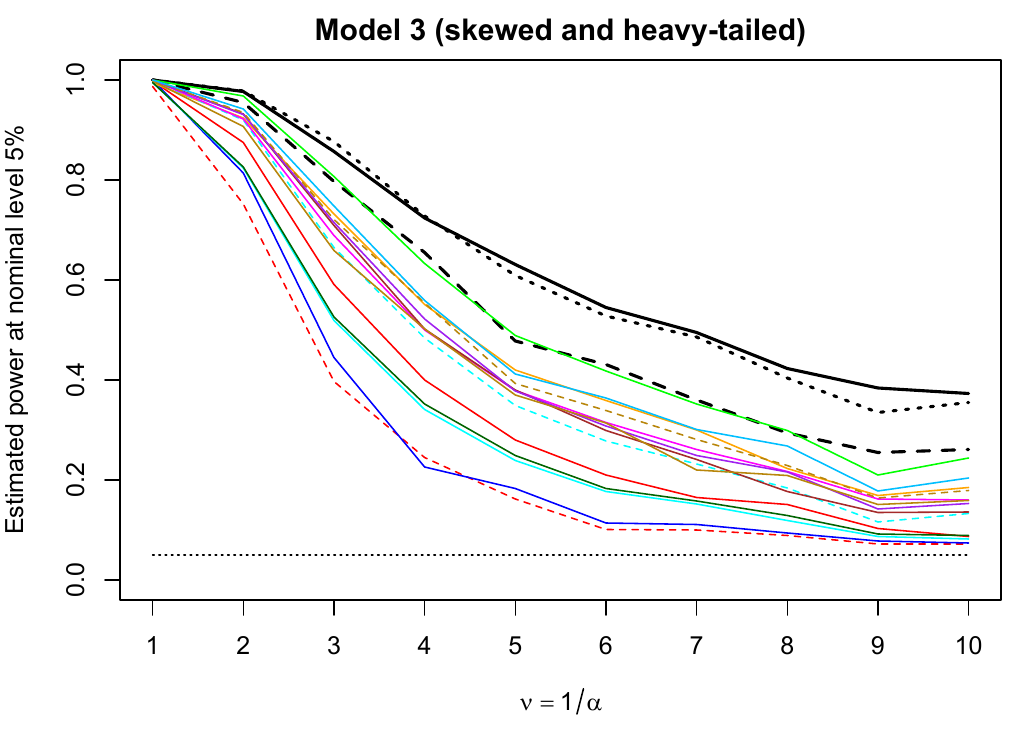}
\includegraphics[width=1\linewidth,trim={0 70 0 230},clip]{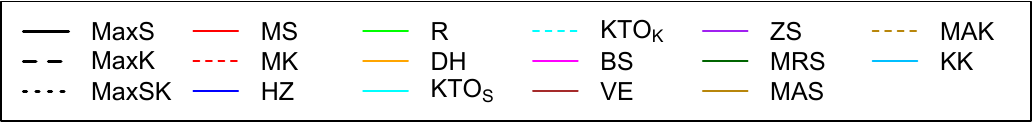}
\caption{Estimated powers of MaxS, MaxK, MaxSK, MS, MK, HZ, R, DH, $ \text{KTO}_\text{S} $, $ \text{KTO}_\text{K} $, BS, VE, ZS, MRS, MAS, MAK and KK tests for 5\% nominal level (horizontal dashed line) in Model 1 (top left), Model 2 (top right) and Model 3 (bottom) for $ n = 50 $, $p=5$, $q=2$ based on 1000 replicates. The horizontal dotted lines near the bottom of the plots correspond to the nominal level of 5\%.}
\label{fig:powerplotsn50}
\end{figure}

It can be seen that the powers of the tests have substantially decreased in \autoref{fig:powerplotsn50} compared to those in Fig. 2 of the main paper, which is expected due to the smaller sample size. The decrease in power is highest in Model 1, where the highest estimated power is not even 0.25. The MaxS test is found to be either the best performing test or close to the best performing test in Model 1, although several tests exhibit higher powers compared to the MaxSK test there. The R test is either the best performing test or has estimated power very close to that of the MaxS test depending on the value of the parameter $ \alpha $ in Model 1. In Model 2, the estimated power of the R test is almost uniformly better than all other tests, although the powers of the MaxK test and the MaxSK test are always very close to the best. In Model 3, the estimated powers of the MaxS test and the MaxSK test are substantially higher than the other tests, including the MaxK test. It is notable that in Fig. 2 of the main paper, the estimated power of the MaxK test was higher than that of the MaxS test in Model 3, which is the opposite of what is observed in \autoref{fig:powerplotsn50}. All these observations indicate that the performance of the tests change with increasing sample sizes, and that the gaps between the performances of the MaxS, MaxK and MaxSK tests and those of the other tests increase as the sample size increases.

\section{Performance of the tests under high correlation}\label{sec:sizes_highcor}
It is of interest to investigate how the performances of the tests are affected when the underlying distribution is such that the pairwise correlation between the components is high. To investigate the effect of this high pairwise correlation on the estimated sizes of the tests, the same setup as in \autoref{sec:sizes} is considered with the exception that here $ \sigma_{ij} = 0.99 + 0.01 \mathbb{I}( i = j ) $. The sizes of the tests are estimated for different values of $ n $ and for the cases $ p = 3, 4 \text{ and } 5 $.

\begin{table}[!h]
\begin{center}
\caption{Estimated sizes of the tests for 5\% nominal level in $ {\cal N}_3( \mathbf{0}, \boldsymbol{\Sigma} ) $ based on 1000 replicates, where $ \boldsymbol{\Sigma} = ( \sigma_{ij} ) $ with $ \sigma_{ij} = 0.99 + 0.01 \mathbb{I}( i = j ) $.}
\label{tab:size_p3c99}
\begin{tabular} {l c c c c c}  
\hline 
Test & $ n = 50 $ & $ n = 100 $ & $ n = 200 $ & $ n = 500 $ & $ n = 1000 $ \\ \hline 
MaxS & 0.044 & 0.044 & 0.037 & 0.051 & 0.050 \\ 
MaxK & 0.029 & 0.041 & 0.046 & 0.051 & 0.039 \\ 
MaxSK & 0.036 & 0.036 & 0.048 & 0.053 & 0.049 \\ 
$ \text{MaxS}_1 $ & 0.040 & 0.047 & 0.043 & 0.052 & 0.048 \\ 
$ \text{MaxS}_2 $ & 0.049 & 0.043 & 0.037 & 0.055 & 0.043 \\ 
$ \text{MaxS}_3 $ & 0.039 & 0.030 & 0.036 & 0.043 & 0.044 \\ 
$ \text{MaxK}_1 $ & 0.023 & 0.029 & 0.041 & 0.039 & 0.041 \\ 
$ \text{MaxK}_2 $ & 0.017 & 0.024 & 0.036 & 0.052 & 0.044 \\ 
$ \text{MaxK}_3 $ & 0.027 & 0.040 & 0.047 & 0.050 & 0.045 \\ 
$ \text{MaxSK}_1 $ & 0.028 & 0.036 & 0.047 & 0.043 & 0.051 \\ 
$ \text{MaxSK}_2 $ & 0.030 & 0.034 & 0.039 & 0.053 & 0.043 \\ 
$ \text{MaxSK}_3 $ & 0.025 & 0.032 & 0.040 & 0.054 & 0.045 \\ \hline 
\end{tabular}
\end{center}
\end{table}

\begin{table}[!h]
\begin{center}
\caption{Estimated sizes of the tests for 5\% nominal level in $ {\cal N}_4( \mathbf{0}, \boldsymbol{\Sigma} ) $ based on 1000 replicates, where $ \boldsymbol{\Sigma} = ( \sigma_{ij} ) $ with $ \sigma_{ij} = 0.99 + 0.01 \mathbb{I}( i = j ) $.}
\label{tab:size_p4c99}
\begin{tabular} {l c c c c c}  
\hline 
Test & $ n = 50 $ & $ n = 100 $ & $ n = 200 $ & $ n = 500 $ & $ n = 1000 $ \\ \hline 
MaxS & 0.047 & 0.043 & 0.044 & 0.054 & 0.051 \\ 
MaxK & 0.025 & 0.043 & 0.045 & 0.041 & 0.053 \\ 
MaxSK & 0.036 & 0.047 & 0.048 & 0.042 & 0.044 \\ 
$ \text{MaxS}_1 $ & 0.049 & 0.042 & 0.049 & 0.062 & 0.047 \\ 
$ \text{MaxS}_2 $ & 0.048 & 0.040 & 0.048 & 0.056 & 0.046 \\ 
$ \text{MaxS}_3 $ & 0.036 & 0.043 & 0.044 & 0.054 & 0.055 \\ 
$ \text{MaxS}_4 $ & 0.028 & 0.033 & 0.047 & 0.052 & 0.048 \\ 
$ \text{MaxK}_1 $ & 0.021 & 0.035 & 0.036 & 0.034 & 0.046 \\ 
$ \text{MaxK}_2 $ & 0.017 & 0.031 & 0.041 & 0.046 & 0.044 \\ 
$ \text{MaxK}_3 $ & 0.007 & 0.036 & 0.032 & 0.059 & 0.048 \\ 
$ \text{MaxK}_4 $ & 0.070 & 0.074 & 0.041 & 0.052 & 0.045 \\ 
$ \text{MaxSK}_1 $ & 0.036 & 0.042 & 0.040 & 0.044 & 0.048 \\ 
$ \text{MaxSK}_2 $ & 0.037 & 0.044 & 0.043 & 0.051 & 0.044 \\ 
$ \text{MaxSK}_3 $ & 0.020 & 0.031 & 0.037 & 0.040 & 0.042 \\ 
$ \text{MaxSK}_4 $ & 0.026 & 0.046 & 0.050 & 0.044 & 0.047 \\ \hline 
\end{tabular}
\end{center}
\end{table}

\begin{table}[!h]
\begin{center}
\caption{Estimated sizes of the tests for 5\% nominal level in $ {\cal N}_5( \mathbf{0}, \boldsymbol{\Sigma} ) $ based on 1000 replicates, where $ \boldsymbol{\Sigma} = ( \sigma_{ij} ) $ with $ \sigma_{ij} = 0.99 + 0.01 \mathbb{I}( i = j ) $.}
\label{tab:size_p5c99}
\begin{tabular} {l c c c c c}  
\hline 
Test & $ n = 50 $ & $ n = 100 $ & $ n = 200 $ & $ n = 500 $ & $ n = 1000 $ \\ \hline 
MaxS & 0.048 & 0.059 & 0.050 & 0.057 & 0.061 \\ 
MaxK & 0.041 & 0.050 & 0.059 & 0.038 & 0.060 \\ 
MaxSK & 0.044 & 0.058 & 0.058 & 0.051 & 0.061 \\ 
$ \text{MaxS}_1 $ & 0.050 & 0.055 & 0.056 & 0.051 & 0.057 \\ 
$ \text{MaxS}_2 $ & 0.056 & 0.056 & 0.054 & 0.052 & 0.056 \\ 
$ \text{MaxS}_3 $ & 0.051 & 0.057 & 0.043 & 0.057 & 0.053 \\ 
$ \text{MaxS}_4 $ & 0.040 & 0.050 & 0.041 & 0.050 & 0.055 \\ 
$ \text{MaxS}_5 $ & 0.025 & 0.036 & 0.040 & 0.042 & 0.058 \\ 
$ \text{MaxK}_1 $ & 0.028 & 0.033 & 0.041 & 0.036 & 0.066 \\ 
$ \text{MaxK}_2 $ & 0.033 & 0.046 & 0.053 & 0.033 & 0.049 \\ 
$ \text{MaxK}_3 $ & 0.017 & 0.033 & 0.035 & 0.038 & 0.049 \\ 
$ \text{MaxK}_4 $ & 0.037 & 0.036 & 0.057 & 0.042 & 0.054 \\ 
$ \text{MaxK}_5 $ & 0.131 & 0.079 & 0.065 & 0.050 & 0.062 \\ 
$ \text{MaxSK}_1 $ & 0.040 & 0.048 & 0.054 & 0.039 & 0.066 \\ 
$ \text{MaxSK}_2 $ & 0.042 & 0.057 & 0.061 & 0.047 & 0.056 \\ 
$ \text{MaxSK}_3 $ & 0.034 & 0.047 & 0.041 & 0.050 & 0.051 \\ 
$ \text{MaxSK}_4 $ & 0.032 & 0.037 & 0.044 & 0.045 & 0.064 \\ 
$ \text{MaxSK}_5 $ & 0.069 & 0.050 & 0.054 & 0.047 & 0.057 \\ \hline 
\end{tabular}
\end{center}
\end{table}

The estimated sizes of the tests for $ p = 3, 4 \text{ and } 5 $ are presented in \autoref{tab:size_p3c99}, \autoref{tab:size_p4c99} and \autoref{tab:size_p5c99}, respectively.
It can be observed that nearly all of the tests, including the three tests MaxS, MaxK and MaxSK, have estimated sizes close to the nominal level for sample size $ n = 100 $ and higher. However, when the sample size is lower, i.e., $ n = 50 $, some deviations of the estimated sizes from the nominal level are observed. These include the estimated sizes of the tests MaxK, $ \text{MaxK}_1 $, $ \text{MaxK}_2 $, $ \text{MaxK}_3 $, $ \text{MaxSK}_1 $ and $ \text{MaxSK}_3 $ in \autoref{tab:size_p3c99}, the tests MaxK, $ \text{MaxS}_4 $, $ \text{MaxK}_1 $, $ \text{MaxK}_2 $, $ \text{MaxK}_3 $, $ \text{MaxK}_4 $, $ \text{MaxSK}_3 $ and $ \text{MaxSK}_4 $ in \autoref{tab:size_p4c99}, and the tests $ \text{MaxS}_5 $, $ \text{MaxK}_1 $, $ \text{MaxK}_3 $ and $ \text{MaxK}_5 $ in \autoref{tab:size_p5c99}. From these observations, it seems that the estimated sizes of the tests of kurtosis are somewhat affected for small sizes when the pairwise correlation in the data is high, but the tests of skewness are not affected noticeably.

Next, the effect of high pairwise correlation on the powers of the tests is investigated by considering the setup described in section 7.2 of the main paper keeping everything else same except setting $ \sigma_{ij} = 0.99 + 0.01 \mathbb{I}( i = j ) $. Two cases of sample size are considered: the first case is $ n = 200 $ as in section 7.2 of the main paper, and the second case is $ n = 50 $. The estimated powers for the first case, where $ n = 200 $, are presented in \autoref{fig:powerplotsn200eqcor99}, and those for the second case, where $ n = 50 $, are presented in \autoref{fig:powerplotsn50eqcor99}.

\begin{figure}[h!]
\centering
\includegraphics[height=5.9cm]{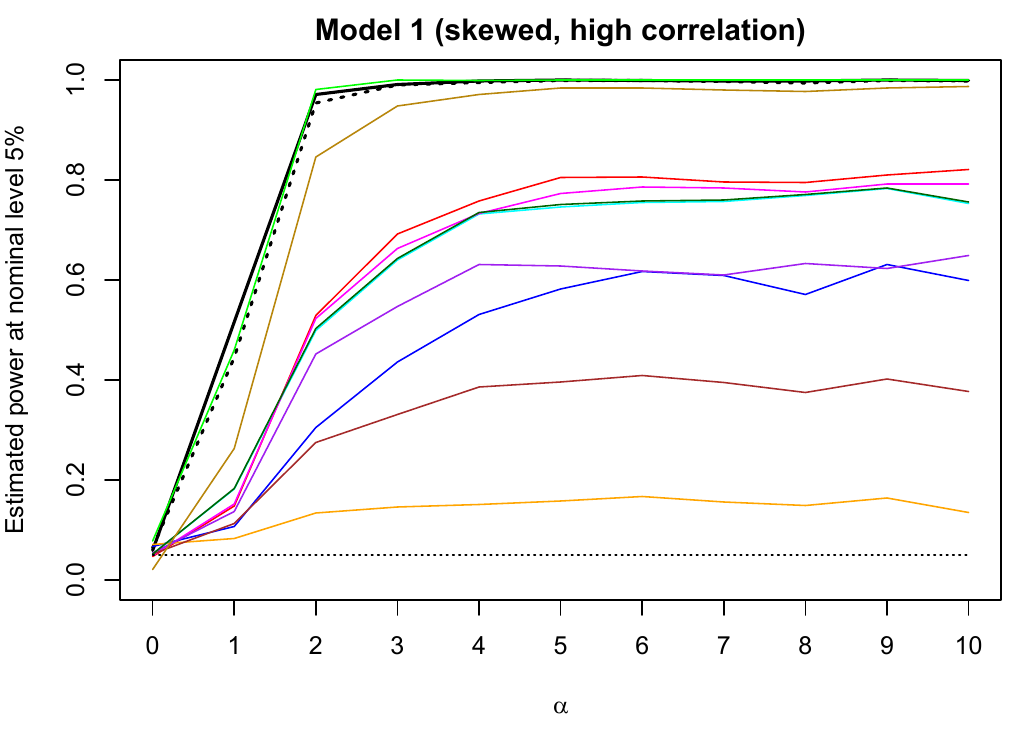}
\includegraphics[height=5.9cm]{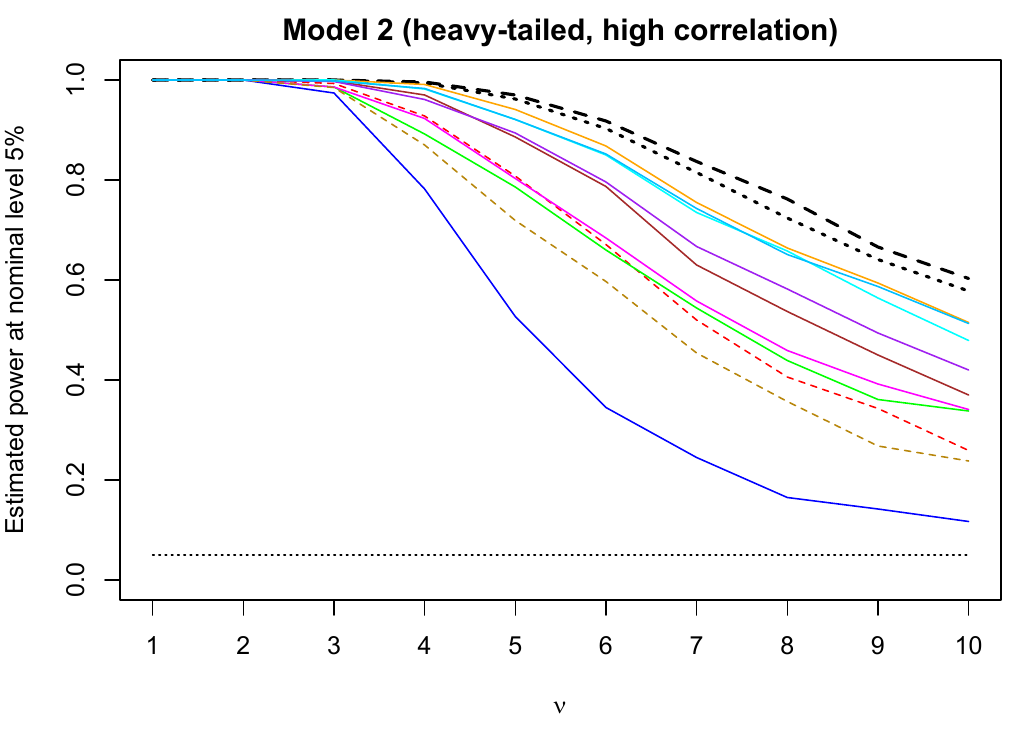}
\includegraphics[height=5.9cm]{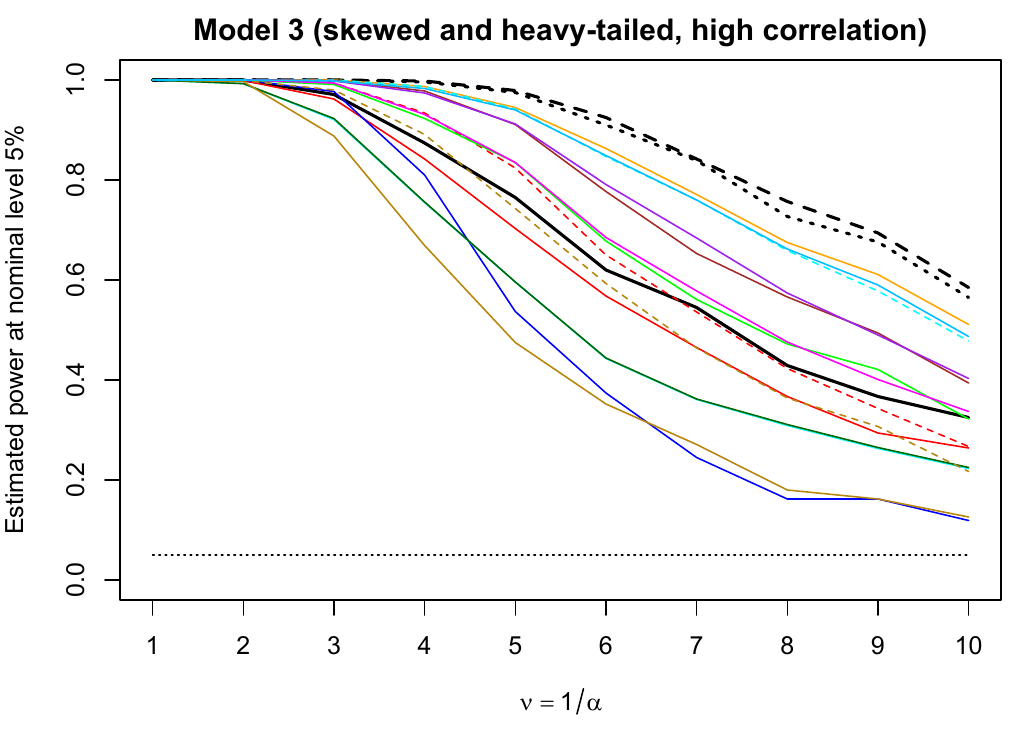}
\includegraphics[width=1\linewidth,trim={0 70 0 230},clip]{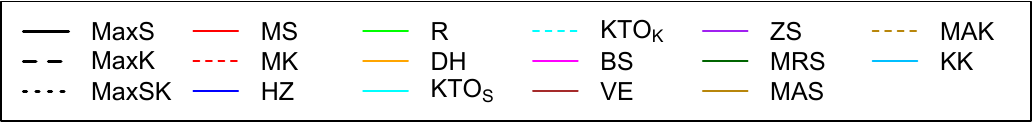}
\caption{Estimated powers of MaxS, MaxK, MaxSK, MS, MK, HZ, R, DH, $ \text{KTO}_\text{S} $, $ \text{KTO}_\text{K} $, BS, VE, ZS, MRS, MAS, MAK and KK tests for 5\% nominal level (horizontal dashed line) in Model 1 (top left), Model 2 (top right) and Model 3 (bottom) for $ n = 200 $, $p=5$, $q=2$ based on 1000 replicates. The horizontal dotted lines near the bottom of the plots correspond to the nominal level of 5\%.}
\label{fig:powerplotsn200eqcor99}
\end{figure}

In \autoref{fig:powerplotsn200eqcor99}, the estimated powers of the tests under high pairwise correlation display some notable differences compared to the estimated powers presented in Fig. 2 of the main paper in the case of the skewed model, namely, Model 1. In the plot for Model 1 with high correlation, it can be seen that the powers of the MaxS, MaxSK, R and MAS tests rise much more sharply compared to those in the corresponding plot in Fig. 2 of the main paper. Though the highest estimated power is displayed by the MaxS test, the powers of the R test and the MaxSK test are almost same with the MaxS test. The MAS test also displays high estimated power but it is lower than that of the MaxS test. In Model 2 and Model 3, the dominance of the MaxK and the MaxSK tests over all other tests do not change from those presented in Fig. 2 of the main paper. There, the estimated powers of the R test are lower than several other tests.

\begin{figure}[h!]
\centering
\includegraphics[height=5.9cm]{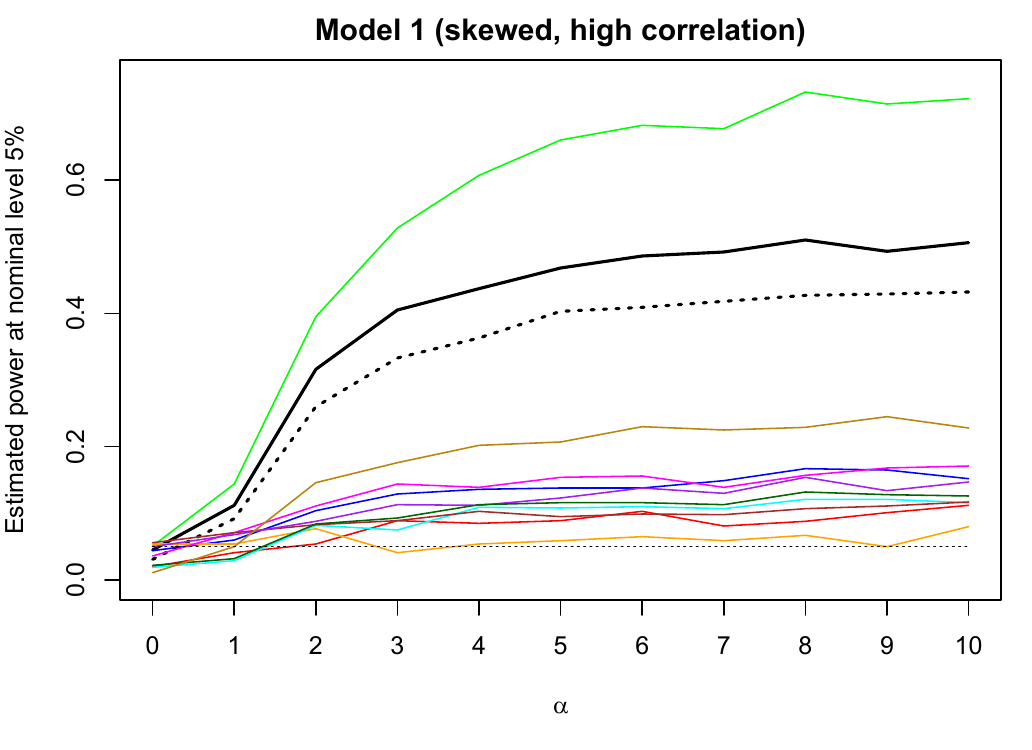}
\includegraphics[height=5.9cm]{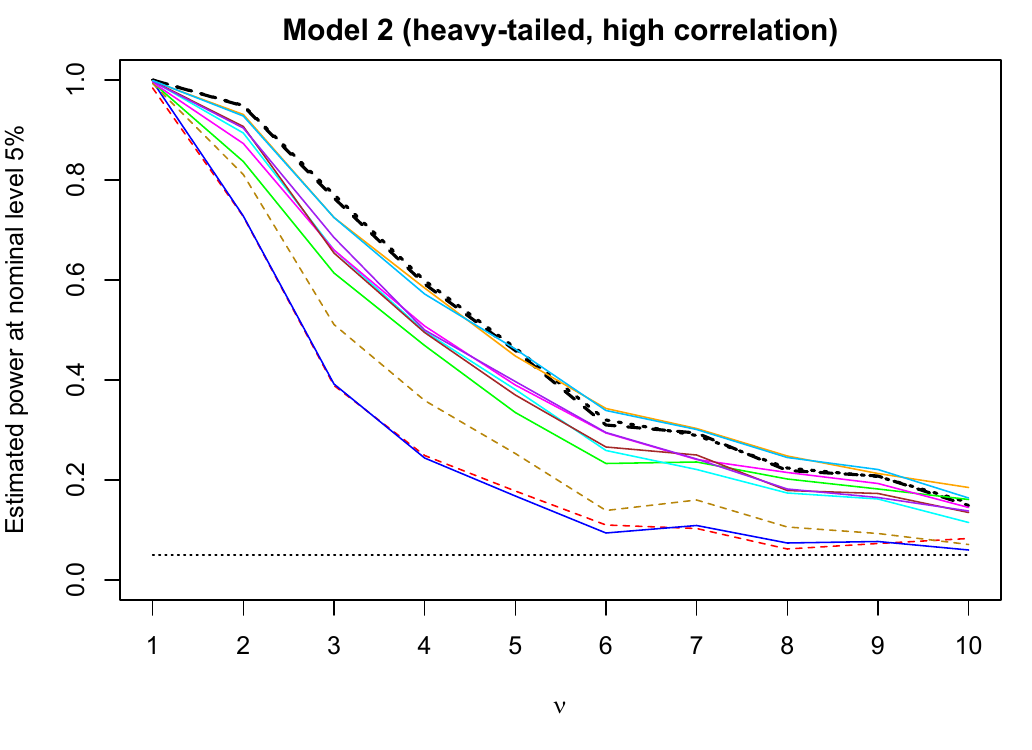}
\includegraphics[height=5.9cm]{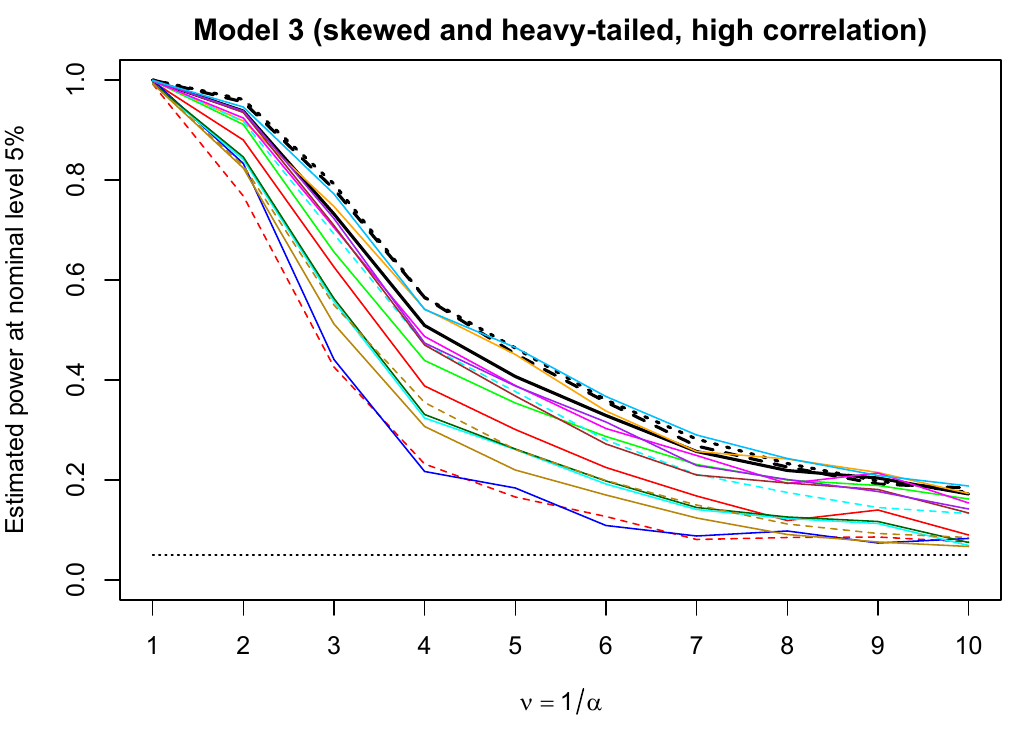}
\includegraphics[width=1\linewidth,trim={0 70 0 230},clip]{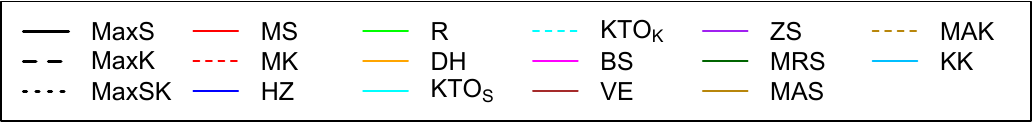}
\caption{Estimated powers of MaxS, MaxK, MaxSK, MS, MK, HZ, R, DH, $ \text{KTO}_\text{S} $, $ \text{KTO}_\text{K} $, BS, VE, ZS, MRS, MAS, MAK and KK tests for 5\% nominal level (horizontal dashed line) in Model 1 (top left), Model 2 (top right) and Model 3 (bottom) for $ n = 50 $, $p=5$, $q=2$ based on 1000 replicates. The horizontal dotted lines near the bottom of the plots correspond to the nominal level of 5\%.}
\label{fig:powerplotsn50eqcor99}
\end{figure}

In \autoref{fig:powerplotsn50eqcor99}, notable effects of the high pairwise correlation are visible on the estimated powers compared to those presented in \autoref{fig:powerplotsn50}.
In the skewed model, which is Model 1, the power of the R test is uniformly and significantly higher that all other tests, including the MaxS test and the MaxSK test, under high pairwise correlation. This is unlike what can be observed in \autoref{fig:powerplotsn50}, where the powers are estimated in a setup with moderate pairwise correlation. However, the performance of the R test deteriorates in Model 2 and Model 3 compared to the other tests. In those two models, the MaxK test and the MaxSK test perform better than the other tests, but the performances of a few other tests are close to them.